\def\BState{\State\hskip-\ALG@thistlm}
\title{Optimal Damping with Hierarchical Adaptive Quadrature  for  Efficient Fourier Pricing of  Multi-Asset Options in L\'evy Models}
\author[1]{Christian Bayer}
\author[2]{Chiheb Ben Hammouda}
\author[3]{Antonis Papapantoleon}
\author[4]{Michael Samet\thanks{michael.samet@kaust.edu.sa}}
\author[4,5]{Ra\'ul Tempone}
\affil[1]{Weierstrass Institute for Applied Analysis and Stochastics (WIAS), Berlin, Germany.}
\affil[2]{Mathematical Institute, Utrecht University, Utrecht, The Netherlands}
\affil[3]{Delft Institute of Applied Mathematics, EEMCS, TU Delft, 2628 Delft, The Netherlands, and Department of Mathematics, SAMPS, NTUA, 15780 Athens, Greece and Institute of Applied and Computational Mathematics, FORTH, 70013 Heraklion, Greece.}
\affil[4]{King Abdullah University of Science and Technology (KAUST), Computer, Electrical and Mathematical Sciences \& Engineering Division (CEMSE), Thuwal, Saudi Arabia.}
\affil[5]{Chair of Mathematics for Uncertainty Quantification, RWTH Aachen University, Aachen, Germany.}
\begin{document}
	\date{}
\maketitle


\begin{abstract}

Efficiently pricing multi-asset options is a challenging problem in quantitative finance. When the characteristic function is available, Fourier-based methods are competitive compared to alternative techniques because the integrand in the frequency space often has a higher regularity than that in the physical space. However, when designing a  numerical quadrature method for most Fourier pricing approaches, two key aspects affecting the numerical complexity should be carefully considered: (i) the choice of  damping parameters that ensure integrability and control the regularity class of the integrand and (ii) the effective treatment of high dimensionality. We propose an efficient numerical method for pricing European multi-asset options based on two complementary ideas to address these challenges. First, we smooth the Fourier integrand via an optimized choice of  damping parameters based on a proposed optimization rule. Second, we employ sparsification and dimension-adaptivity techniques to accelerate the convergence of the quadrature in high dimensions. The extensive numerical study on basket and rainbow options under the multivariate geometric Brownian motion and some L\'evy models demonstrates the advantages of adaptivity and the damping rule on the numerical complexity of quadrature methods. Moreover, for the tested two-asset examples, the proposed approach outperforms the COS method in terms of computational time. Finally, we show significant speed-up compared to the Monte Carlo method for up to six dimensions.

\textbf{Keywords} 
	option pricing,  Fourier methods, damping parameters,  adaptive sparse grid  quadrature,  basket and rainbow options,  multivariate L\'evy models.

\textbf{2010 Mathematics Subject Classification} 65D32, 65T50, 65Y20, 	91B25, 91G20, 91G60
\end{abstract}

\section{Introduction}

Pricing multi-asset options, such as basket and rainbow options, is an interesting and challenging problem in quantitative finance because prices cannot be analytically computed in most cases;   thus, efficient numerical methods are required.  Moreover,	despite the popularity of the Black--Scholes model, where the  stock dynamics  follow the geometric Brownian motion (GBM),   L\'evy models, such as the variance Gamma (VG) \cite{Madan1990TheVG}  and  normal inverse Gaussian (NIG) models \cite{barndorff1997processes},  have shown a better fit to  empirical market behavior \cite{tankov2003financial, schoutens2003levy} by accounting for market jumps in prices, semi-heavy tails, and high leptokurtosis.
	
Under the no-arbitrage assumption, option prices are given as expectations under an (equivalent) martingale measure and approximated using numerical integration methods. In this context, the prevalent numerical method is the Monte Carlo (MC) method \cite{glasserman2004monte}, which has a  convergence rate insensitive to  the input space dimensionality and payoff regularity, except for multilevel MC methods \cite{bayer2020multilevel}, where Lipschitz continuity is necessary to obtain optimal convergence rates. However, the convergence may be very slow, and one may not exploit the available regularity structure to achieve better convergence rates. An alternative stream of research on multi-asset option pricing based on continuous-time Markov chains  approximation \cite{mijatovic2013continuously,xi2019simultaneous,kirkby2020general} has emerged, and was shown to outperform the MC approach for options in two and three dimensions. However, pricing in more than three dimensions still poses a great numerical challenge for these approaches. Another class of methods relies on deterministic quadrature techniques whose performance highly depends on the input space dimension and integrand regularity. Some studies \cite{bayer2020hierarchical,bayer2021numerical} have combined adaptivity,  sparsification techniques and hierarchical representations (Brownian bridge and Richardson extrapolation) with quadrature methods to treat the high dimensionality effectively. Moreover, financial payoffs usually have  low  regularity; therefore, analytic and numerical smoothing techniques were introduced for better convergence  \cite{bayer2018smoothing,bayer2020hierarchical,ben2020hierarchical,bayer2021numerical}.  All aforementioned improvements were performed in the physical space.  

In this work, we propose a novel approach for pricing European multidimensional	basket and rainbow\footnote{Rainbow options \cite{margrabe1978value} are appealing to investors because they allow  the reduction of  risk exposure to the market at a cheap cost by betting more on individual performance among a group of stocks than the overall performance of the portfolio stocks when considering basket options, for instance  see \cite{guillaume2008making}.}  options  under multivariate GBM and L\'evy models.  Compared to the previously mentioned approaches, we recover  the high regularity of the integrand by mapping the problem from the physical space to the frequency space, when the Fourier transforms of  the payoff and  density are well-defined and known explicitly. Moreover, when designing our method, we  effectively treat two key aspects affecting the numerical complexity: (i) the choice of the damping parameters that ensure integrability and  control the regularity class of the integrand  and (ii) the high dimensionality of the integration problem. Based on the extension of the one-dimensional (1D) Fourier  valuation formula  \cite{raible2000levy,lewis2001simple} to the multivariate case,  first, we smooth the Fourier integrand via an optimal choice of the damping parameters based on a proposed  optimization rule. Second, we use adaptive sparse grid  quadrature (ASGQ) based on sparsification and dimension-adaptivity techniques, to accelerate the numerical quadrature convergence in high dimensions.

Fourier-based pricing methods   \cite{carr1999option,raible2000levy,lewis2001simple,duffie2003affine,fang2008novel,lord2008fast,leentvaar2008multi,kwok2012efficient,baschetti2021sinc} map the original  problem to the frequency  space and  obtain  the solution in the physical space using the Fourier inversion theorem. The approximation of the resulting integral is performed numerically using direct integration (DI) methods or the fast Fourier transform (FFT). The common ingredient for these approaches is the explicit knowledge of   the characteristic function (i.e., the Fourier transform of the probability density function) corresponding to the price dynamics.  There are mainly three popular Fourier   valuation approaches. In the first approach, originally proposed by Carr and Madan, see  \cite{carr1999option,Lee04,CarrWu04}, a Fourier transform is applied in the log-strike variable, $k$. Hence, for fixed maturity $T$, the whole curve of option prices, $C(T,\cdot)$, is computed. To ensure the existence of the Fourier transform, one must multiply the pricing function by a damping factor with respect to (w.r.t.)~the strike  parameter. This method is appropriate for 1D problems, however, extending it to the multi-asset option pricing context is difficult. The strike price is not defined for all stocks, whereas the multivariate density depends on all the underlyings. Moreover, the derivations  must be performed  separately for each payoff and stock dynamics.  The second approach \cite{fang2008novel,ruijter2012two,zhang2013efficient}, named as the COS method,  relies on the Fourier cosine series expansion of the density function, and relates the cosine coefficients to the characteristic function. Although the COS method has shown to be  efficient at handling 1D and 2D problems, it is still challenging to generalize this class of methods to the  multidimensional setting for multiple reasons. First, in general, the cosine series coefficients of the payoff function are not known analytically, and hence they need to be recovered numerically by evaluating  high dimensional integrals using    Clenchaw-Curtis quadrature or discrete cosine transform, as suggested in \cite{ruijter2012two}. 
	Second, even though this approach does not introduce damping parameters to ensure integrability,  truncation parameters of the integration domain must be determined.  In  \cite{junike2022precise}, authors showed that there exist cases where the method fails to converge to the correct price if these  parameters are chosen based on the cumulants rule of thumb suggested by the authors in \cite{fang2008novel,ruijter2012two}. To circumvent this issue,  they propose an alternative truncation heuristic for 1D cases but a practical choice in a high-dimensional setting remains a challenging open problem. To avoid determining a-priori truncation range but with a higher cost, the authors of \cite{colldeforns2017two} replaced the Fourier cosine expansion by expressing the density as a finite combination of Shannon wavelet scaling functions, allowing for adaptive estimation of the truncation range. Finally, the number of Fourier cosine series coefficients required for the density expansion grows exponentially with the number of underlyling assets, as pointed out in \cite{chau2019exploration}. Given the characteristic function and Fourier transform of the payoff function,  an alternative third approach \cite{raible2000levy,lewis2001simple,hurd2010fourier,eberlein2010analysis}   uses a highly modular pricing framework. This method separates the underlying stochastic process from the derivative payoff using the Plancherel-Parseval Theorem and uses the generalized inverse Fourier  transform to recover the option price.  In addition, this approach introduces damping parameters w.r.t. the stock prices variables  to ensure integrability, which shifts the integration contour to a parallel line to the real axis in the complex space to avoid singularities. This technique is  easier to extend to the multivariate case compared to the  other  two approaches in \cite{carr1999option,Lee04,CarrWu04} and \cite{fang2008novel,ruijter2012two,zhang2013efficient}.   To the best of our knowledge, when using the  Parseval-based Fourier valuation approach (as  in \cite{raible2000levy,lewis2001simple,hurd2010fourier,eberlein2010analysis}), there is no precise analysis of the effect of the damping parameters on the  convergence speed of quadrature methods or guidance on choosing them to improve the numerical performance, particularly in the multivariate setting. Previous works  have set arbitrary choices for the damping parameter, and only \cite{lord2007optimal,kahl2010fourier} studied the damping parameter selection for the first type Fourier valuation approach (as in  \cite{carr1999option,Lee04,CarrWu04})  in the 1D setting to obtain  robust  integration behavior.  In this work, when pricing basket and rainbow options under the multivariate GBM and L\'evy models, and  based on the extension of the one-dimensional Fourier valuation formula \cite{raible2000levy,lewis2001simple}  to the multivariate case, we demonstrate that the choice of the damping parameters highly affects the speed of convergence  of the numerical quadrature. In addition, motivated by error estimates based on contour integration tools, we propose a general  framework for the  optimal choice of the damping parameters, which can be tailored and extended to various pricing models, resulting in a smoother integrand and improving the efficiency of the numerical quadrature. Based on this proposed  rule, the vector of the optimal damping parameters can be obtained by solving a simple optimization problem. Moreover, we demonstrate the consistent advantage of the optimal damping rule through numerical examples with different dimensions and  parameter constellations.

The numerical evaluation of the resulting inverse Fourier integral can be performed using the FFT algorithm  \cite{carr1999option, crisostomo2018speed}, which    could be faster than DI methods because it exploits periodicities and symmetries. However, it cannot satisfy the requirement for matching the pricing algorithm to the structure of the market data and must be assisted by  interpolation
and extrapolation methods for the smile surface, in contrast to DI methods, which allow for flexible strikes (refer to Chapter 4 of \cite{zhu2009applications} and \cite{lord2007optimal} for further comparisons of FFT and DI). An additional downside of the FFT method is that it has an additional truncation error and requires  the determination of the upper and lower truncation parameters of the integral. This task is nontrivial for multidimensional integrals because the speed of decay to zero of the integrand depends on the damping factors, which are unknown a priori, creating  dependence between the truncation and  damping parameters. In this work, we opt for the DI approach combined with an unbounded quadrature (Gaussian quadrature rule) to evaluate the option price. This approach can be efficiently vectorized,  allowing for a faster calibration procedure. Investigating the optimal choice of the damping and truncation parameters for FFT when pricing multi-asset options remains open for future work.

Through an extensive numerical study on basket and rainbow options under the multivariate GBM,  VG, and NIG models, we demonstrate the advantages of the dimension-adaptive quadrature  and our rule for choosing the damping parameters on the numerical complexity of the  quadrature methods for approximating the Fourier valuation integrals. Moreover, we illustrate cases  where our approach outperforms the COS method. Finally, we show that our approach achieves substantial computational gains over the MC  method for different dimensions and parameter constellations.

Section \ref{sec:Problem Setting and Pricing Framework}  introduces the proposed pricing framework in the Fourier space and  the multivariate valuation formula. In Section \ref{sec:Methodology of our Approach}, we explain our methodology. In Section \ref{damping_section}, we motivate and state our heuristic rule for  choosing of the damping parameters. Moreover, we present   the different hierarchical deterministic quadrature methods used for numerically evaluating  the inverse Fourier integrals of interest  in Section \ref{section_det_quad}. Finally, in Section \ref{sec: num_exp_results}, we report and analyze the obtained results, illustrating the advantages of the proposed approach and highlighting the considerable computational gains achieved over the COS and MC methods.

\section{Problem Setting and Pricing Framework}\label{sec:Problem Setting and Pricing Framework}
Section \ref{sec:Fourier Pricing Valuation Formula for Multi-Asset Options} introduces the general Fourier valuation framework for multi-asset options that we  consider in this work. Then, in Section \ref{sec:Payoffs and Multivariate Models of the Study}, we present specific details on the type of options  and models investigated in this study.
 
 \subsection{Multivariate Fourier Pricing Valuation Formula}\label{sec:Fourier Pricing Valuation Formula for Multi-Asset Options}
We aim to efficiently price European multi-asset options (e.g., basket/rainbow option, \dots) where the  assets dynamics follow   a certain multivariate stochastic  model (e.g., L\'evy model, \dots).    We extend the 1D representation  \cite{lewis2001simple}  to derive the pricing valuation formula in the Fourier space for the multivariate setting. Before  stating the general valuation formula in  proposition \ref{prop:Multivariate Fourier pricing valuation formula} and its proof, we introduce some needed notations, definitions and assumptions.
 	\begin{notation}[Notations and definitions]\label{notation}\

 		\begin{itemize}
 			\item   $\boldsymbol{X}_t:=\left(X_t^1,\dots, X_t^d\right)$ is a $d$-dimensional   vector  of  log-asset prices\footnote{$X^i_t := \log(S^i_t), i = 1, \ldots, d$,  $\{S^i_t \}_{i=1}^d$ are the  prices of the  underlying assets at time $t$.} at time $t$, $0\leq t \leq T$, where $T$ is the maturity time of the option.  The dynamics of $\boldsymbol{X}_t$ follow a  multivariate stochastic model with   parameters denoted by the vector $\boldsymbol{\Theta}_m$.

 			    $\rho_{\boldsymbol{X}_t}(\cdot)$ is the corresponding    risk-neutral conditional transition probability density function.
 			\item For $\mathbf{z} \in \mathbb{C}^d$, $\Phi_{\boldsymbol{X}_T}(\mathbf{z}):= \mathbb{E}_{ \rho_{\boldsymbol{X}_T}}[e^{\mathrm{i} \langle \mathbf{z}, \boldsymbol{X}_T \rangle}]$ denotes the  joint characteristic function of  $\boldsymbol{X}_T$ extended to the complex plane,  $\langle.,.\rangle$ denotes the inner product in $\mathbb{R}^d$ extended to $\mathbb{C}^d$ i.e., for $\mathbf{y}, \mathbf{z} \in \mathbb{C}^d, \; \langle \mathbf{y}, \mathbf{z} \rangle = \sum_{j = 1}^d y_j z_j$.
 			
 			\item For $\mathbf{x} \in \mathbb{R}^d$, $P(\mathbf{x} )$ denotes  the payoff function.  For $\mathbf{z} \in \mathbb{C}^d$,     $\hat{P}(\mathbf{z} )  := \int_{\mathbb{R}^d} e^{-\mathrm{i} \langle \mathbf{z}, \mathbf{x}\rangle} P(\mathbf{x}) d\mathbf{x} $ is  the Fourier transform of   $P(\cdot)$. $P_{\mathbf{R}}(\mathbf{x}) := e^{\langle \mathbf{R}, \mathbf{x} \rangle}P(\mathbf{x}) $ is the  dampened payoff function.
 			 
 			\item  $\boldsymbol{\Theta}_p = (K ,T,r )$ is the vector of payoff  and market parameters, with $K$ being the strike price, and $r$   the deterministic interest rate.
 			\item We denote  by $\mathrm{i}$  the  unit imaginary number,  by $\Re[\cdot]$ and  $ \Im[\cdot]$  the real  and imaginary part of a complex number, respectively.

 			\item $L_{bc}^1(\mathbb{R}^d)$ is the space of bounded and continuous functions in $L^1(\mathbb{R}^d)$.
 			
 			\item $\boldsymbol{I}_d$ denotes the $d \times d$ identity matrix.
 
 		\end{itemize}
 		\end{notation}

 	\begin{assumption}[Assumptions on the payoff]\label{ass:Assumptions on  the payoff}\
 		 	\begin{enumerate}
 			\item The payoff function, $\mathbf{x} \mapsto P(\mathbf{x})$, is continuous $\forall \; \mathbf{x} \in \mathbb{R}^d$. 
 			\item There exists $\mathbf{R} \in \delta_P :=  \{ \mathbf{R} \in \mathbb{R}^d \; |  \; \mathbf{x} \mapsto P_{\mathbf{R}}(\mathbf{x}) \in L_{bc}^1(\mathbb{R}^d), \; \mathbf{u} \mapsto \hat{P}(\mathbf{u} + \mathrm{i} \mathbf{R}) \in L^1(\mathbb{R}^d) \}$, with $\delta_P $  is the strip of regularity (analyticity) of the payoff's Fourier transform.
 		\end{enumerate}
 		\end{assumption} 
 	
	\begin{assumption}[Assumption on  the model and the corresponding  characteristic function]\label{ass:Assumptions on  the distribution}\
	\begin{enumerate}	
				\item   There exists $\mathbf{R} \in \delta_X : = \{ \mathbf{R} \in \mathbb{R}^d \; |  \; \mathbf{u} \mapsto \Phi_{\boldsymbol{X}_T}(\mathbf{u} + \mathrm{i} \mathbf{R}) \; \text{exists and } | \Phi_{\boldsymbol{X}_T}(\mathbf{u} + \mathrm{i} \mathbf{R})| < \infty, \forall \;  \mathbf{u} \in \mathbb{R}^d\}$, with $\delta_X$ is the strip of regularity (analyticity) of the extended characteristic function.  
	\end{enumerate} 	
\end{assumption}

\begin{proposition}[Multivariate Fourier pricing valuation formula]\label{prop:Multivariate Fourier pricing valuation formula} We use Notation \ref{notation} and 
 		suppose  Assumptions \ref{ass:Assumptions on the payoff} and \ref{ass:Assumptions on  the distribution} hold, and   that $  \delta_V = \delta_X \cap \delta_P \neq \emptyset $, then, for $\mathbf{R} \in \delta_V$,  the option value is given by
 		\begin{equation}
 			V\left(\boldsymbol{\Theta}_m, \boldsymbol{\Theta}_p\right)   = (2 \pi)^{-d} e^{-r T} \Re\left[\int_{\mathbb{R}^d} \Phi_{\boldsymbol{X}_T}(\mathbf{u+\mathrm{i} \mathbf{R}}) \widehat{P}(\mathbf{u+\mathrm{i} \mathbf{R}}) d \mathbf{u}\right].
 			\label{QOI}
 		\end{equation}
 		\end{proposition}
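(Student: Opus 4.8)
The plan is to follow the one-dimensional argument of \cite{lewis2001simple}, starting from the risk-neutral pricing identity
\[
V(\boldsymbol{\Theta}_m,\boldsymbol{\Theta}_p) = e^{-rT}\,\mathbb{E}_{\rho_{\boldsymbol{X}_T}}\!\left[P(\boldsymbol{X}_T)\right] = e^{-rT}\int_{\mathbb{R}^d} P(\mathbf{x})\,\rho_{\boldsymbol{X}_T}(\mathbf{x})\,d\mathbf{x},
\]
and rewriting the payoff in Fourier space by means of the damping factor. First I would fix $\mathbf{R}\in\delta_V=\delta_X\cap\delta_P$ and write $P(\mathbf{x}) = e^{-\langle\mathbf{R},\mathbf{x}\rangle}P_{\mathbf{R}}(\mathbf{x})$. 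By Assumption \ref{ass:Assumptions on the payoff}, $P_{\mathbf{R}}\in L_{bc}^1(\mathbb{R}^d)$, and a direct computation from the definitions gives $\widehat{P_{\mathbf{R}}}(\mathbf{u}) = \widehat{P}(\mathbf{u}+\mathrm{i}\mathbf{R})$, which also lies in $L^1(\mathbb{R}^d)$. Since $P_{\mathbf{R}}$ is in addition continuous, the Fourier inversion theorem holds pointwise, which yields
\[
P(\mathbf{x}) = e^{-\langle\mathbf{R},\mathbf{x}\rangle}(2\pi)^{-d}\int_{\mathbb{R}^d} e^{\mathrm{i}\langle\mathbf{u},\mathbf{x}\rangle}\,\widehat{P}(\mathbf{u}+\mathrm{i}\mathbf{R})\,d\mathbf{u} = (2\pi)^{-d}\int_{\mathbb{R}^d} e^{\mathrm{i}\langle\mathbf{u}+\mathrm{i}\mathbf{R},\,\mathbf{x}\rangle}\,\widehat{P}(\mathbf{u}+\mathrm{i}\mathbf{R})\,d\mathbf{u}.
\]

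Next I would substitute this representation of $P$ into the pricing integral and interchange the order of integration to obtain
\[
V = e^{-rT}(2\pi)^{-d}\int_{\mathbb{R}^d}\widehat{P}(\mathbf{u}+\mathrm{i}\mathbf{R})\left(\int_{\mathbb{R}^d} e^{\mathrm{i}\langle\mathbf{u}+\mathrm{i}\mathbf{R},\,\mathbf{x}\rangle}\,\rho_{\boldsymbol{X}_T}(\mathbf{x})\,d\mathbf{x}\right)d\mathbf{u},
\]
and identify the inner integral with $\mathbb{E}_{\rho_{\boldsymbol{X}_T}}[e^{\mathrm{i}\langle\mathbf{u}+\mathrm{i}\mathbf{R},\,\boldsymbol{X}_T\rangle}] = \Phi_{\boldsymbol{X}_T}(\mathbf{u}+\mathrm{i}\mathbf{R})$, which exists and is finite for every $\mathbf{u}\in\mathbb{R}^d$ by Assumption \ref{ass:Assumptions on the distribution}. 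Since $V$ is real by construction (a discounted expectation of a real-valued payoff), taking $\Re[\cdot]$ of the right-hand side gives Equation (\ref{QOI}).

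The only substantial point is the justification of the interchange of integrals, which I would handle via Tonelli's theorem applied to the modulus of the integrand. Using $|e^{\mathrm{i}\langle\mathbf{u}+\mathrm{i}\mathbf{R},\,\mathbf{x}\rangle}| = e^{-\langle\mathbf{R},\mathbf{x}\rangle}$ and the nonnegativity of $\rho_{\boldsymbol{X}_T}$, the double integral of the modulus factorizes as
\[
\left(\int_{\mathbb{R}^d}\left|\widehat{P}(\mathbf{u}+\mathrm{i}\mathbf{R})\right|d\mathbf{u}\right)\left(\int_{\mathbb{R}^d} e^{-\langle\mathbf{R},\mathbf{x}\rangle}\,\rho_{\boldsymbol{X}_T}(\mathbf{x})\,d\mathbf{x}\right),
\]
whose first factor is finite because $\mathbf{R}\in\delta_P$ (Assumption \ref{ass:Assumptions on the payoff}), and whose second factor equals $\Phi_{\boldsymbol{X}_T}(\mathrm{i}\mathbf{R})$ and is finite because $\mathbf{R}\in\delta_X$ (take $\mathbf{u}=\mathbf{0}$ in Assumption \ref{ass:Assumptions on the distribution}). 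The same estimate, together with the boundedness of $P_{\mathbf{R}}$, also shows a priori that $\mathbb{E}_{\rho_{\boldsymbol{X}_T}}[|P(\boldsymbol{X}_T)|] \le \|P_{\mathbf{R}}\|_{\infty}\,\Phi_{\boldsymbol{X}_T}(\mathrm{i}\mathbf{R}) < \infty$, so the pricing expectation is well defined from the outset. Everything else is routine bookkeeping with the chosen Fourier convention and the $(2\pi)^{-d}$ normalization.
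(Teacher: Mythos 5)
Your proposal is correct and follows essentially the same route as the paper: Fourier inversion applied to the dampened payoff, the identity $\widehat{P_{\mathbf{R}}}(\mathbf{u})=\widehat{P}(\mathbf{u}+\mathrm{i}\mathbf{R})$, substitution into the risk-neutral expectation, and an interchange of integrals justified by $\widehat{P}(\cdot+\mathrm{i}\mathbf{R})\in L^1(\mathbb{R}^d)$ together with the finiteness of $\Phi_{\boldsymbol{X}_T}(\mathrm{i}\mathbf{R})$. Your Tonelli factorization of the modulus and the a priori bound $\mathbb{E}[|P(\boldsymbol{X}_T)|]\le \|P_{\mathbf{R}}\|_\infty\,\Phi_{\boldsymbol{X}_T}(\mathrm{i}\mathbf{R})$ spell out the Fubini justification slightly more explicitly than the paper does, but the argument is the same.
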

 	\begin{proof}
 Given Assumption  \ref{ass:Assumptions on the payoff} and using the Fourier inversion theorem (see Chapter 7 in \cite{hormander2015analysis})
 		\begin{equation}
 		\label{eq:inverse generalized Fourier transform}
 		P_{\mathbf{R}}(\mathbf{x}) =  (2 \pi)^{-d} \int_{\mathbb{R}^d } e^{  \mathrm{i} \langle \mathbf{u}, \mathbf{x}\rangle} \widehat{P}_{\mathbf{R}}(\mathbf{u}) d\mathbf{u}, \quad \mathbf{R} \in \delta_{P},\: \mathbf{x} \in \mathbb{R}^d.
 		\end{equation}
 		Morerover, we have that
 		\begin{equation}
 			\label{eq:generalized_fourier_transform}
 			\widehat{P}_{\mathbf{R}}(\mathbf{u}) = \int_{\mathbb{R}^d} e^{-\mathrm{i} \langle \mathbf{u}, \mathbf{x}\rangle} e^{\langle \mathbf{R}, \mathbf{x} \rangle} P(\mathbf{x}) d\mathbf{x} = \int_{\mathbb{R}^d} e^{-\mathrm{i} \langle \mathbf{u} + \mathrm{i} \mathbf{R}, \mathbf{x}\rangle} P(\mathbf{x})  d\mathbf{x} = \widehat{P}(\mathbf{u} + \mathrm{i} \mathbf{R}), \quad \mathbf{u} \in \mathbb{R}^d, \mathbf{R} \in \delta_P,
 		\end{equation}
 		where $\widehat{P}(\mathbf{u} + \mathrm{i} \mathbf{R})$ is sometimes called the generalized Fourier transform \cite{titchmarsh1948introduction} or the Fourier-Laplace transform \cite{hormander2015analysis}, a holomorphic extension of the Fourier transform to horizontal strips (tubes) $\mathbf{z} \in \mathbb{R}^d + \mathrm{i} \delta_P \subset \mathbb{C}^d$, in the complex domain \cite{carlsson2017note}. Using  \eqref{eq:generalized_fourier_transform},  Equation \eqref{eq:inverse generalized Fourier transform} can be written as 
 		\begin{equation}
 			\label{eq:payoff_inverse_transform}
 			P(\mathbf{x}) = \Re \left[ (2 \pi)^{-d} e^{-\langle \mathbf{R}, \mathbf{x} \rangle} \int_{\mathbb{R}^d } e^{  \mathrm{i} \langle \mathbf{u}, \mathbf{x}\rangle} \widehat{P}(\mathbf{u} + \mathrm{i} \mathbf{R}) d\mathbf{u} \right], \quad \mathbf{x} \in \mathbb{R}^d, \mathbf{R} \in \delta_P.
 		\end{equation}
 		Then, using  \eqref{eq:payoff_inverse_transform}, Assumption \ref{ass:Assumptions on  the distribution} and Fubini theorem, we obtain that 
 		\begin{equation*}
 			\begin{aligned}
 				V\left(\boldsymbol{\Theta}_m, \boldsymbol{\Theta}_p\right)  &= e^{-r T} \mathbb{E}_{ \rho_{\boldsymbol{X}_T}}\left[P\left(\boldsymbol{X}_T\right)\right] 
 				 \\
 				&=(2 \pi)^{-d} e^{-r T} \mathbb{E}_{ \rho_{\boldsymbol{X}_T}}\left[\Re\left[e^{-\langle \mathbf{R}, \boldsymbol{X}_T \rangle} \int_{\mathbb{R}^d} e^{ \mathrm{i}\left\langle\mathbf{u}, \boldsymbol{X}_T\right\rangle} \widehat{P}(\mathbf{u} + \mathrm{i} \mathbf{R}) d \mathbf{u}\right]\right], \mathbf{R} \in \delta_P \\
 				&=(2 \pi)^{-d} e^{-r T} \Re\left[\int_{\mathbb{R}^d}\mathbb{E}_{ \rho_{\boldsymbol{X}_T}}\left[ e^{\mathrm{i} \langle \mathbf{u} + \mathrm{i} \mathbf{R}, \boldsymbol{X}_T \rangle } \right] \widehat{P}(\mathbf{u} + \mathrm{i} \mathbf{R}) d \mathbf{u}\right], \mathbf{R} \in \delta_V:= \delta_P \cap \delta_X \\
 				&= (2 \pi)^{-d} e^{-r T} \Re\left[\int_{\mathbb{R}^d} \Phi_{\boldsymbol{X}_T}(\mathbf{u+\mathrm{i} \mathbf{R}})  \widehat{P}(\mathbf{u+\mathrm{i} \mathbf{R}}) d \mathbf{u}\right].
 			\end{aligned}
 		\end{equation*}
 			The application of Fubini's Theorem is justified by  imposing $\mathbf{R} \in \delta_P$ to enforce $\widehat{P}(\mathbf{u} + \mathrm{i} \mathbf{R}) \in L^1(\mathbb{R}^d)$ and  imposing $\mathbf{R} \in \delta_X$  to ensure that $\Phi_{\boldsymbol{X}_T}(\mathrm{i}  \mathbf{R})$ exists and is bounded.
 \end{proof}
 In what follows, from \eqref{QOI}, we define the integrand of interest by
 	\begin{equation}
 		g\left(\mathbf{u} ; \mathbf{R}, \boldsymbol{\Theta}_m, \boldsymbol{\Theta}_p\right) := (2 \pi)^{-d} e^{-r T} \Re [  \Phi_{\boldsymbol{X}_T}(\mathbf{u}+\mathrm{i} \mathbf{R}) \widehat{P}(\mathbf{u}+\mathrm{i} \mathbf{R}) ], \mathbf{u} \in \mathbb{R}^d, \mathbf{R} \in \delta_V
 		\label{g_integrand}
 	\end{equation}
 \begin{remark}[Connection to the valuation formula in \cite{eberlein2010analysis}]
The notation we used for the definition of the Fourier transform and  the payoff dampening is the same as in \cite{hurd2010fourier}, with different set of assumptions. The valuation formula in Theorem 3.2 of \cite{eberlein2010analysis}  can be easily recovered from \eqref{QOI} by considering $- \mathbf{R}$ instead of $\mathbf{R}$, $ \widehat{P}(- \mathbf{z})$ instead of $\widehat{P}(\mathbf{z})$, and using the relation $\Phi_{\boldsymbol{X}_T}(\mathbf{u}) = M_{\boldsymbol{X}_T}(\mathrm{i} \mathbf{u}) $, where $ M_{\boldsymbol{X}_T}(\cdot)$ denotes the moment generating function of $\boldsymbol{X}_T$. 
 \end{remark}
 
 \begin{remark}[Case of discontinuous payoffs]
 	In general, the regularity assumptions on the payoff, such as its continuity, can be compensated by more regularity assumptions on the model. However, in the particular case of European options and the considered Lévy models, the continuity condition in Assumption \ref{ass:Assumptions on  the payoff} can be dropped because the considered processes possess a Lebesgue density.  We refer to \cite{eberlein2010analysis} for more details. 
 \end{remark}
\subsection{Payoffs and Multivariate Asset Models}\label{sec:Payoffs and Multivariate Models of the Study}
\subsubsection{Payoffs and their Fourier Transforms}
In this work, we focus on two specific examples of payoffs, namely (i) basket put\footnote{ $w_i > 0, \; \forall \; i = 1 \ldots, d$. }
 and (ii)  and call on min,  which are respectively given, for $K > 0$, by:
\begin{equation}\label{payoff_scaled}
	(i)	\; P(\boldsymbol{X}_T)= \max \left(K -  \sum_{i=1}^d  w_i  e^{X_T^i}, 0\right); \quad  (ii)\;	P(\boldsymbol{X}_T) = \max \left( \min\left(e^{X_T^1},\ldots,e^{X_T^d}\right) - K  , 0\right).
\end{equation} 
The  Fourier transforms for both payoffs in \eqref{payoff_scaled}
	 are respectively given by  \eqref{eq:payoff_transforms_basket_put}\footnote{To simplify the presentation, we consider the unweighted basket put payoff. The  generalization to the weighted case as presented in Section \ref{sec: num_exp_results} can be done straightforwardly by considering $X_t^i = \log(\frac{S^i_t}{w_i}),i=1,\ldots,d$.}  for $ w_i = 1, i = 1\ldots d$,  and \eqref{eq:payoff_transforms_call_min}, with regularity strips,  $\delta_{P}$,  expressed in Table \ref{table:fourier_payoff_strip_table}. The considered payoffs satisfy  Assumption \ref{ass:Assumptions on  the payoff}, we refer to  \cite{eberlein2010analysis,hubalek2005variance,hurd2010fourier}  for further details on the derivation.

	\begin{align}
		& \hat{P}(\mathbf{z}) =   K^{1-\mathrm{i} \sum_{j=1}^{d} z_{j}}  \frac{\prod_{j=1}^{d} \Gamma\left(- \mathrm{i} z_{j}\right)}{\Gamma\left(-\mathrm{i} \sum_{j=1}^{d} z_{j}+2\right)}, \quad \mathbf{z} \in \mathbb{C}^d,\: \Im[z_j] > 0 \:  \forall j \in \{1 ,\ldots, d\};\label{eq:payoff_transforms_basket_put}\\
		& \hat{P} (\mathbf{z}) =  \frac{  K^{1- \mathrm{i} \sum_{j=1}^{d} z_{j}}  }{\left(\mathrm{i}\left(\sum_{j=1}^{d} z_{j}\right)-1\right) \prod_{j=1}^{d}\left(\mathrm{i} z_{j}\right)}, \quad \mathbf{z} \in \mathbb{C}^d, \: \Im[z_j]< 0 \:  \forall j \in \{1 ,\ldots, d\},\: \sum_{j=1}^{d} \Im[z_j] < -1 \label{eq:payoff_transforms_call_min},
	\end{align}
where  $ \Gamma(z) = \int_{0}^{+ \infty} e^{-t} t^{z-1} dt$, is the complex Gamma function  defined for $z \in \mathbb{C}$, with $\Re[z] > 0$.

		\begin{table}[h!]
		\centering
		\begin{tabular}{ | c| c| }
			\hline
			\textbf{Payoff} &  \small $ \delta_{P}$\\
			\hline
			Basket put &  \small $\{\mathbf{R} \in \mathbb{R}^d,\: R_i > 0\: \forall i \in \{1 ,\ldots, d\}\}$ \\
			\hline
			Call on min &  \small $\{\mathbf{R} \in \mathbb{R}^d,\:  R_i< 0 \: \forall i \in \{1 ,\ldots, d\},\sum_{i=1}^{d} R_i < - 1\}$ \\
			\hline
		\end{tabular}
		\captionsetup{justification=centering,margin=0.5cm}
		\caption{Strip of regularity, $\delta_{P}$, of payoff transforms.}
		\label{table:fourier_payoff_strip_table}
	\end{table}
	\subsubsection{Multivariate Models and the Corresponding Charactersitic Functions}
For the asset dynamics,  in this work we study the three models given by Examples \ref{MGBM_sec}, \ref{VG_sec} and \ref{NIG_sec}.

\begin{example}	[Multivariate Geometric Brownian Motion (GBM)]\label{MGBM_sec}
		 The joint risk-neutral dynamics of the stock prices are modeled as follows:
	\begin{equation}
		S_i(t) = S_{i}(0) \exp \left[\left(r-\frac{\sigma_{i}^{2}}{2} \right) t+  \sigma_i W_i(t)\right], \quad i=1, \ldots, d,
	\end{equation}
	where $\sigma_1, \dots, \sigma_d> 0 $ and $\{W_1(t),\dots, W_d(t) , t \ge 0\}$ are correlated standard Brownian motions with correlation matrix $\mathbf{C} \in \mathbb{R}^{d \times d}$ with components   
	  $(\boldsymbol{C})_{i,j} = \rho_{i,j}$, with  $-1\le \rho_{i, j} \le 1$  denoting the correlation between  $W_i$ and $W_j$. 
	Moreover,  $\boldsymbol{\Sigma} \in \mathbb{R}^{d \times d}$ denotes  the covariance matrix of the log returns, $\{ \log(\frac{S_i(t)}{S_i(0)}) \}_{i=1}^d$, with $\boldsymbol{\Sigma}_{ij}=\rho_{i, j} \sigma_{i} \sigma_{j}$.
	\end{example}
\begin{example}[Multivariate Variance Gamma (VG) \cite{luciano2006multivariate}]
\label{VG_sec}
The joint risk-neutral dynamics of the stock prices are modeled as follows:
\begin{equation}
	S_{i}(t) = S_{i}(0) \exp \left[\left(r+\mu_{VG,i}\right) t+\theta_{i} G(t)+\sigma_{i} \sqrt{G(t)} W_i(t)\right], \quad  i=1, \ldots, d,
\end{equation}
where    $\{W_1(t),\dots, W_d(t) \}$  are independent standard Brownian motions, $\{ G(t) | t \geq 0 \}$  is a common Gamma process 
 with parameters  $( \frac{t}{\nu}, \frac{1}{\nu})$, and independent of all the Brownian motions. Additionally, $\theta_i \in \rset$ and $\sigma_i> 0$,  $1 \le  i \le d$.
  The covariance matrix  $\mathbf{\Sigma} \in \mathbb{R}^{d \times d}$ satisfies $\Sigma_{i,j} = {\sigma_i }^2$ for $i=j$, and 0 otherwise.  Finally,  $\boldsymbol{\mu}_{VG}:=(\mu_{VG,1}, \dots,\mu_{VG,d})$ are the  martingale correction terms that  ensure that  $\{e^{-rt}S_i(t) | t \geq 0\}$ is a martingale and are given by 
\begin{equation}
	\mu_{VG,i}=\frac{1}{\nu} \log \left(1-\frac{1}{2} \sigma_{i}^{2} \nu-\theta_{i} \nu\right), \quad  i=1, \ldots, d.
\end{equation}
\end{example}
\begin{example}[Multivariate Normal Inverse Gaussian (NIG) \cite{barndorff1997normal,barndorff1977exponentially}]
	\label{NIG_sec}
	 The joint risk-neutral dynamics of the stock prices are modeled as follows:
	\begin{equation}
		S_{i}(t) = S_{i}(0) \exp \left\{\left(r+\mu_{NIG,i}\right) t+\beta_{i} IG(t)+ \sqrt{IG(t)}  W_{i}(t)\right\}, \quad i=1, \ldots, d,
	\end{equation}
	where $\{W_1(t),\dots, W_d(t)\}$  are independent  standard Brownian motions, $\{ IG(t) | t \geq 0 \}$  is a common inverse Gaussian process 
 with parameters  $(\delta^2 t^2, \alpha^{2}-\boldsymbol{\beta}^{\mathrm{T}} \boldsymbol{\Delta} \boldsymbol{\beta})$, and 	independent of all  Brownian motions.  Additionally, $\alpha \in \mathbb{R}_{+}$,  $\boldsymbol{\beta} \in \mathbb{R}^{d}$,  $\alpha^2 >  \boldsymbol{\beta}^{\mathrm{T}} \boldsymbol{\Delta} \boldsymbol{\beta} $, $\delta > 0$,  and 	$ \boldsymbol{\Delta} \in \mathbb{R}^{d \times d}$ is a symmetric  positive definite matrix with a unit determinant.   $\{\mu_{NIG,i}\}_{i=1}^{d}$ are the  martingale correction terms that  ensure  that  $\{e^{-rt}S_i(t) | t \geq 0\}$ is a martingale, given by 
	\begin{equation}
		\mu_{NIG,i}=- \delta \left( \sqrt{\alpha^2 - \beta_i^2} - \sqrt{\alpha^2 - (\beta_i + 1)^2}\right), \quad  i=1, \ldots, d.
	\end{equation} 
	\end{example}
	 For each model in Examples \ref{MGBM_sec}, \ref{VG_sec}, and \ref{NIG_sec}, we provide   in Table \ref{table:chf_table}  the expression of the  characteristic function, with  regularity strips,  $\delta_X$,  expressed in Table \ref{table:strip_table}. The characteristic functions of the studied models fulfill Assumption \ref{ass:Assumptions on  the distribution}. We refer to \cite{eberlein2010analysis}   for further details. 

		\begin{table}[h]
			\centering
			\begin{tabular}{| p{1.2cm} | p{14.8cm} |}
				\hline   \textbf{Model} &  $\Phi_{\boldsymbol{X}_T}(\mathbf{z}) $ \\
				\hline
				\textbf{GBM} &   \small $ \exp \left(\mathrm{i}\left\langle \mathbf{z} ,\boldsymbol{X}_0\right\rangle\right) \times  \exp \left(\mathrm{i} \langle\mathbf{z}, r \mathbf{1}_{\mathbb{R}^d}-\frac{1}{2} \operatorname{diag}(\boldsymbol{\Sigma}) \rangle T -\frac{T}{2} \langle \mathbf{z},\boldsymbol{\Sigma} \mathbf{z} \rangle\right),\quad \mathbf{z} \in \mathbb{C}^d, \: \Im[\mathbf{z}] \in \delta_X$  \\ 
				\hline
				\textbf{VG} & \small $ \exp \left(\mathrm{i}\left\langle \mathbf{z} ,\boldsymbol{X}_0\right\rangle\right) \times \exp (\mathrm{i} \langle  \mathbf{z}, r \mathbf{1}_{\mathbb{R}^d}+\boldsymbol{\mu}_{VG}\rangle T)\left(1-\mathrm{i} \nu\langle \boldsymbol{\theta}, \mathbf{z}\rangle+\frac{1}{2} \nu\langle\mathbf{z}, \boldsymbol{\Sigma} \mathbf{z}\rangle\right)^{-T / \nu},\quad \mathbf{z} \in \mathbb{C}^d, \: \Im[\mathbf{z}] \in \delta_X$\\
				\hline
				\textbf{NIG}  & \small $ \exp \left(\mathrm{i}\left\langle \mathbf{z} ,\boldsymbol{X}_0\right\rangle\right) \times \exp \left(\mathrm{i} \langle \mathbf{z}, r \mathbf{1}_{\mathbb{R}^d}+\boldsymbol{\mu}_{NIG}\rangle T+\delta T\left(\sqrt{\alpha^{2}-\langle\boldsymbol{\beta}, \boldsymbol{\Delta} \boldsymbol{\beta}\rangle}-\sqrt{\alpha^{2}-\langle\boldsymbol{\beta}+ \mathrm{i} \mathbf{z}, \boldsymbol{\Delta}(\boldsymbol{\beta}+ \mathrm{i} \mathbf{z})\rangle}\right)\right),\newline\quad \mathbf{z} \in \mathbb{C}^d, \: \Im[\mathbf{z}] \in \delta_X$    \\
				\hline
			\end{tabular}
			\caption{The expression of  the characteristic function, $\Phi_{\boldsymbol{X}_T}(\cdot)$,  for the different pricing models. $\mathbf{1}_{\mathbb{R}^d} $ is the $d$-dimensional unit vector.}
			\label{table:chf_table}
		\end{table}
	\begin{table}[h]
		\centering
		\begin{tabular}{| p{1.7cm} | p{8cm}  |}
			\hline   \textbf{Model} &  $\delta_X$  \\
			\hline
			\textbf{GBM} &  \small $\mathbb{R}^d $  \\ 
			\hline 
			\textbf{VG} & \small $\{\mathbf{R} \in \mathbb{R}^d, \left(1+ \nu\langle\boldsymbol{\theta}, \mathbf{R}\rangle -\frac{1}{2} \nu\langle\mathbf{R}, \boldsymbol{\Sigma} \mathbf{R}\rangle\right) > 0\}$  \\
			\hline
			\textbf{NIG}   & \small $\{\mathbf{R} \in \mathbb{R}^d, \left(\alpha^2 - \langle (\boldsymbol{\beta} -\mathbf{R}) , \boldsymbol{\Delta}( \boldsymbol{\beta} - \mathbf{R} )\rangle \right)>0\}$  \\
			\hline
		\end{tabular}
		\caption{Strip of analyticity,  $\delta_X$, of the characteristic functions  for the different pricing models.}
		\label{table:strip_table}
	\end{table}
\begin{remark}[About the strip of regularity] Compared to the 1D case, in the multivariate setting, the choice of the vector of  damping parameters  $\mathbf{R}$, which satisfies the analyticity condition  in Table \ref{table:strip_table}, is nontrivial requiring numerical approximations. Moreover, to obtain more intuition,  for the multivariate NIG model with $\boldsymbol{\Delta} = \boldsymbol{I_d}$, the strip of regularity $\delta_X^{NIG}$ is  an open ball centered at $\boldsymbol{\beta}$ with radius $\alpha$. This fact further complicates the arbitrary choice for damping parameters when the integrand is anisotropic because we  must first  identify   the spherical boundary to determine the admissible combinations of values for the damping parameters enforcing the integrability.   

\end{remark} 
\begin{remark}[Efficient vectorized implementation for model calibration]
It is often more convenient to work with the scaled versions of the payoff (e.g $P(\boldsymbol{X}_T)= \max \left(K -   \sum_{i=1}^d 
 	w_i e^{X_T^i}, 0\right) = K \max \left(1 -   \sum_{i=1}^d e^{{X_T^\prime}^i}, 0\right), {X^{\prime}_T}^i = \log(\frac{S_0^i}{w_i K})$), 
 	 so the strike variable, $K$, is  taken out of the integral in \eqref{QOI}. Moreover, the considered models are stochastic processes with independent increments, which allows us to factorize the  multivariate characteristic function in the following way,  $\Phi_{\boldsymbol{X}_T}(\mathbf{z})=\mathrm{e}^{\mathrm{i}\left\langle \mathbf{z} ,\boldsymbol{X}_0\right\rangle}\phi_{\boldsymbol{X}_T}(\mathbf{z})$ for $\mathbf{z} \in \mathbb{C}^d$, such that $\phi_{\boldsymbol{X}_T}(\mathbf{z})$ is independent of $\boldsymbol{X}_0$.

The advantage in eliminating the dependence of the characteristic function and the payoff from the strike-dependent terms is that we can now evaluate them once in the Fourier domain for multiple values of strike, $K$. This
 allows for an efficient vectorized implementation that is particularly useful for practitioners interested in model calibration. 
\end{remark}
\section{Methodology}\label{sec:Methodology of our Approach}
\subsection{Motivation and Characterization of the Optimal Damping Rule}
\label{damping_section}
This section aims to motivate  and  propose a  rule for the optimal choice of the damping parameters $\mathbf{R}$ that can  accelerate the convergence of the numerical quadrature in the Fourier space  when approximating  \eqref{QOI} for pricing    multi-asset options under the considered pricing models for various parameters.	 The main idea is to establish a connection between   the damping parameter values, integrand properties,  and    quadrature error.
\subsubsection{Motivation of the Damping Rule}
Before considering  the integral of interest \eqref{QOI}, we provide the general motivation for  the rule   through a  simple 1D integration example for a real-valued
function $f$ w.r.t.~a  weight function $\lambda(\cdot)$  over the  support interval $[a, b]$ (finite, half-infinite, or doubly infinite interval):  
\begin{equation}\label{eq:simple_integration_problem}
	I[f]:=\int_{a}^{b}  f(x) \lambda(x) dx \approx \sum_{k=1}^{N} w_k f(x_k):=Q_N[f],
\end{equation}
where the quadrature estimator, $Q_N[f]$ is characterized by the  nodes $\{x_k\}_{k=1}^N$ which are  the roots of the appropriate orthogonal polynomial, $\pi_k(x)$, and $\{w_k\}_{k=1}^N$ are the appropriate quadrature weights. Moreover, $	\mathcal{E}_{Q_N}[f]$ denotes the quadrature error (remainder),  defined as $\mathcal{E}_{Q_N}[f]:= I[f]-Q_N[f]$.  

The analysis of the quadrature error can be performed through two  representations: the first  relies on estimates based on high-order derivatives for  a smooth function $f$ \cite{gautschi2004orthogonal,davis2007methods,xiang2012asymptotics}. These error representations are  of limited practical value  because  high-order derivatives are usually challenging to estimate and control. 
For this reason,  to derive our rule, we opt for the second form of quadrature error representation, valid for functions that can be extended holomorphically into 	the complex plane, which corresponds to the case in \eqref{g_integrand}. 

Several approaches exist for estimating the error $\mathcal{E}_{Q_N}[f]$ when $f$ is holomorphic: (i) methods of contour integration \cite{takahasi1971estimation,donaldson1972unified}, (ii) methods based  on Hilbert space norm estimates \cite{davis1954estimation,donaldson1973estimates} which consider $	\mathcal{E}_{Q_N}$ as a linear functional on $f$,  and (iii) methods based on approximation theory \cite{babuvska2007stochastic,trefethen2008gauss}. Independent of  the approach, the results are often  comparable because the error bounds involve the supremum norm of $f$.

We focus on error estimates based on contour integration tools to showcase these error bounds. This approach uses Cauchy’s theorem in the theory of complex variables to  express the value of an analytic function at some point $z$ by means of a contour integral (Cauchy integral) extended over a simple closed curve  (or open arc) in the complex plane encircling the point $z$.  
\begin{theorem}\label{thm:remainder_integration_estimates}
Assuming that the function $f$ can be  extended  analytically into a sizable region of the complex plane, containing the interval $[a, b]$ with no singularities. Then,	the error integral in the approximation \eqref{eq:simple_integration_problem} can be expressed as
	\begin{equation}\label{eq:integ_error_contour}
	\mathcal{E}_{Q_N}[f]=\frac{1}{2 \pi \mathrm{i} } \oint_{\mathcal{C}} K_N(z) f(z) dz,
	\end{equation}
	where 
	\begin{equation}\label{eq:def Psi}
		K_N(z) = \frac{H_N(z)}{\pi_N(z)},  \quad H_N(z)=\int_{a}^{b} \lambda(x) \frac{\pi_N(z)}{z-x} dx,
	\end{equation}
and  $\mathcal{C}$ is a contour\footnote{Two choices of $\mathcal{C}$  are most frequently made: $\mathcal{C}=\mathcal{C}_r$, the circle $|z|= r, r > 1$,  and $C = \mathcal{C}_\rho$, the ellipse with foci at $a$ and $b$, where the sum of its semiaxes is equal to $\rho, \rho > 1$.  Circles  can only be used if the analyticity domain is sufficiently large, and ellipses  have the advantage of shrinking to the interval
	$[ a, b]$ when $\rho \rightarrow 1$, making them suitable for dealing with functions that  are analytic on the segment $[a, b]$.} containing the  interval $[a, b]$  within which $f(z)$ has no singularities.
\end{theorem}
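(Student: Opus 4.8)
The plan is to realize the quadrature error as a linear functional acting on $f$, and then to represent that functional as a contour integral via Cauchy's formula. First I would write the quadrature rule $Q_N$ as the integral of the (Hermite or Lagrange) interpolating polynomial $p_{N-1}$ of $f$ at the nodes $\{x_k\}$ against the weight $\lambda$: since $Q_N$ is the Gauss-type rule with nodes at the roots of $\pi_N$, exactness on polynomials of degree $\leq N-1$ gives $Q_N[f] = \int_a^b p_{N-1}(x)\lambda(x)\,dx$, hence
\begin{equation*}
\mathcal{E}_{Q_N}[f] = \int_a^b \bigl(f(x) - p_{N-1}(x)\bigr)\lambda(x)\,dx.
\end{equation*}
The key step is then to insert the Hermite interpolation remainder in its contour-integral form: for $x \in [a,b]$ and $z$ on a contour $\mathcal{C}$ enclosing $[a,b]$ inside the analyticity region,
\begin{equation*}
f(x) - p_{N-1}(x) = \frac{1}{2\pi\mathrm{i}}\oint_{\mathcal{C}} \frac{\pi_N(z)}{\pi_N(x)}\,\frac{f(z)}{z-x}\,dz,
\end{equation*}
which follows from Cauchy's integral formula applied to $f(z)/\pi_N(z)$ together with the partial-fraction/residue bookkeeping at the simple poles $z = x_k$ (these residues reconstruct exactly the interpolant $p_{N-1}(x)$, leaving $f(x)$ from the pole at $z=x$).

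Next I would substitute this representation into the expression for $\mathcal{E}_{Q_N}[f]$ and interchange the order of integration (justified because $f$ is continuous on the compact contour $\mathcal{C}$, $\lambda$ is integrable on $[a,b]$, $\pi_N$ has no zeros on $\mathcal{C}$, and $\operatorname{dist}(\mathcal{C},[a,b])>0$, so the double integral is absolutely convergent by Fubini):
\begin{equation*}
\mathcal{E}_{Q_N}[f] = \frac{1}{2\pi\mathrm{i}}\oint_{\mathcal{C}} \left(\int_a^b \frac{\lambda(x)}{\pi_N(x)(z-x)}\,dx\right)\pi_N(z)\,f(z)\,dz.
\end{equation*}
Recognizing that $\int_a^b \frac{\lambda(x)}{\pi_N(x)(z-x)}\,dx \cdot \pi_N(z) = \frac{1}{\pi_N(z)}\int_a^b \lambda(x)\frac{\pi_N(z)^2}{\pi_N(x)(z-x)}\,dx$ is not yet quite in the stated form, I would instead use the standard manipulation that exploits $\pi_N(z)-\pi_N(x)$ being divisible by $z-x$: writing $\frac{1}{z-x} = \frac{1}{\pi_N(z)}\bigl(\frac{\pi_N(z)-\pi_N(x)}{z-x} + \frac{\pi_N(x)}{z-x}\bigr)$ and noting the first term is a polynomial in $x$ of degree $N-1$ whose $\lambda$-integral against $\pi_N(x)^{-1}$ combines with the Gauss weights to vanish in the error, one is left precisely with
\begin{equation*}
\mathcal{E}_{Q_N}[f] = \frac{1}{2\pi\mathrm{i}}\oint_{\mathcal{C}} \frac{H_N(z)}{\pi_N(z)}\,f(z)\,dz, \qquad H_N(z) = \int_a^b \lambda(x)\,\frac{\pi_N(z)}{z-x}\,dx,
\end{equation*}
which is \eqref{eq:integ_error_contour}–\eqref{eq:def Psi}. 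Here I am using the classical identification of $H_N$ as the function of the second kind associated with $\pi_N$ and the weight $\lambda$.

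The main obstacle I expect is the careful bookkeeping in the second-kind-function manipulation: one must verify that the polynomial part split off from $1/(z-x)$ is annihilated by the error functional $\mathcal{E}_{Q_N}$ (this is exactly where the degree of exactness $2N-1$, or at least $N-1$, of the Gauss rule enters), and that the residue computation in the Hermite remainder formula correctly accounts for double poles when nodes coincide with the evaluation point. A secondary technical point is stating precisely what "sizable region" means: the contour $\mathcal{C}$ must lie in the domain of analyticity of $f$ and strictly enclose $[a,b]$, so that $\pi_N$ is zero-free on and outside $\mathcal{C}$ up to the poles being captured — I would state this as a hypothesis rather than belabor it, consistent with the informal phrasing in the theorem. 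Everything else (Fubini, Cauchy's formula) is routine given these hypotheses.
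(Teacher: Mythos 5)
The paper does not actually prove this theorem --- its ``proof'' is a citation to Donaldson--Elliott and Gautschi --- and your sketch follows the same classical architecture those references use: write the error as the $\lambda$-integral of the interpolation remainder, represent the remainder by a contour integral, apply Fubini, and identify the kernel as a function of the second kind divided by $\pi_N$. The architecture is right, but there is a concrete error at the key step: the Hermite/Cauchy remainder formula is written upside down. The correct formula is
\[
f(x)-p_{N-1}(x)=\frac{\pi_N(x)}{2\pi\mathrm{i}}\oint_{\mathcal{C}}\frac{f(z)}{\pi_N(z)\,(z-x)}\,dz,
\]
with $\pi_N(x)$ in the numerator (so the remainder vanishes at the nodes) and $\pi_N(z)$ in the denominator (so the residues at the zeros of $\pi_N$ inside $\mathcal{C}$ reconstruct the interpolant $p_{N-1}$). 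Your version $\frac{\pi_N(z)}{\pi_N(x)}\frac{f(z)}{z-x}$ blows up at the nodes, and after Fubini it produces the inner integral $\int_a^b\frac{\lambda(x)}{\pi_N(x)(z-x)}\,dx$, which is divergent because the zeros of $\pi_N$ lie inside the integration interval. The subsequent attempt to repair this via the divisibility of $\pi_N(z)-\pi_N(x)$ by $z-x$ does not connect coherently to that starting point; that manipulation belongs to the \emph{other} standard route, in which one applies the error functional directly to Cauchy's formula $f(x)=\frac{1}{2\pi\mathrm{i}}\oint_{\mathcal{C}}\frac{f(z)}{z-x}\,dz$ and evaluates $\mathcal{E}_{Q_N}\bigl[(z-\cdot)^{-1}\bigr]$, using exactness on polynomials of degree $\le N-1$ to kill the polynomial part. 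Either route works on its own; as written you have half of each.

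With the corrected remainder formula the Fubini step lands immediately on
\[
\mathcal{E}_{Q_N}[f]=\frac{1}{2\pi\mathrm{i}}\oint_{\mathcal{C}}\frac{1}{\pi_N(z)}\left(\int_a^b\lambda(x)\,\frac{\pi_N(x)}{z-x}\,dx\right)f(z)\,dz,
\]
i.e.\ the kernel is the function of the second kind with $\pi_N(x)$ --- evaluated at the integration variable --- under the integral sign. Note that \eqref{eq:def Psi} as printed has $\pi_N(z)$ there instead; taken literally, $\pi_N(z)$ factors out and gives $K_N(z)=\int_a^b\frac{\lambda(x)}{z-x}\,dx$, independent of $N$, which cannot be the error kernel of an $N$-point rule. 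You reproduced that formula verbatim rather than the one your own (corrected) derivation yields; you should flag the discrepancy rather than inherit it. The remaining ingredients of your plan --- the interpolatory property $Q_N[f]=\int_a^b p_{N-1}(x)\lambda(x)\,dx$, and the justification of Fubini from $\operatorname{dist}(\mathcal{C},[a,b])>0$ and $\pi_N$ being zero-free on $\mathcal{C}$ --- are correct and are exactly what the cited references use.
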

\begin{proof}
We refer to  \cite{donaldson1972unified,gautschi2004orthogonal} 	for a proof of Theorem \ref{thm:remainder_integration_estimates}.
	\end{proof}
 In the finite case, the contour $\mathcal{C}$ is closed and \eqref{eq:def Psi} represents an analytic function in the connected domain $\mathbb{C}\setminus [a, b]$ while  we may take $\mathcal{C}$ to lie on the upper and lower edges of the real axis in the infinite case for large $|x|$. Discussions on choosing adequate contours are found in  \cite{elliott1970uniform,donaldson1972unified,donaldson1973estimates}.  Moreover, precise estimates of $H_N(z)$ were derived in  \cite{donaldson1972unified,elliott1974asymptotic}.
 
As $f(\cdot)$  has no singularities within  $\mathcal{C}$,   using Theorem \ref{thm:remainder_integration_estimates}, we obtain  
\begin{equation}\label{eq:error_bound_estimate}
	|	\mathcal{E}_{Q_N}[f]| \le   \frac{1}{2 \pi }  \;  \underset{z \in \mathcal{C}}{\sup}  | f(z)| \oint_{\mathcal{C}} |K_N(z)| |dz|,
\end{equation}
where  the quantity $\oint_{\mathcal{C}} |K_N(z) | |dz|$ depends only on the quadrature rule.
 We expect that when the size of the contour  increases, $\oint_{\mathcal{C}} |K_N(z)| |dz|$  decreases,  whereas $ \underset{z \in \mathcal{C}}{\sup}  | f(z)| $ increases  by the maximum modulus theorem. The optimal choice of the contour  $\mathcal{C}$  is the one that minimizes the right-handside of \eqref{eq:error_bound_estimate}.

Extending the error bound  \eqref{eq:error_bound_estimate} to the multidimensional setting can be performed straightforwardly  in a recursive way and using   tensorization tools (we refer to Appendix \ref{quad_error_bound} for an illustration).  Moreover,  the term   $ \underset{z \in \mathcal{C}}{\sup}  | f(z)| $ in the upper bound    \eqref{eq:error_bound_estimate}   is independent of the  quadrature method.

\subsection{Characterization of the Optimal Damping Rule}
 Motivated by the error bound \eqref{eq:error_bound_estimate},  we propose a  rule for choosing the damping parameters that improves the numerical  convergence of the designed  numerical quadrature method (see Section \ref{section_det_quad})   when approximating  \eqref{QOI}. Using Notation \ref{notation}, the rule consists in solving the following constrained optimization problem
\begin{equation}\label{or_opt}
	\mathbf{R}^\ast:=	\mathbf{R}^\ast (\boldsymbol{\Theta}_{m},\boldsymbol{\Theta}_{p})= \underset{\mathbf{R}\in \delta_V } {\arg \min} \sup_{\mathbf{u} \in \mathbb{R}^d} \mid g(\mathbf{u};\mathbf{R},\boldsymbol{\Theta}_{m},\boldsymbol{\Theta}_{p}) \mid 
\end{equation} 
where $g(\cdot)$ is defined in \eqref{g_integrand} and  $\mathbf{R}^\ast:=(R_1^\ast,\ldots,R_d^\ast)$ denotes the vector of optimal damping parameters.

Based on Proposition \ref{damping_rule_proposition},  we can  reduce \eqref{or_opt}  to a simpler optimization problem, which consists of finding the vector of damping parameters, $\mathbf{R} \in \delta_V$, that minimize the peak of the integrand in \eqref{g_integrand} at the origin point $\mathbf{u} = \mathbf{0}_{\mathbb{R}^d}$. 

 	\begin{proposition}\label{damping_rule_proposition}
 		For $g$ defined by \eqref{g_integrand} and   $ \mathbf{R} \in  \delta_V$, we have 
\begin{equation}\label{new_opt}
	\mathbf{R}^\ast =\underset{\mathbf{R}\in \delta_V } {\arg \min} \sup_{\mathbf{u} \in \mathbb{R}^d} \mid g(\mathbf{u};\mathbf{R},\boldsymbol{\Theta}_{m},\boldsymbol{\Theta}_{p}) \mid = \underset{\mathbf{R}\in \delta_V } {\arg \min}\;g(\mathbf{0}_{\mathbb{R}^d};\mathbf{R},\boldsymbol{\Theta}_{m},\boldsymbol{\Theta}_{p}). \\
\end{equation} 
\end{proposition}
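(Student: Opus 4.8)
The plan is to prove the much stronger pointwise statement that, for \emph{every} fixed admissible damping vector $\mathbf{R}\in\delta_V$, the supremum of $|g(\cdot;\mathbf{R},\boldsymbol{\Theta}_m,\boldsymbol{\Theta}_p)|$ over $\mathbb{R}^d$ is actually attained at $\mathbf{u}=\mathbf{0}_{\mathbb{R}^d}$, i.e.
\[
	\sup_{\mathbf{u}\in\mathbb{R}^d}\bigl|g(\mathbf{u};\mathbf{R},\boldsymbol{\Theta}_m,\boldsymbol{\Theta}_p)\bigr| = g(\mathbf{0}_{\mathbb{R}^d};\mathbf{R},\boldsymbol{\Theta}_m,\boldsymbol{\Theta}_p), \qquad \mathbf{R}\in\delta_V.
\]
Once this identity holds as an identity between two functions of $\mathbf{R}$ on $\delta_V$, their sets of minimizers coincide and \eqref{new_opt} follows at once. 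So the work is entirely in establishing the displayed equality.

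The first step is to record two elementary majorizations. Since $\delta_V=\delta_X\cap\delta_P\subseteq\delta_X$, the value $\Phi_{\boldsymbol{X}_T}(\mathrm{i}\mathbf{R})=\mathbb{E}_{\rho_{\boldsymbol{X}_T}}[e^{-\langle\mathbf{R},\boldsymbol{X}_T\rangle}]$ is a finite, nonnegative real number, and the triangle inequality for expectations gives, for all $\mathbf{u}\in\mathbb{R}^d$,
\[
	\bigl|\Phi_{\boldsymbol{X}_T}(\mathbf{u}+\mathrm{i}\mathbf{R})\bigr| = \Bigl|\mathbb{E}_{\rho_{\boldsymbol{X}_T}}\bigl[e^{\mathrm{i}\langle\mathbf{u},\boldsymbol{X}_T\rangle}e^{-\langle\mathbf{R},\boldsymbol{X}_T\rangle}\bigr]\Bigr| \le \mathbb{E}_{\rho_{\boldsymbol{X}_T}}\bigl[e^{-\langle\mathbf{R},\boldsymbol{X}_T\rangle}\bigr] = \Phi_{\boldsymbol{X}_T}(\mathrm{i}\mathbf{R}).
\]
Likewise, since $\delta_V\subseteq\delta_P$ and the payoffs in \eqref{payoff_scaled} satisfy $P\ge 0$, the value $\hat{P}(\mathrm{i}\mathbf{R})=\int_{\mathbb{R}^d}e^{\langle\mathbf{R},\mathbf{x}\rangle}P(\mathbf{x})\,d\mathbf{x}$ is a finite, nonnegative real number and
\[
	\bigl|\hat{P}(\mathbf{u}+\mathrm{i}\mathbf{R})\bigr| = \Bigl|\int_{\mathbb{R}^d}e^{-\mathrm{i}\langle\mathbf{u},\mathbf{x}\rangle}e^{\langle\mathbf{R},\mathbf{x}\rangle}P(\mathbf{x})\,d\mathbf{x}\Bigr| \le \int_{\mathbb{R}^d}e^{\langle\mathbf{R},\mathbf{x}\rangle}P(\mathbf{x})\,d\mathbf{x} = \hat{P}(\mathrm{i}\mathbf{R}), \qquad \mathbf{u}\in\mathbb{R}^d.
\]

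The second step is to combine these bounds with the definition \eqref{g_integrand}: for every $\mathbf{u}\in\mathbb{R}^d$,
\[
	\bigl|g(\mathbf{u};\mathbf{R},\boldsymbol{\Theta}_m,\boldsymbol{\Theta}_p)\bigr| \le (2\pi)^{-d}e^{-rT}\,\bigl|\Phi_{\boldsymbol{X}_T}(\mathbf{u}+\mathrm{i}\mathbf{R})\bigr|\,\bigl|\hat{P}(\mathbf{u}+\mathrm{i}\mathbf{R})\bigr| \le (2\pi)^{-d}e^{-rT}\,\Phi_{\boldsymbol{X}_T}(\mathrm{i}\mathbf{R})\,\hat{P}(\mathrm{i}\mathbf{R}).
\]
Since $\Phi_{\boldsymbol{X}_T}(\mathrm{i}\mathbf{R})$ and $\hat{P}(\mathrm{i}\mathbf{R})$ are nonnegative reals, their product is a nonnegative real, so the $\Re[\cdot]$ in \eqref{g_integrand} at $\mathbf{u}=\mathbf{0}_{\mathbb{R}^d}$ is vacuous and $g(\mathbf{0}_{\mathbb{R}^d};\mathbf{R},\boldsymbol{\Theta}_m,\boldsymbol{\Theta}_p)=(2\pi)^{-d}e^{-rT}\Phi_{\boldsymbol{X}_T}(\mathrm{i}\mathbf{R})\hat{P}(\mathrm{i}\mathbf{R})=|g(\mathbf{0}_{\mathbb{R}^d};\mathbf{R},\boldsymbol{\Theta}_m,\boldsymbol{\Theta}_p)|$. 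Hence $\sup_{\mathbf{u}}|g(\mathbf{u};\mathbf{R},\boldsymbol{\Theta}_m,\boldsymbol{\Theta}_p)|\le g(\mathbf{0}_{\mathbb{R}^d};\mathbf{R},\boldsymbol{\Theta}_m,\boldsymbol{\Theta}_p)$, and the reverse inequality is immediate because $\mathbf{0}_{\mathbb{R}^d}$ is one admissible point of the supremum; this is the displayed identity. Taking $\arg\min_{\mathbf{R}\in\delta_V}$ on both sides yields \eqref{new_opt}.

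The only genuinely delicate point, and the one I would flag explicitly, is the nonnegativity of $\hat{P}(\mathrm{i}\mathbf{R})$ and of $g(\mathbf{0}_{\mathbb{R}^d};\mathbf{R},\cdot)$: it relies on $P\ge 0$ (valid for the basket put and the call on min) together with $\rho_{\boldsymbol{X}_T}$ being a genuine probability density, which is exactly what allows one to discard the real-part operator at the origin and identify the supremum with the value there; the membership $\mathbf{R}\in\delta_V=\delta_X\cap\delta_P$ is what makes all the integrals finite so that the triangle inequalities are meaningful. Everything else is routine. As a by-product, the argument shows that the objective in \eqref{new_opt} equals $\mathbf{R}\mapsto(2\pi)^{-d}e^{-rT}\Phi_{\boldsymbol{X}_T}(\mathrm{i}\mathbf{R})\hat{P}(\mathrm{i}\mathbf{R})$, which contains no integration over $\mathbf{u}$ and is therefore cheap to evaluate numerically.
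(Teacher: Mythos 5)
Your proof is correct and follows essentially the same route as the paper: both rest on the ridge property of Fourier transforms of non-negative functions (the triangle inequality applied separately to $\Phi_{\boldsymbol{X}_T}$ and $\hat{P}$), yielding $|g(\mathbf{u};\mathbf{R},\cdot)|\le g(\mathbf{0}_{\mathbb{R}^d};\mathbf{R},\cdot)$ for every fixed $\mathbf{R}\in\delta_V$. Your added remark that the product $\Phi_{\boldsymbol{X}_T}(\mathrm{i}\mathbf{R})\hat{P}(\mathrm{i}\mathbf{R})$ is a non-negative real, so the $\Re[\cdot]$ and the absolute value at the origin can be dropped, is a small but welcome clarification that the paper leaves implicit.
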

\begin{proof}
	Let $f : \mathbb{R}^d \mapsto \mathbb{R}_+$ be an arbitrary real-valued non-negative function, and  let $\mathbf{R} \in \mathbb{R}^d$ such that the dampened function $\mathbf{x} \mapsto f_{\mathbf{R}}(\mathbf{x}) \in L^1(\mathbb{R}^d)$, then we have that its Fourier transform satisfies
	
	\begin{equation}\label{eq:ridge property}
	\mid \widehat{f}_{\mathbf{R}}(\mathbf{u}) \mid \leq \int_{\mathbb{R}^d}  \mid e^{\mathrm{i}  \langle\mathbf{u} , \mathbf{x} \rangle} \mid  e^{ \langle\mathbf{R}, \mathbf{x} \rangle} f(\mathbf{x}) d\mathbf{x} = \int_{\mathbb{R}^d}   e^{ \langle\mathbf{R}, \mathbf{x} \rangle}  f(\mathbf{x}) d\mathbf{x}  =  \widehat{f}_{\mathbf{R}}(\mathbf{0}), \mathbf{u} \in \mathbb{R}^d.
\end{equation}
Moreover, we have that by \eqref{eq:generalized_fourier_transform} $\widehat{f}_{\mathbf{R}}(\mathbf{u}) = \widehat{f}( \mathbf{u} + \mathrm{i}\mathbf{R}) $, hence \eqref{eq:ridge property} implies that 
\begin{equation}
	\label{eq:ridge_prop}
	\mid \widehat{f}(\mathbf{u} + \mathrm{i} \mathbf{R}) \mid \leq \widehat{f}(\mathrm{i} \mathbf{R}), \quad \forall \; \mathbf{u} \; \in \mathbb{R}^d.
\end{equation}
Equation \eqref{eq:ridge_prop} is known as the ridge property of Fourier transforms (see \cite{lukacs1970characteristic}). Since both the payoff functions and the probability density functions are real-valued and non-negative then their Fourier transforms satisfy the ridge property i.e. $\mid \widehat{P}(\mathbf{u} + \mathrm{i} \mathbf{R}) \mid \leq \widehat{P}(\mathrm{i} \mathbf{R}), \; \forall \; \mathbf{u} \in \mathbb{R}^d, \mathbf{R} \in \delta_P$, and $\mid \Phi_{\boldsymbol{X}_T}(\mathbf{u} + \mathrm{i} \mathbf{R}) \mid \leq \Phi_{\boldsymbol{X}_T}(\mathrm{i} \mathbf{R}), \; \forall \; \mathbf{u} \in \mathbb{R}^d, \mathbf{R} \in \delta_X$, hence the integrand \eqref{QOI} can be bounded by
	\begin{equation}
		\mid g(\mathbf{u}; \mathbf{R}, \boldsymbol{\Theta}_m, \boldsymbol{\Theta}_p) \mid  \leq  (2 \pi)^{-d} e^{-r T}  \mid \Phi_{\boldsymbol{X}_T}(\mathrm{i} \mathbf{R}) \mid \mid \widehat{P}(\mathrm{i} \mathbf{R}) \mid = \mid g( \mathbf{0}_{\mathbb{R}^d}; \mathbf{R}) \mid, \forall \; \mathbf{u} \in \mathbb{R}^d, \mathbf{R} \in \delta_V.
	\end{equation}
\end{proof}

Equation \eqref{new_opt} cannot be solved analytically,  especially in high dimensions; therefore, we solve it numerically,  approximating  $\mathbf{R}^\ast$ by  $\bar{\mathbf{R}}=(\bar{R}_1,\ldots,\bar{R}_d)$. In this context, we used the interior point method \cite{byrd2000trust,byrd1999interior} with an accuracy of order $ 10^{-6}$; other algorithms such as as L-BFGS-B were tested and work effectively.

The numerical investigation through  different models and parameters  (for illustration, we refer to  Figure  \ref{fig: (Left) Shape of the integrand w.r.t  R, 
	(Right)  convergence w.r.t , using Gauss-Laguerre Quadrature for European put option under (a) GBM  (b) VG (c) NIG pricing models.}  for  the   single put option under the different models, and Figure \ref{damping_smoothing_effect_fig}  for the 2D-Basket put under VG) confirms  that the damping parameters have a considerable effect on the properties of the integrand, particularly its peak,  tail-heaviness, and  oscillatory behavior.  We observed that the damping parameters that produce the lowest peak of the integrand around the origin are associated with  a  faster convergence of the relative quadrature error than  other damping parameters.  Moreover,  we observed that highly peaked integrands are  more likely to oscillate, implying a deteriorated  convergence of the numerical quadrature. 	 Independent of the   quadrature methods   explained in  Section \ref{section_det_quad}, this observation was consistent for several parameter constellations under the three tested pricing dynamics, GBM, VG, and NIG, and for different dimensions of the basket put and rainbow options.  Section \ref{num_high_dim_damping_sec}  illustrates the computational advantage of the optimal  damping rule on the error convergence for the  multi-asset basket put and call on min options under  different  models.
\begin{figure}[h!]
	\centering 
	\begin{subfigure}{0.7\textwidth}
		\includegraphics[width=\linewidth]{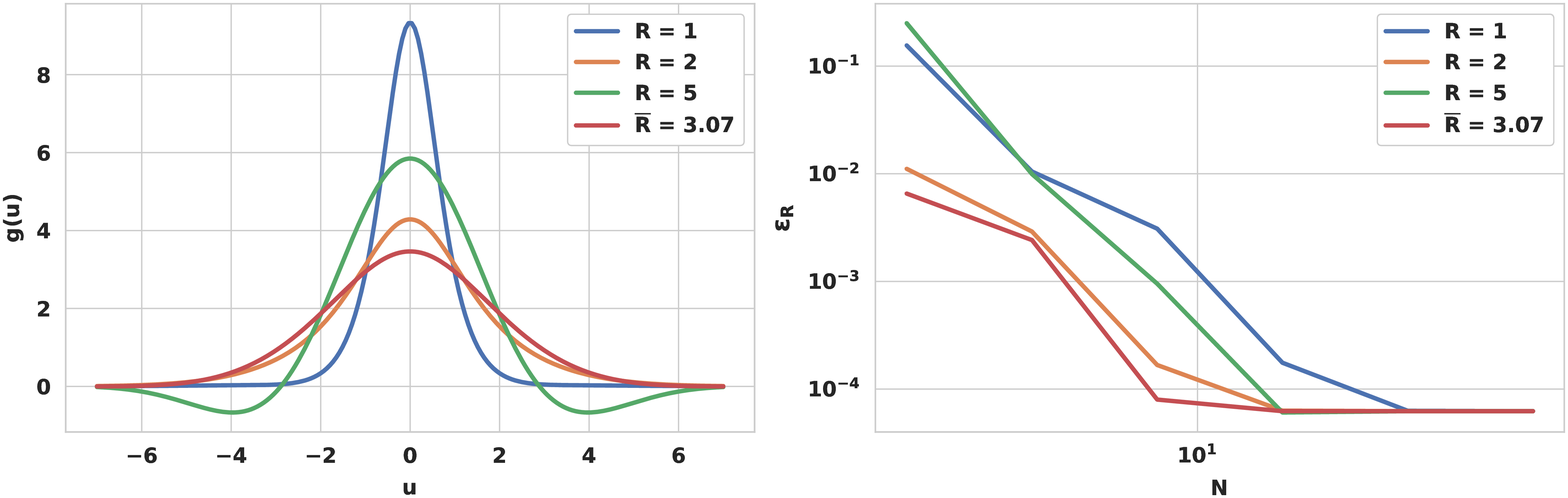}
		\caption{$S_0=100, K=100, r=0 \%, T=1,\sigma = 0.4$}
		\label{1d_put_gbm_damping}
	\end{subfigure}
	\begin{subfigure}{0.7\textwidth}
		\includegraphics[width=\linewidth]{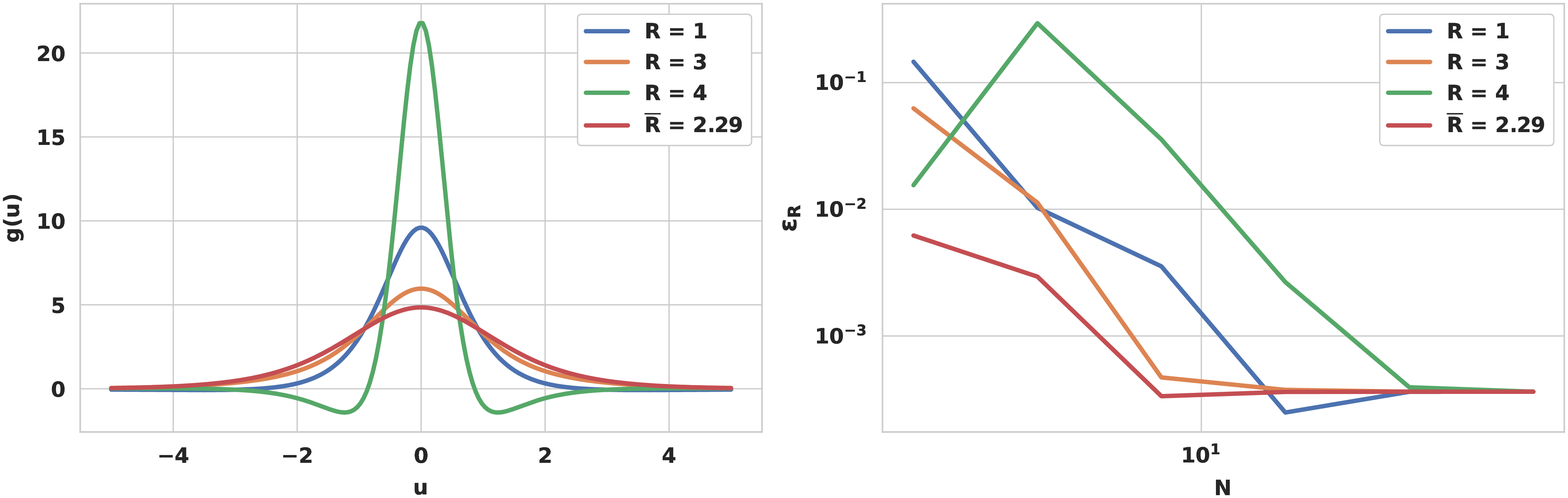}
		\caption{$S_0=100, K=100, r=0 \%, T=1,\sigma = 0.4, \theta = -0.3, \nu= 0.257$}
		\label{1d_put_vg_damping}
	\end{subfigure}
	\begin{subfigure}{0.7\textwidth}
		\includegraphics[width=\linewidth]{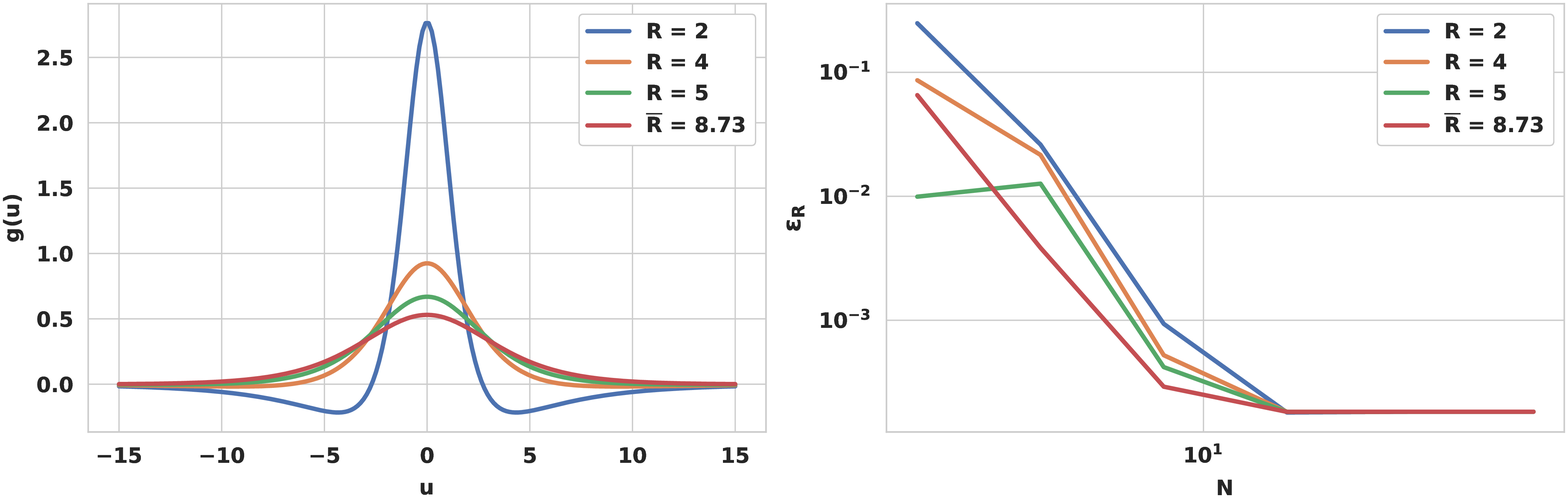}
		\caption{$S_0=100, K=100, r=0 \%, T=1,\alpha = 15, \beta= -3, \delta= 0.2$}
		\label{1d_put_nig_damping}
	\end{subfigure}
	\caption{1D illustration: (Left) Shape of the integrand w.r.t~the damping parameter, R.
		(Right) 	$\mathcal{E}_{R}$ convergence w.r.t.~$N$, using Gauss--Laguerre quadrature for the European put option under (a) GBM,  (b) VG, and (c) NIG pricing models.  The relative quadrature error $\mathcal{E}_{R} $ is defined as $	\mathcal{E}_{R} = \frac{ \mid Q_{N}[g] - \text{Reference Value} \mid }{ \text{Reference  Value}}$, where $Q_N$ is the quadrature estimator of \eqref{QOI} based on the Gauss--Laguerre rule.}
	\label{fig: (Left) Shape of the integrand w.r.t  R, 
		(Right)  convergence w.r.t , using Gauss-Laguerre Quadrature for European put option under (a) GBM  (b) VG (c) NIG pricing models.}
\end{figure}
\begin{figure}[h!]
\centering
	\begin{subfigure}{0.45\textwidth}
		\includegraphics[width=\linewidth]{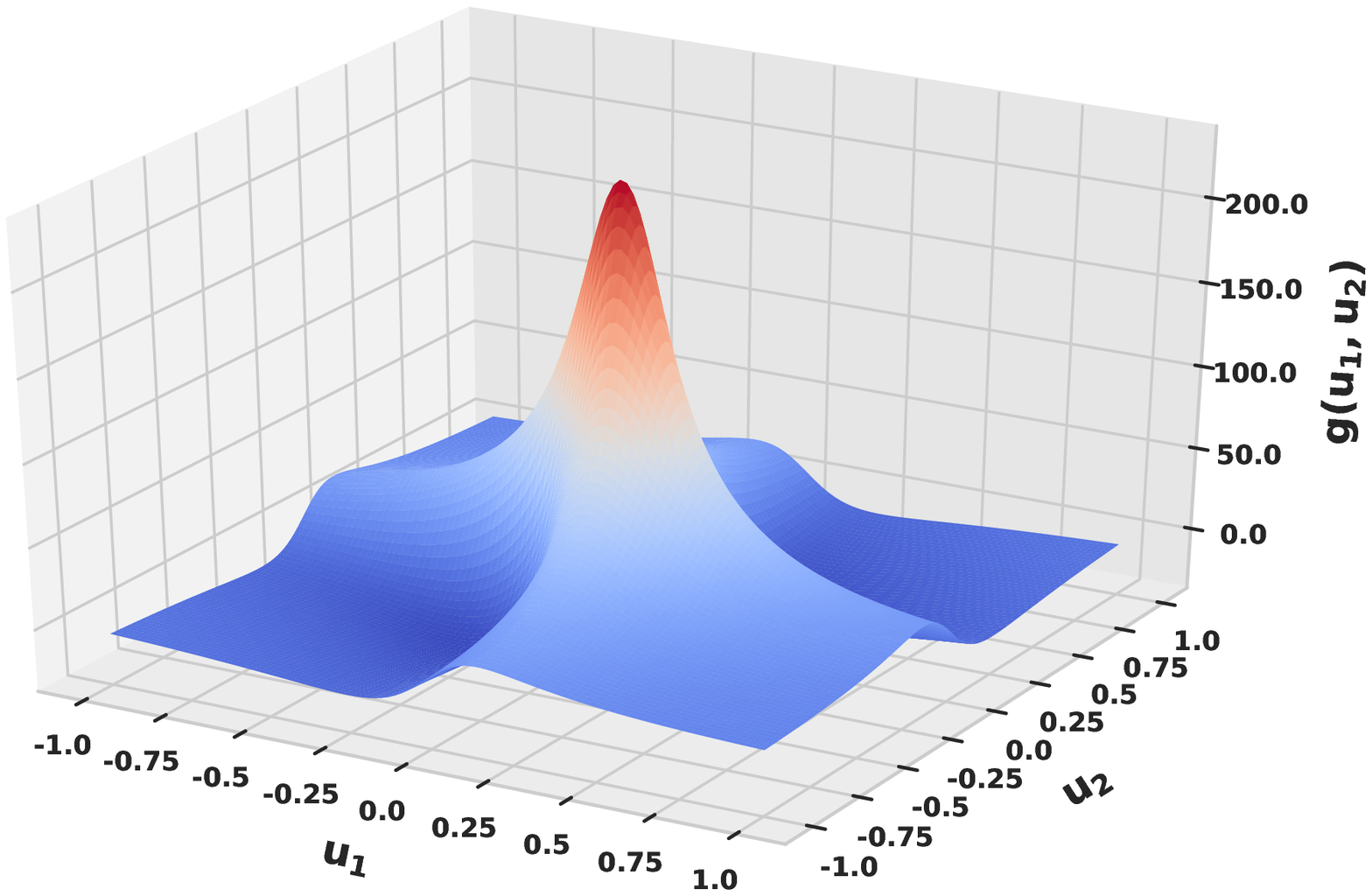}
		\label{damping_smoothing_effect_fig_a}
		\caption{}
	\end{subfigure}
	\begin{subfigure}{0.45\textwidth}
		\includegraphics[width=\linewidth]{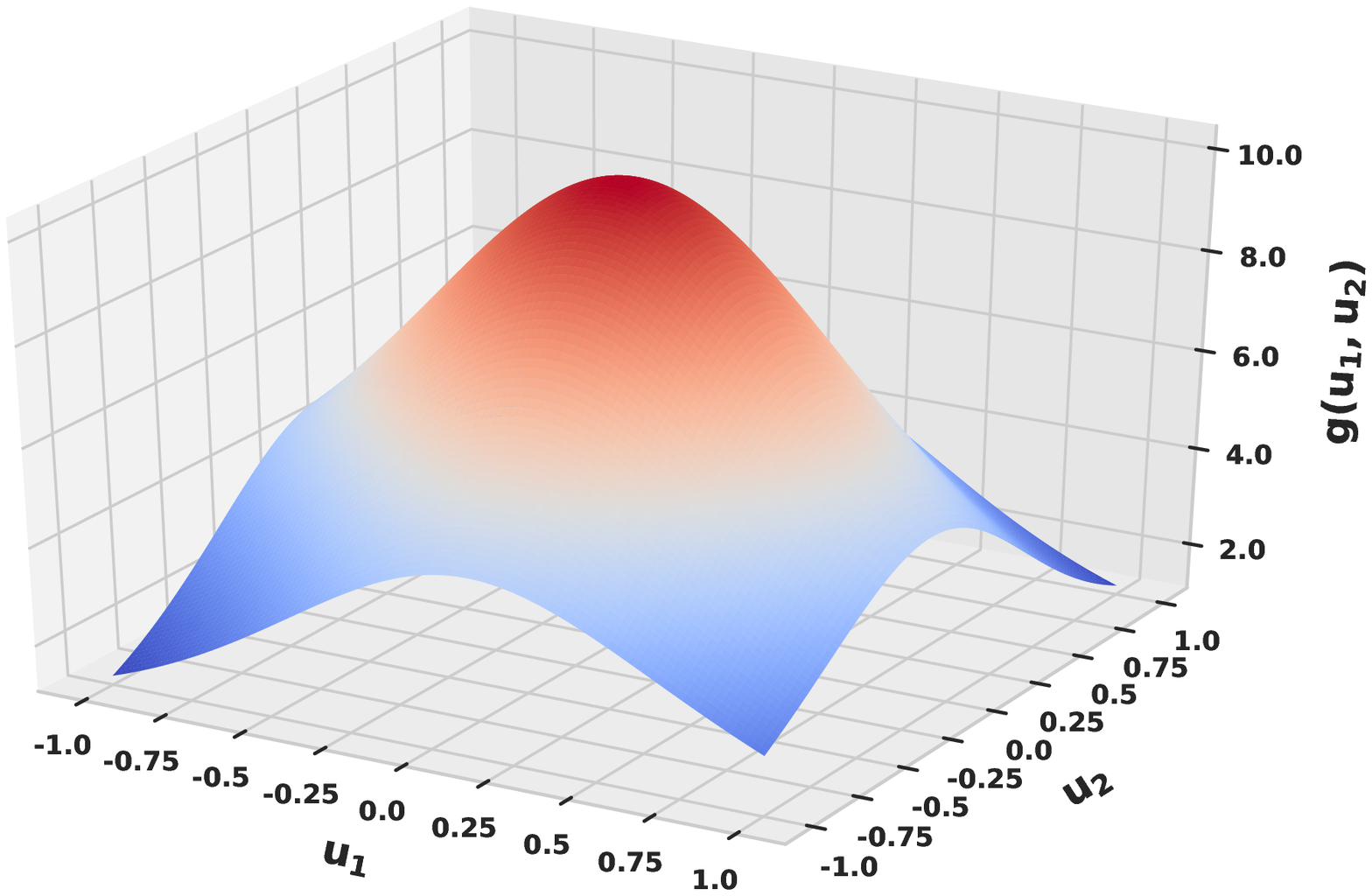}
		\label{damping_smoothing_effect_fig_b}
		\caption{}
	\end{subfigure} 
	\newline
	\begin{subfigure}{0.45\textwidth}
		\includegraphics[width=\linewidth]{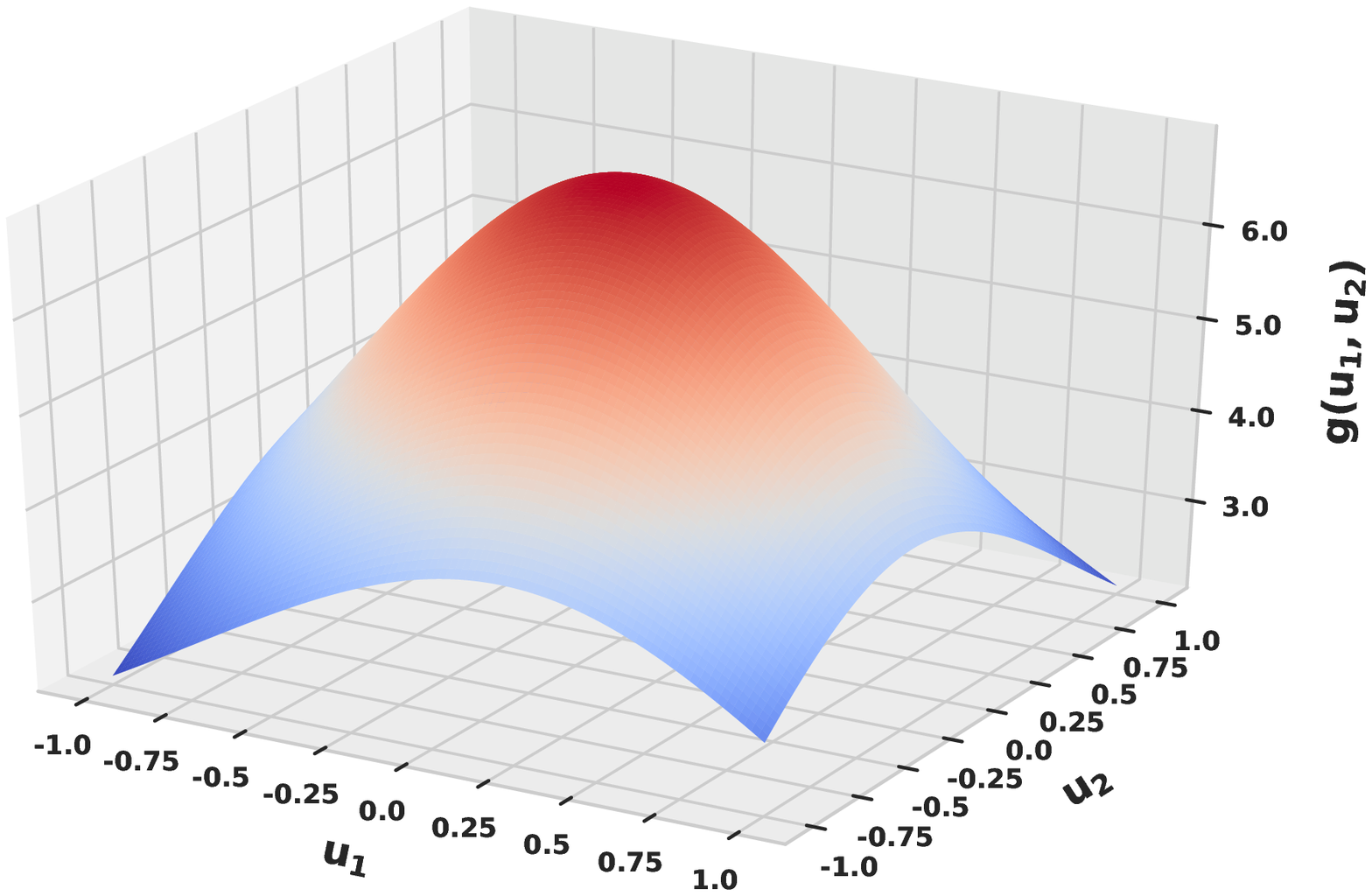}
		\label{damping_smoothing_effect_fig_c}
		\caption{}
	\end{subfigure}
	\begin{subfigure}{0.45\textwidth}
		\includegraphics[width=\linewidth]{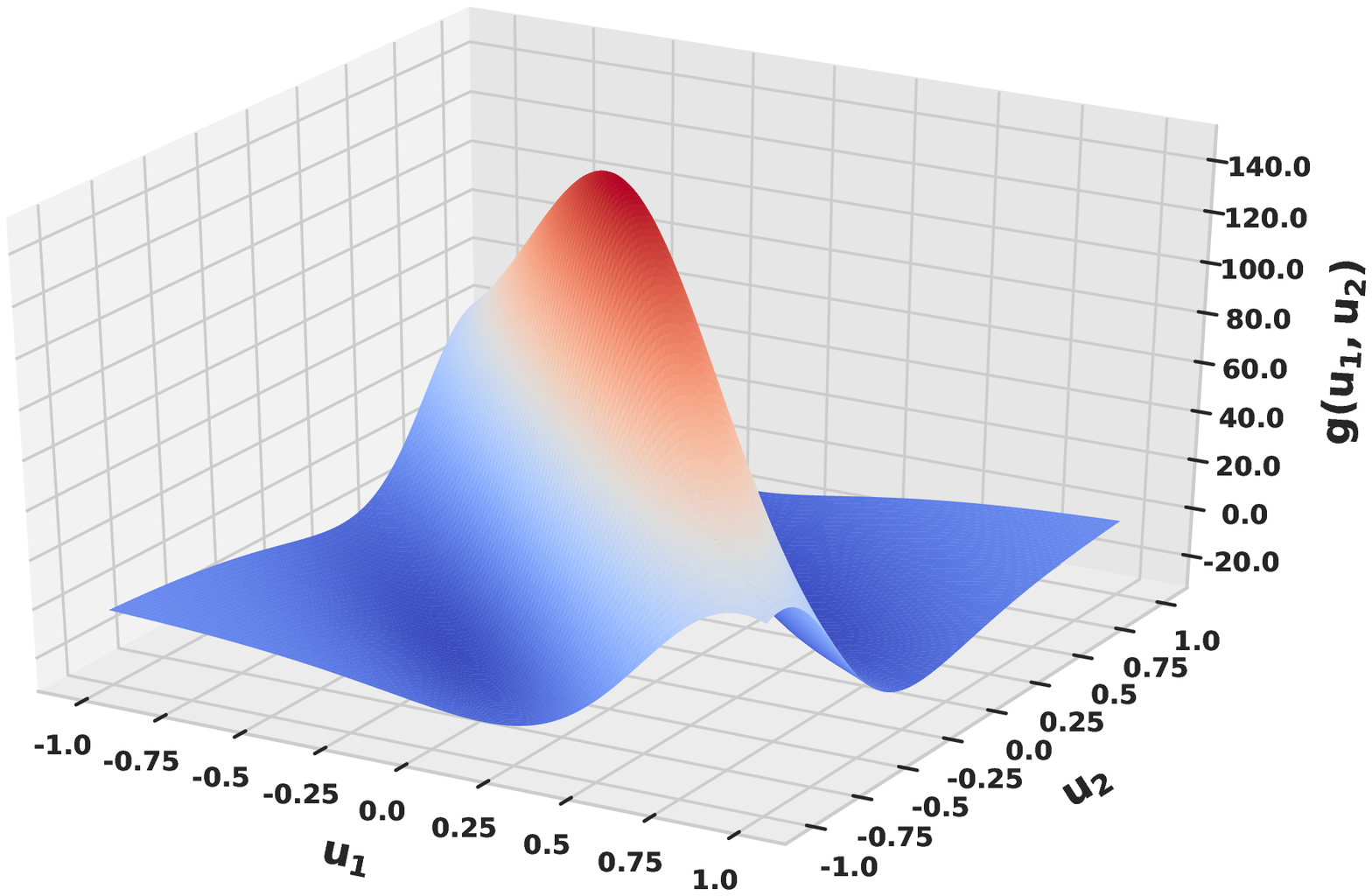}
		\label{damping_smoothing_effect_fig_d}
		\caption{}
	\end{subfigure}
	\vspace{0.2cm}
	\caption{Effect of the damping parameters on the shape of the integrand in the case of 2D-basket put option under the VG model with parameters $\boldsymbol{\sigma} = (0.4,0.4)$, $\boldsymbol{\theta} = (-0.3,-0.3), \nu = 0.257$. (a) $\mathbf{R}=(0.2,0.2)$  
	(b) $\mathbf{R}=(1,1)$, (c) $\mathbf{R}=(2,2)$, (d) $\mathbf{R}=(3,3)$.  }
	
	\label{damping_smoothing_effect_fig}
\end{figure}

\begin{remark}[Case of isotropic integrand]
	The $d$-dimensional optimization problem  \eqref{new_opt} is  simplified further to a 1D problem 	when the integrand is isotropic.
\end{remark}
\begin{remark}[Improving the damping parameters rule]
	Other rules for choosing the damping parameters can be investigated to improve the numerical convergence of quadrature methods. One can   account for additional features,   such as (i) the distance of the damping parameters to the poles, which affects the choice of the integration contour in \eqref{eq:integ_error_contour}, or (ii)  controlling the regularity of the integrand via high-order derivative estimates. 
	 However, we expect   such rules  to be more  complicated and computationally expensive (e.g., the evaluation of the gradient of the integrand). Investigating other rules remains open for  future work.
\end{remark}	

\subsection{Numerical evaluation of the inverse Fourier integrals using hierarchical deterministic quadrature methods}
\label{section_det_quad}
We aim to approximate   \eqref{QOI} efficiently using a tensorization of quadrature formulas over $\mathbb{R}^d$. When using Fourier transforms for option pricing, the standard numerical approach  truncates and discretizes the integration domain and  uses FFT based on bounded equispaced quadrature formulas, such as the trapezoidal and Simpson's rule. The  FFT is restricted to  the use of uniform quadrature mesh, in contrary to Gaussian quadrature rules which have higher polynomial exactness ($N$-point Gaussian quadrature rule is exact up to polynomials of degree $2N-1$).  Moreover, using FFT requires pre-specifying the truncation range. This option is efficient in the 1D setting, as the estimation of the truncation intervals based, for instance, on the cumulants, was widely covered in the literature. It remains affordable even though the additional cost might be high due to the inappropriate choice of truncation parameters. However, this is not the case in the  multidimensional  setting because determining the truncation parameters becomes more challenging. Moreover, the truncation errors nontrivially depend on  the damping parameter values. Choosing larger than necessary truncation domains  leads to a  more significant increase in the computational effort for higher  dimensions. Finally, FFT-based approaches  need to be followed by interpolation techniques to obtain the option values at the desired strikes grid, which may lead to loss in accuracy, in contrary to DI methods, which can be efficiently vectorized w.r.t~the desired strikes grid.

For these reasons, we choose the DI approach with  Gaussian quadrature rules.  Moreover,  our numerical investigation  (see Appendix \ref{num_lag_herm_sec}) suggests that  Gauss--Laguerre quadrature  exhibits faster convergence  than the Gauss--Hermite rule. Therefore, we  used  Laguerre quadrature on semi-infinite domains after applying the necessary transformations.

Before defining the  multivariate quadrature estimators, we first introduce the notation in the univariate setting (For more details see \cite{chen2018sparse}). Let $\beta$ denotes a non-negative integer, referred to as the ``discretization level," and  $m: \mathbb{N} \rightarrow \mathbb{N}$ represents a strictly increasing function with $m(0)=0$ and $m(1)=1$, called a   ``level-to-nodes function." At each level $\beta$, we consider a set of $m(\beta)$  distinct quadrature points $ \mathcal{H}^{m(\beta)}=\left\{x_{\beta}^{1}, x_{\beta}^{2}, \ldots, x_{\beta}^{m(\beta)}\right\} \subset \mathbb{R}$, and a set of  quadrature weights, $\boldsymbol{\omega}^{m(\beta)}=\left\{\omega_{\beta}^{1}, \omega_{\beta}^{2}, \ldots, \omega_{\beta}^{m(\beta)}\right\} .$ We also let $C^{0}(\mathbb{R})$ be the space of real-valued
continuous functions over $\mathbb{R}$. We  define the univariate quadrature operator applied to a function $f \in C^{0}(\mathbb{R}) $ as follows:
\begin{equation*}
	Q^{m(\beta)}: C^{0}(\mathbb{R}) \rightarrow \mathbb{R}, \quad Q^{m(\beta)}[f]:=\sum_{j=1}^{m(\beta)} f\left(x_{\beta}^{j}\right) \omega_{\beta}^{j} \text { . }
\end{equation*}
In our case,  in \eqref{QOI}, we have a multivariate integration problem of $g$ (see \eqref{g_integrand})
 over $\mathbb{R}^{d}$. Accordingly,  for a multi-index $\boldsymbol{\beta}=\left(\beta_{i}\right)_{i=1}^{d} \in \mathbb{N}^{d}$, the $d$-dimensional quadrature operator applied to $g$ is defined as\footnote{The $n$-th quadrature operator  acts only on the $n$-th variable of $g$.}
\begin{align*}
	&Q_d^{m(\boldsymbol{\beta})}: C^{0}\left(\mathbb{R}^{d}\right) \rightarrow \mathbb{R}, \quad Q_d^{m(\boldsymbol{\beta})}=\bigotimes_{i=1}^{d} Q^{m\left(\beta_{i}\right)},\nonumber\\
		&Q_d^{m(\boldsymbol{\beta})}[g]:= \sum_{j_1=1}^{m(\beta_1)} \ldots \sum_{j_d=1}^{m(\beta_d)} \omega_{\beta_{1}}^{j_1}  \ldots \omega_{\beta_{1}}^{j_d}    g(x_{\beta_1}^{j_1}, \ldots, x_{\beta_d}^{j_d}) :=  \sum_{j=1}^{\# \mathcal{T}^{m(\boldsymbol{\beta})}} g\left(\widehat{x}_{j}\right) \bar{\omega}_{j},
\end{align*}
where $\widehat{x}_{j} \in \mathcal{T}^{m(\boldsymbol{\beta})}:=\prod_{i=1}^{d} \mathcal{H}^{m\left(\beta_{i}\right)}$ (with cardinality $
\# \mathcal{T}^{m(\boldsymbol{\beta})}=
\prod_{i=1}^{d} m\left(\beta_{i}\right)$ and 
$m( \beta_i) =N_i$ quadrature points in the dimension of $x_i$), and $\bar{\omega}_{j}$ is a product of the weights of the univariate quadrature rule.  To simplify the notation, we replace $Q_d^{m(\boldsymbol{\beta})}$ with $Q_d^{\boldsymbol{\beta}}$. 

We  define the set of differences $\Delta Q_{d}^{\boldsymbol{\beta}}$ for indices $i \in \{1,\ldots,d\}$ as follows:
\begin{equation}
	\Delta_{i} Q_{d}^{\boldsymbol{\beta}}:=\left\{\begin{array}{l}
		Q_{d}^{\boldsymbol{\beta}}-Q_{d}^{\boldsymbol{\beta}^{\prime}}, \text { with } \; \boldsymbol{\beta}^{\prime}=\boldsymbol{\beta}-\mathbf{e}_{i}, \text { when } \beta_{i}>0, \\
		Q_{d}^{\boldsymbol{\beta}}, \quad \text { otherwise, }
	\end{array}\right. 
\end{equation}
where $\mathbf{e}_{i}$ denotes the $i$th $d$-dimensional unit vector.  Then,  using the telescopic property, the  quadrature estimator, defined w.r.t.~a choice of the set of multi-indices $\mathcal{I} \subset \mathbb{N}^{d}$, is expressed by\footnote{For instance, when $d=2$, then $
		\small	{	\Delta Q_{2}^{\boldsymbol{\beta}}=\Delta_{2} \Delta_{1} Q_{2}^{\left(\beta_{1}, \beta_{2}\right)}
			=Q_{2}^{\left(\beta_{1}, \beta_{2}\right)}-Q_{2}^{\left(\beta_{1}, \beta_{2}-1\right)}-Q_{2}^{\left(\beta_{1}-1, \beta_{2}\right)}+Q_{2}^{\left(\beta_{1}-1, \beta_{2}-1\right)}}$.}$^{,\thinspace}$\footnote{
		To ensure the validity of the telescoping sum expansion, the index set $\mathcal{I}$ must 
		satisfy the admissibility condition (\ie, $
		\boldsymbol{\beta}\in \mathcal{I}, \boldsymbol{\alpha} \leq \boldsymbol{\beta} \Rightarrow \boldsymbol{\alpha} \in \mathcal{I},\text{ where} \: \boldsymbol{\alpha} \leq \boldsymbol{\beta} \: \text{is defined as} \:  \alpha_i \leq \beta_i, i = 1,\ldots,d$).}
	\begin{equation}
		\label{quad_estimate}
		Q_{d}^{\mathcal{I}}=\sum_{\boldsymbol{\beta} \in \mathcal{I}} \Delta Q_{d}^{\boldsymbol{\beta}}, \quad \text{with} 	\: \Delta Q_{d}^{\boldsymbol{\beta}}=\left(\bigotimes_{i=1}^{d} \Delta_{i}\right) Q_{d}^{\boldsymbol{\beta}},
	\end{equation}
 and the quadrature error can  be written as
\begin{equation}\label{eq: quad_error}
	\mathcal{E}_{Q}=\left|Q^{\infty}_d[g]-Q_d^{\mathcal{I}}[g]\right| \leq \sum_{\boldsymbol{\beta} \in \mathbb{N}^{d} \backslash \{ \mathcal{I} \}}\left|\Delta Q_d^{\boldsymbol{\beta}}[g]\right|,
\end{equation}
where
\begin{equation*}
	\label{quad_op}
	Q_{d}^{\infty}:=\sum_{\beta_{1}=0}^{\infty} \cdots \sum_{\beta_{d}=0}^{\infty} \Delta Q_{d}^{\left(\beta_{1}, \ldots, \beta_{d}\right)}=\sum_{\boldsymbol{\beta} \in \mathbb{N}^d} \Delta Q_{d}^{\boldsymbol{\beta}}.
\end{equation*}
In Equation (\ref{quad_estimate}), the choice of  (i) the strategy for  the construction of the index set $\mathcal{I}$ and (ii) the hierarchy of quadrature points determined by $m(\cdot)$ defines different hierarchical quadrature methods.  Table \ref{index_set_table} presents  the details of the   methods considered in this work. 
\begin{table}[h!]
	\centering
	\begin{tabular}{| p{3.87cm} | p{5.55cm} | p{6.6cm} |}
		\hline   \textbf{Quadrature Method} & $m(\cdot)$ &  $\mathcal{I}$  \\
		\hline
		Tensor Product (TP)&  $m(\beta)=\beta$
		& \small $ \mathcal{I^{\text{TP}}}(l) = \{\boldsymbol{\beta} \in\mathbb{N}^d: \;\; \max_{1 \le i \le d}(\beta_i-1) \leq l\}$   \\ 
		\hline
		Smolyak (SM) Sparse Grids & $ m(\beta)=  2^{\beta-1}+1,\, \beta>1,m(1) =1  $
		&  \small $\mathcal{I^{\text{SM}}}(l)=\{\boldsymbol{\beta} \in\mathbb{N}^d: \;\; \sum_{1 \le i \le d}(\beta_i-1) \leq l\}$ \\
		\hline
		Adaptive Sparse Grid  Quadrature (ASGQ) & $ m(\beta )=  2^{\beta-1}+1,\, \beta>1,m(1) =1  $
		& \small 
		$\mathcal{I^{\text{ASGQ}}}=\left\{\boldsymbol{\beta} \in \mathbb{N}_{+}^{d}: P_{\boldsymbol{\beta}} \geq \bar{T}\right\}$  \newline(see \eqref{profit_rule} and  \eqref{error_contr})\\
		\hline
	\end{tabular}
	\caption{Construction details for the quadrature methods. $l \in \mathbb{N}$ represents a given level.  $\bar{T} \in \rset$ is a threshold value.}
	\label{index_set_table}
\end{table} 

In many situations, the tensor product (TP)  estimator can become rapidly unaffordable because the number of function evaluations increases exponentially with the problem dimensionality, known as the \emph{curse of dimensionality}. We use Smolyak (SM) and ASGQ  methods based on sparsification and dimension-adaptivity techniques to overcome this issue. For both TP and SM methods, the construction of the index set is performed  a priori. However,  ASGQ allows for  the a posteriori and adaptive construction of the index set $\mathcal{I}$ by greedily exploiting the mixed regularity of the integrand  during the actual computation of the quantity of interest. The construction of  $\mathcal{I^{\text{ASGQ}}}$  is performed through   profit thresholding,  where new indices are selected  iteratively based on  the error versus cost-profit rule,   with a hierarchical surplus defined by
\begin{equation}
	\label{profit_rule}
	P_{\boldsymbol{\beta}}=\frac{\left|\Delta E_{\boldsymbol{\beta}}\right|}{\Delta \mathcal{W}_{\boldsymbol{\beta}}},
\end{equation}
where $\Delta \mathcal{W}_{\boldsymbol{\beta}}$ is   the work contribution (\ie,  the computational cost required to add $\Delta Q_{d}^{\boldsymbol{\beta}}$ to $Q_{d}^{\mathcal{I^{\text{ASGQ}}}}$)  and  $\Delta E_{\boldsymbol{\beta}}$ is  the error contribution (\ie, a measure of how much the quadrature error would decrease once $\Delta Q_{d}^{\boldsymbol{\beta}}$ has been added to $Q_{d}^{\mathcal{I^{\text{ASGQ}}}}$):
\begin{align}	\label{error_contr}
	\Delta E_{\boldsymbol{\beta}} &=\left|Q_{d}^{\mathcal{I^{\text{ASGQ}}} \cup\{\boldsymbol{\beta}\}}[g]-Q_{d}^{\mathcal{I^{\text{ASGQ}}}}[g]\right|\\
	\Delta \mathcal{W}_{\boldsymbol{\beta}} &=\operatorname{Work}\left[Q_{d}^{\mathcal{I^{\text{ASGQ}}}\cup\{\boldsymbol{\beta}\}}[g]\right]- \operatorname{Work}\left[Q_{d}^{\mathcal{I^{\text{ASGQ}}}}[g]\right]. \nonumber
\end{align}
The convergence speed for  all quadrature methods  in this work is determined by the behavior of the quadrature error defined in \eqref{eq: quad_error}. In this context, given the model and option parameters,  the  convergence rate depends on  the damping parameter values, which control the regularity of the integrand $g$  in the Fourier space.
 
We let $N:= \prod_{i=1}^{d} m\left(\beta_{i}\right)$ denote the total number of quadrature points used by each method. For the TP method, we have the following  \cite{davis2007methods}:\begin{equation}
 	\label{eq:error_bound_TP}
 	\mathcal{E}^{\text{TP}}_{Q}\left(N;\mathbf{R}\right)=\mathcal{O}\left(N^{- \frac{r_t}{d}}\right)
 \end{equation}
 for functions with bounded total derivatives up to order $r_t := r_t(\mathbf{R})$.  When using SM sparse grids (not adaptive), we obtain the following  \cite{smolyak1963quadrature,WASILKOWSKI19951,gerstner1998numerical,barthelmann2000high}:  \begin{equation}
	\label{eq:error_bound_SM}
	\mathcal{E}^{\text{SM}}_{Q}\left(N ;\mathbf{R}\right)=\mathcal{O}\left(N^{-r_m}\left(\log N\right)^{(d-1)(r_m)+1)}\right)
\end{equation}
for functions with bounded mixed partial derivatives up to order $r_m:=r_m(\mathbf{R})$. Moreover, it was observed in \cite{Gerstner2003DimensionAdaptiveTQ} that  the convergence is even spectral for  analytic functions ($r_m \rightarrow +\infty$). For the ASGQ method, we achieve  \cite{chen2018sparse}
 \begin{equation}
 	\label{eq:error_bound_ASGQ}
 	\mathcal{E}^{\text{ASGQ}}_{Q}\left(N;\mathbf{R}\right)=\mathcal{O}\left(N^{-r_w/2}\right)
 \end{equation}
where  $r_w$ is  related to the degree of   weighted mixed regularity of the integrand. 
 
  In \eqref{eq:error_bound_TP}, \eqref{eq:error_bound_SM}, and \eqref{eq:error_bound_ASGQ}, we emphasize the dependence of the convergence rates on the damping parameters $\mathbf{R}$, which is only valid in this context because these parameters control the regularity of the integrand in the Fourier space.    Moreover,  our optimized choice of  $\mathbf{R}$ using \eqref{or_opt} is not only used to increase the order  of  boundedness of the  derivatives of $g$  but also to reduce the bounds on these derivatives.

\section{Numerical Experiments and Results}\label{sec: num_exp_results}
In this section, we present the results of  different numerical experiments conducted for pricing	 multi-asset European equally weighted basket put ($w_i = \frac{1}{d}, i =1,\ldots,d$) and call on min options.
These examples are tested under  the multivariate  (i) GBM, (ii)  VG and (iii)  NIG models with various  parameter constellations for different dimensions $d \in \{2, 4, 6\}$. The tested model parameters are justified from  the literature  on model  calibration \cite{kirkby2015efficient,Wiel2015ValuationOI,choi2018sum,bayer2018smoothing,aguilar2020some,healy2021pricing}.   The detailed  illustrated examples are	presented in Tables \ref{tab_mgbm}, \ref{tab_mvg}, and \ref{tab_mnig}. To compare  the methods in this work, we consider  relative errors normalized by the reference prices. The error is the relative quadrature error defined as  $\mathcal{E}_{R} = \frac{ \mid Q_{d}^{\mathcal{I}}[g] - \text{Reference  Value} \mid }{\text{Reference  Value}}$ when using quadrature methods, and the $95\%$ relative statistical error of the MC method is estimated by the virtue of the central limit theorem (CLT) as
	\begin{equation}
		\mathcal{E}_{R} \approx  \frac{ C_{\alpha} \times \sigma_{M}}{ \text{Reference  Value} \times \sqrt{M}}
		\label{CLT_formula}
	\end{equation}
	where $C_{\alpha} = 1.96$ for $95 \%$ confidence level, $M$ is the number of MC samples, and $\sigma_{M}$ is the standard deviation of the quantity of interest.

The numerical results were obtained using a cluster machine with the following characteristics: clock speed 2.1 GHz,  \#CPU cores: 72, and memory per node 256 GB. Furthermore, the computer code is written in the MATLAB (version R2021b). The  ASGQ implementation was based on \url{https://sites.google.com/view/sparse-grids-kit} (For more details on the implementation we refer to \cite{piazzola.tamellini:SGK}).  

Through various tested examples, in section \ref{num_adap_sec}, we demonstrate the importance of sparsification and adaptivity in the Fourier space for accelerating quadrature convergence. Moreover, in section \ref{num_high_dim_damping_sec} , we reveal  the importance of the choice of the damping parameters  on the numerical complexity of the used quadrature methods. In Section \ref{ODHAQ_vs_COS}, we compare our approach against one of the state of the art Fourier-pricing methods, namely the COS method  \cite{fang2008novel,von2015benchop}, for 1D  and 2D cases, and we  show the advantage of our approach when the damping parameters are tuned appropriately. Finally,  in Section \ref{num_quad_vs_mc_sec}, we illustrate  that our approach  achieves substantial computational gains over the MC  method for different dimensions and parameter constellations to meet a certain relative error tolerance (of practical interest) that we set to be sufficiently small.
	\begin{table}[h!]
		\hspace{-2cm}
		\centering
		\small
		\begin{tabular}{|p{1.32cm} | p{1.75cm} | p{4.7cm} | p{2.7cm} |p{5.8cm} |}
		\hline
		 \small\textbf{Example }&\small	\textbf{Option} &\small\textbf{Parameters} & \small{$\underset{(95\%  \text{ Statistical Error)}}{\text{\textbf{Reference Value}}}$} &\small\textbf{ Optimal damping  parameters $\mathbf{\overline{R}}$   } 
			\\
			\hline 
		Example 1&	2D-Basket put & \small{$\boldsymbol{\sigma} = (0.4,0.4), \mathbf{C}= I_2, K = 100$}
			& $\underset{(8e^{-04})}{11.4474}$ & \small{$(2.5,2.5)$} \\
			\hline
	   Example 2&	2D-Basket put & \small{$\boldsymbol{\sigma} = (0.4,0.8), \mathbf{C}= I_2, K = 100$}
			& $\underset{(1.2e^{-03})}{17.831}$ & \small{$(2.1,1.2)$} \\
			\hline
		 Example 3&	2D-Call on min & \small{$\boldsymbol{\sigma} = (0.4,0.4),\mathbf{C}= I_2, K = 100$}
			& $\underset{(6e^{-04})}{3.4603}$ & \small{$(-3.4,-3.4)$} \\
			\hline
	Example	 4&	2D-Call on min & \small{$\boldsymbol{\sigma} = (0.4,0.8), \mathbf{C} = I_2, K = 100$}
			& $\underset{(8.2e^{-04})}{3.7411}$ & \small{$(-3.6,-1.8)$} \\
		\Xhline{7\arrayrulewidth}
		
	Example	5&	4D-Basket put & \small{$\boldsymbol{\sigma} = (0.4,0.4,0.4,0.4), \newline \mathbf{C} = I_4, K = 100$}
			& $\underset{(6e^{-04})}{8.193}$ & \small{$(2.1,2.1, 2.1,2.1)$} \\
			\hline
		Example	6&4D-Basket put  &  \small{$ \boldsymbol{\sigma}=(0.2,0.4,0.6,0.8), \newline \mathbf{C} = I_4, K = 100$}  &  $\underset{(8 e^{-04})}{11.3014}$ & \small{$ (2.4,1.9 ,1.5,1.2) $} 
			 \\
			\hline
		Example	7& 4D-Call on min   &  \small{$\boldsymbol{\sigma} = (0.4,0.4,0.4,0.4),\newline \mathbf{C} = I_4, K = 100$ }& $\underset{( 2  e^{-04})}{0.317}$& \small{$(-3.1,-3.1,-3.1,-3.1)$}  \\
			\hline
		Example	8&4D-Call on min  &   \small{$\boldsymbol{\sigma} = (0.2,0.4,0.6,0.8),\newline \mathbf{C}= I_4,K = 100$}  & $\underset{(1e^{-04})}{0.2382} $ &  \small{$ (-6.4,-3.1,-2.1,-1.6) $}
			 \\
		\Xhline{7\arrayrulewidth}
		Example	9&6D-Basket put  & \small{$\boldsymbol{\sigma}=(0.4,0.4,0.4,0.4,0.4,0.4),$} $ \mathbf{C}= I_6, K= 60$ & $\underset{(8.8  e^{-06})}{0.0041}    $ & \small{$(2.0, 2.0, 2.0, 2.0, 2.0, 2.0)$} 
			\\
			\hline
		Example	10&6D-Basket put   &  \small{$ \boldsymbol{\sigma} = (0.2,0.3,0.4,0.5,0.6,0.7),$ $ \mathbf{C}= I_6,K = 60$}  &  $ \underset{(1.8  e^{-05})}{0.012702 }  $ &  \small{$(2.3, 2.1,  1.9, 1.7, 1.5, 1.3) $}
			 \\
			\hline
		Example	11&6D-Call on min  &  \small{$\boldsymbol{\sigma} = (0.4,0.4,0.4,0.4,0.4,0.4),$ $ \mathbf{C}= I_6,K=100$}
			& $ \underset{(4.4  e^{-05})}{0.038} $ &  \small{$(-3.0, -3.0, -3.0, -3.0, -3.0, -3.0)$}\\
			\hline
		Example	12&6D-Call on min  & \small{$\boldsymbol{\sigma} = (0.2,0.3,0.4,0.5,0,6,0.7),$ $\mathbf{C}= I_6, K=100 $} 
			& $\underset{(3.7  e^{-05})}{0.0301}$    & \small{$(-6.0, -3.9, -3.0, -2.4, -2.0, -1.8)$}\\
			\hline
		\end{tabular}
		\caption{Examples  of multi-asset options under the multivariate GBM model. In all examples, $S_0^i=100, i=1,\ldots,d, T=1, r = 0.$ Reference values are computed with MC using $10^9$ samples, with  $95\%$ statistical error estimates reported between parentheses. $\bar{R}$ is rounded  to one decimal place.}
		\label{tab_mgbm}
	\end{table}
	\begin{table}[h!]
		\centering
		\small
		\begin{tabular}{|p{1.32cm} | p{1.75cm} | p{5.08cm} | p{2.7cm} |p{5.4cm} |}
			\hline
			\small\textbf{Example}&\small	\textbf{Option} &\small\textbf{Parameters} & \small{$\underset{(95\%  \text{ Statistical Error)}}{\text{\textbf{Reference Value}}}$} &\small\textbf{Optimal damping } 
			\\
			&&& &\small\textbf{parameters $\mathbf{\overline{R}}$  } 
			\\
			\hline 
			Example 13&	2D-Basket put & \small{$ \boldsymbol{\sigma}= (0.4,0.4)$,$\boldsymbol{\theta} =( -0.3,-0.3),$ \newline $\nu = 0.257, K = 100$}
			& $\underset{(1e^{-03})}{11.7589}$ & \small{$(1.7,1.7)$} \\
			\hline
		Example	14&	2D-Basket put & \small{$ \boldsymbol{\sigma}= (0.4,0.8)$,$\boldsymbol{\theta} =( -0.3,0),$ \newline $\nu = 0.257,  K = 100$}
			& $\underset{(1.2e^{-03})}{17.6688}$ & \small{$(1.7,1.0)$} \\
			\hline
		Example	15&	2D-Call on min & \small{$ \boldsymbol{\sigma}= (0.4,0.4)$,$\boldsymbol{\theta} =( -0.3,-0.3),$ \newline $\nu = 0.257,  K = 100$}
			& $\underset{(7e^{-04})}{3.9601}$ & \small{$(-3.5,-3.5)$} \\
			\hline
		Example	16&	2D-Call on min & \small{$ \boldsymbol{\sigma}= (0.4,0.8)$,$\boldsymbol{\theta} =( -0.3,0),$ \newline $\nu = 0.257,  K = 100$}
			& $\underset{(8e^{-04})}{3.3422}$ & \small{$(-4.0,-3.5)$} \\
		\Xhline{7\arrayrulewidth}
		Example	17&	4D-Basket put & \small{$ \boldsymbol{\sigma}= (0.4,0.4,0.4,0.4)$,\newline $\boldsymbol{\theta} =( -0.3,-0.3,-0.3,-0.3),$ \newline $\nu = 0.257,  K = 100$}
			& $\underset{(8e^{-04})}{8.9441}$ & \small{$(1.2,1.2, 1.2,1.2)$} \\
			\hline
		Example	18&4D-Basket put  &  \small{$ \boldsymbol{\sigma}= (0.2,0.4,0.6,0.8)$,\newline$\boldsymbol{\theta} =( -0.3,-0.2,-0.1,0)$,\newline $\nu = 0.257,  K = 100$}  &  $\underset{(8 e^{-04})}{11.2277}$ & \small{$ (1.6, 1.4, 1.1, 0.9) $} 
			\\
			\hline
	Example		19& 4D-Call on min   &  \small{$\boldsymbol{\sigma}= (0.4,0.4,0.4,0.4) $,\newline $\boldsymbol{\theta} =( -0.3,-0.3,-0.3,-0.3)$,\newline $ \nu = 0.257,  K = 100$ }& $\underset{( 2  e^{-04})}{0.6137}$& \small{$(-3.2,-3.2,-3.2,-3.2)$}  \\
			\hline
		Example	20&4D-Call on min  &   \small{$\boldsymbol{\sigma}= (0.2,0.4,0.6,0.8) $,\newline $\boldsymbol{\theta} =( -0.3,-0.2,-0.1,0),$ \newline $ \nu = 0.257, K = 100$}  & $\underset{(1e^{-04})}{0.2384} $ &  \small{$ (-6.6,-3.0,-2.0,-1.5) $}
			\\
		\Xhline{7\arrayrulewidth}
		Example	21&6D-Basket put  & \small{$\boldsymbol{\sigma}= (0.4,0.4,0.4,0.4,0.4,0.4)$, \newline $\boldsymbol{\theta}=-(0.3,0.3,0.3,0.3,0.3,0.3)$, $\nu = 0.257,K=60$} & $\underset{(1 e^{-06})}{0.1691}    $ & \small{$(1.1, 1.1, 1.1, 1.1, 1.1, 1.1)$} 
			\\
			\hline
		Example	22&6D-Basket put   &  \small{$ \boldsymbol{\sigma}= (0.2,0.3,0.4,0.5,0.6,0.7)$, \newline $\boldsymbol{\theta}=(-0.3,-0.2,-0.1,0,0.1,0.2)$,\newline $ \nu = 0.257,K=60$}  &  $ \underset{(5 e^{-05})}{0.04634 }  $ &  \small{$(2.1, 1.9, 1.7, 1.6, 1.4, 1.2) $}
			\\
			\hline
		Example	23&6D-Call on min  &  \small{$\boldsymbol{\sigma}= (0.4,0.4,0.4,0.4,0.4,0.4)$, \newline $\boldsymbol{\theta}=-(0.3,0.3,0.3,0.3,0.3,0.3)$,\newline $\nu = 0.257,K=100$}
			& $ \underset{(1 e^{-04})}{0.16248} $ &  \small{$ (-3.1, -3.1, -3.1, -3.1, -3.1, -3.1) $}\\
			\hline
		Example	24&6D-Call on min  & \small{$\boldsymbol{\sigma}= (0.2,0.3,0.4,0.5,0.6,0.7)$, \newline $\boldsymbol{\theta}=(-0.3,-0.2,-0.1,0,0.1,0.2)$,\newline $\nu = 0.257,K=100$}
			& $\underset{(4  e^{-05})}{0.02269 }$    & \small{$(-6.5, -3.7, -2.6, -2.0, -1.7, -1.4)$}\\
			\hline
		\end{tabular}
		\caption{Examples  of multi-asset options under the multivariate VG model.  In all examples, $S_0^i=100, i=1,\ldots,d, T=1, r = 0.$ Reference values are computed with MC using $10^9$ samples, with  $95\%$ statistical error estimates reported between parentheses. $\bar{R}$ is rounded  to one decimal place.}
		\label{tab_mvg}
	\end{table}
	\begin{table}[h!]
		\centering
		\small
		\begin{tabular}{|p{1.32cm} | p{1.75cm} | p{4.7cm} | p{2.7cm} |p{5.8cm} |}
			\hline
			\small\textbf{Example}&\small	\textbf{Option} &\small\textbf{Parameters} & \small{$\underset{(95\%  \text{ Statistical Error)}}{\text{\textbf{Reference Value}}}$} &\small\textbf{ Optimal damping  parameters $\mathbf{\overline{R}}$   } 
			\\
			\hline 
		Example	25&	2D-Basket put & \small{$\boldsymbol{\beta}=(-3,-3), \alpha=15,$\newline $ \delta = 0.2, \boldsymbol{ \Delta= I_2}, K = 100$}
			& $\underset{(3e^{-04})}{3.3199}$ & \small{$(6.1,6.1)$} \\
			\hline
	Example	26&	2D-Basket put & \small{$\boldsymbol{\beta}=(-3,0), \alpha=10,$\newline $ \delta = 0.2, \boldsymbol{ \Delta= I_2}, K = 100$}
			& $\underset{(4e^{-04})}{3.8978}$ & \small{$(4.6,4.8)$} \\
			\hline
		 Example 27&	2D-Call on min & \small{$\boldsymbol{\beta}=(-3,-3), \alpha=15,$\newline $ \delta = 0.2, \boldsymbol{ \Delta= I_2}, K = 100$}
			& $\underset{(2e^{-04})}{1.2635}$ & \small{$(-9.9,-9.9)$} \\
			\hline
	Example	28&	2D-Call on min & \small{$\boldsymbol{\beta}=(-3,0), \alpha=10,$\newline $ \delta = 0.2, \boldsymbol{ \Delta= I_2}, K = 100$}
			& $\underset{(2e^{-04})}{1.4476}$ & \small{$(-7.5,-6.8)$} \\
		\Xhline{7\arrayrulewidth}
	Example	29&	4D-Basket put & \small{$\boldsymbol{\beta}=(-3,-3,-3,-3), \alpha=15,$\newline $ \delta = 0.4, \boldsymbol{ \Delta= I_4}, K = 100$}
			& $\underset{(3e^{-04})}{2.554}$ & \small{$(4.0,4.0, 4.0,4.0)$} \\
			\hline
	Example	30&4D-Basket put  &  \small{$\boldsymbol{\beta}=(-3,-2,-1,0),\alpha=15$,\newline$ \delta = 0.4, \boldsymbol{ \Delta= I_4}, K = 100$}  &  $\underset{(3 e^{-04})}{3.307}$ & \small{$ (4.0,4.2,4.2,4.2) $} 
			 \\
			\hline
	Example	31& 4D-Call on min  &  \small{$\boldsymbol{\beta}=(-3,-3,-3,-3),\alpha=15 $,\newline$ \delta = 0.4, \boldsymbol{ \Delta= I_4}, K = 100$ }& $\underset{( 5 e^{-05})}{0.17374}$& \small{$(-8.8,-8.8,-8.8,-8.8)$}  \\
			\hline
	Example	32&4D-Call on min  &   \small{$\boldsymbol{\beta}=(-3,-2,-1,0),\alpha=15$,\newline$ \delta = 0.4, \boldsymbol{ \Delta= I_4}, K = 100$}  & $\underset{(7e^{-05})}{0.20327} $ &  \small{$(-6.5,-6.4,-6.3,-6.2)$}
			 \\
		\Xhline{7\arrayrulewidth}
	Example	33&6D-Basket put  & \small{$\boldsymbol{\beta}=(-3,-3,-3,-3,-3,-3)$,\newline$ \alpha=15, \delta = 0.2, \boldsymbol{ \Delta= I_6},K=80$} & $\underset{(2  e^{-05})}{0.01039}    $ & \small{$(3.1, 3.1, 3.1, 3.1, 3.1, 3.1) $} 
			\\
			\hline
		Example	34&6D-Basket put   &  \small{$ \boldsymbol{\beta}=(-3,-2,-1,0,1,2)$,\newline$ \alpha=15, \delta = 0.2, \boldsymbol{ \Delta= I_6},K=80$}  &  $ \underset{(3 e^{-06})}{4.39e^{-04} }  $ &  \small{$  (4.5, 4.6, 4.7, 4.8, 4.8, 4.9)  $}
			 \\
			\hline
		Example	35&6D-Call on min  &  \small{$\boldsymbol{\beta}=(-3,-3,-3,-3,-3,-3)$,\newline $ \alpha=15, \delta = 0.2, \boldsymbol{ \Delta= I_6}, K = 110$}
			& $ \underset{(4 e^{-06})}{6.034e^{-05}} $ &  \small{$(-4.0, -4.0, -4.0, -4.0, -4.0, -4.0)$}\\
			\hline
		Example	36&6D-Call on min  & \small{$\boldsymbol{\beta}=(-3,-2,-1,0,1,2),\alpha=15$,  $ \delta = 0.2, \boldsymbol{ \Delta= I_6}, K = 110 $} 
			& $\underset{(2  e^{-06})}{1.572e^{-04}}$    & \small{$(-3.2, -3.2, -3.1, -3.2, -3.2, -3.2) $}\\			\hline
		\end{tabular}
		\caption{Examples  of multi-asset options under the multivariate NIG model. In all examples, $S_0^i=100, i=1,\ldots,d, T=1, r = 0.$ Reference values are computed with MC using $10^9$ samples, with  $95\%$ statistical error estimates reported between parentheses. $\bar{R}$ is rounded  to one decimal place.}
		\label{tab_mnig}
	\end{table}
\FloatBarrier
\subsection{Combining the optimal damping heuristic rule with hierarchical deterministic quadrature methods}
\label{num_quad_comparison_sec}

\FloatBarrier
\subsubsection{Effect of sparsification and dimension-adaptivity}
\label{num_adap_sec}
In this section, we analyze the effect of dimension adaptivity and sparsification on the acceleration of the convergence of the relative quadrature error, $\mathcal{E}_{R}$.  We elaborate on the comparison between the TP, SM, and ASGQ methods when optimal damping parameters are used. Table \ref{cpu_tab}   summarizes  these findings.   Through the numerical experiments,   ASGQ consistently outperformed SM. Moreover, for the  $2$D options, the performance of the ASGQ and TP methods is model-dependent, with ASGQ being the best method for options under the GBM model. For $d=4$, for options under the  GBM and VG models,  ASGQ performs better than TP, which is not the case for options under the NIG model. As for $6$D options,  ASGQ performs better than TP in most cases. These observations confirm that the effect of adaptivity and sparsification becomes more important as the dimension of the option increases.    For the sake of illustration,  Figures \ref{gbm_quad}, \ref{vg_quad}, \ref{nig_quad} compare ASGQ and TP for  $4$D options with anisotropic parameter sets under different pricing models when optimal damping parameters are used. Figure \ref{basket_gbm_4D_aniso_EC} reveals that, for the 4D-basket put option under the GBM model,  the ASGQ method achieves $\mathcal{E_R}$ below $1\%$ using $13.3 \%$ of the work of the TP quadrature. Moreover, 
Figure 	\ref{basket_vg_4D_aniso_EC} indicates that, for the 4D-basket put option under the VG model,  the ASGQ method  achieves $\mathcal{E_R}$ below $0.1\%$ using $25 \%$ of the work of the TP quadrature. In contrast,  for  the 4D-basket put option under the NIG model, Figure \ref{basket_nig_4D_aniso_EC}  reveals that the TP quadrature attains $\mathcal{E_R}$ below $0.1\%$ using $10 \%$ of the work of the ASGQ.
\begin{figure}[h!]
 \centering	
 	\begin{subfigure}{0.4\textwidth}
\includegraphics[width=\linewidth]{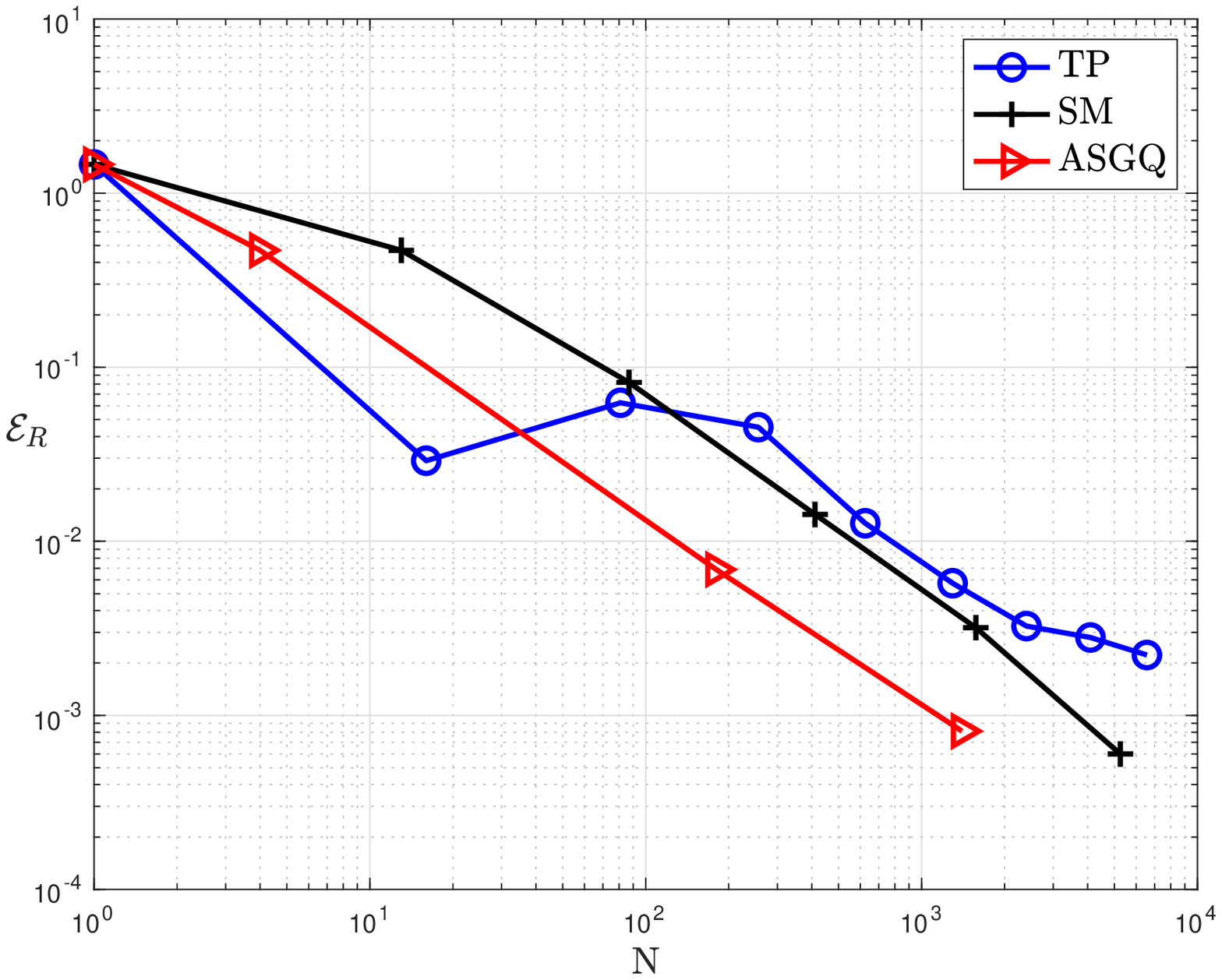}
\caption{Example 6 in Table \ref{tab_mgbm}}
\label{basket_gbm_4D_aniso_EC}
 	\end{subfigure}
  	\begin{subfigure}{0.4\textwidth}
\includegraphics[width=\linewidth]{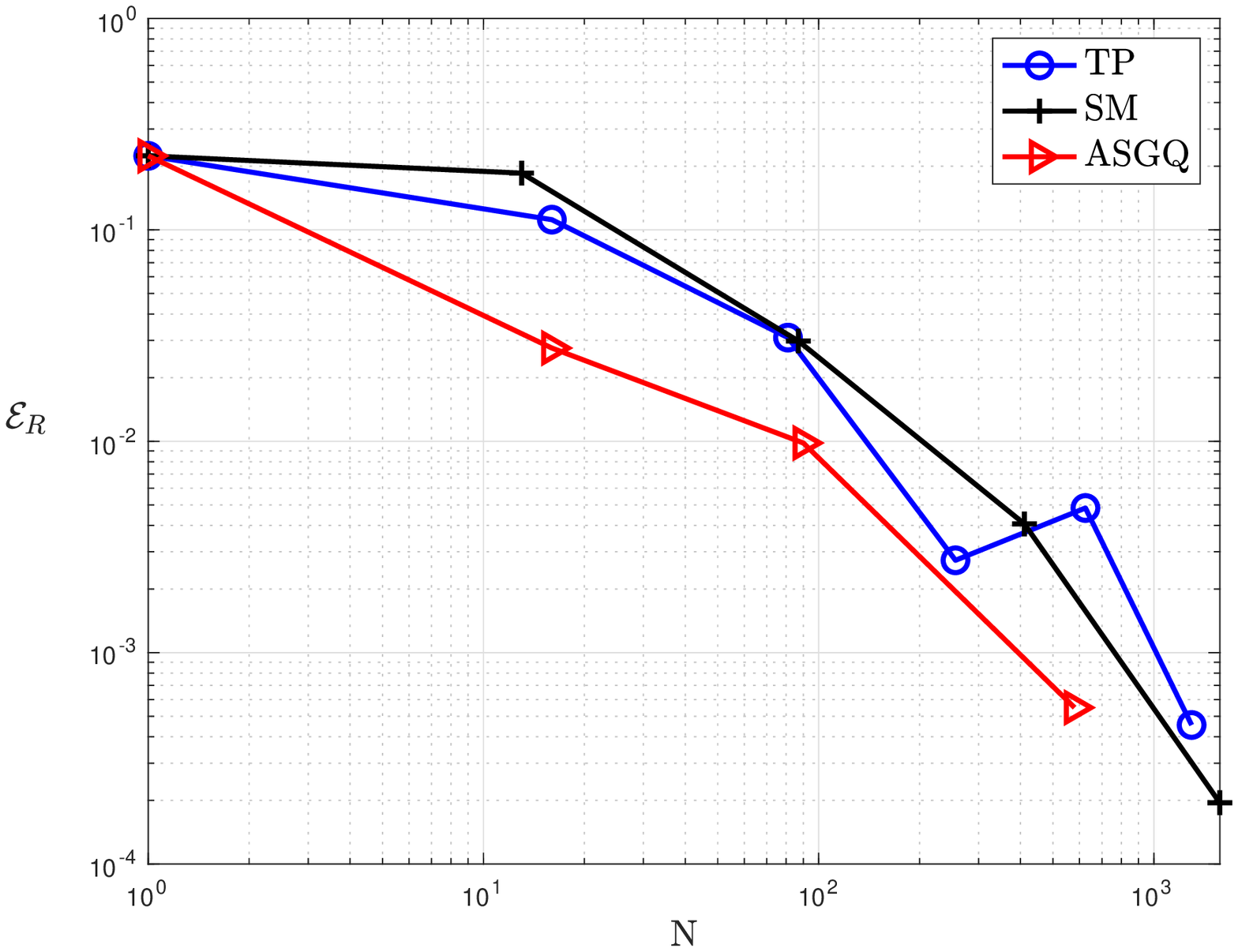}
\caption{Example 8 in Table \ref{tab_mgbm}}
\label{rainbow_gbm_4D_aniso_EC}
 	\end{subfigure}
 \caption{GBM: Convergence of the relative quadrature error, $\mathcal{E}_{R}$, w.r.t.~$N$ for TP, SM and  ASGQ  methods for European $4$-asset options, when  optimal damping parameters, $\mathbf{\overline{R}}$, are used.}
 \label{gbm_quad}
\end{figure}

\begin{figure}[h!]
	\centering	
  	\begin{subfigure}{0.4\textwidth}
 	\includegraphics[width=\linewidth]{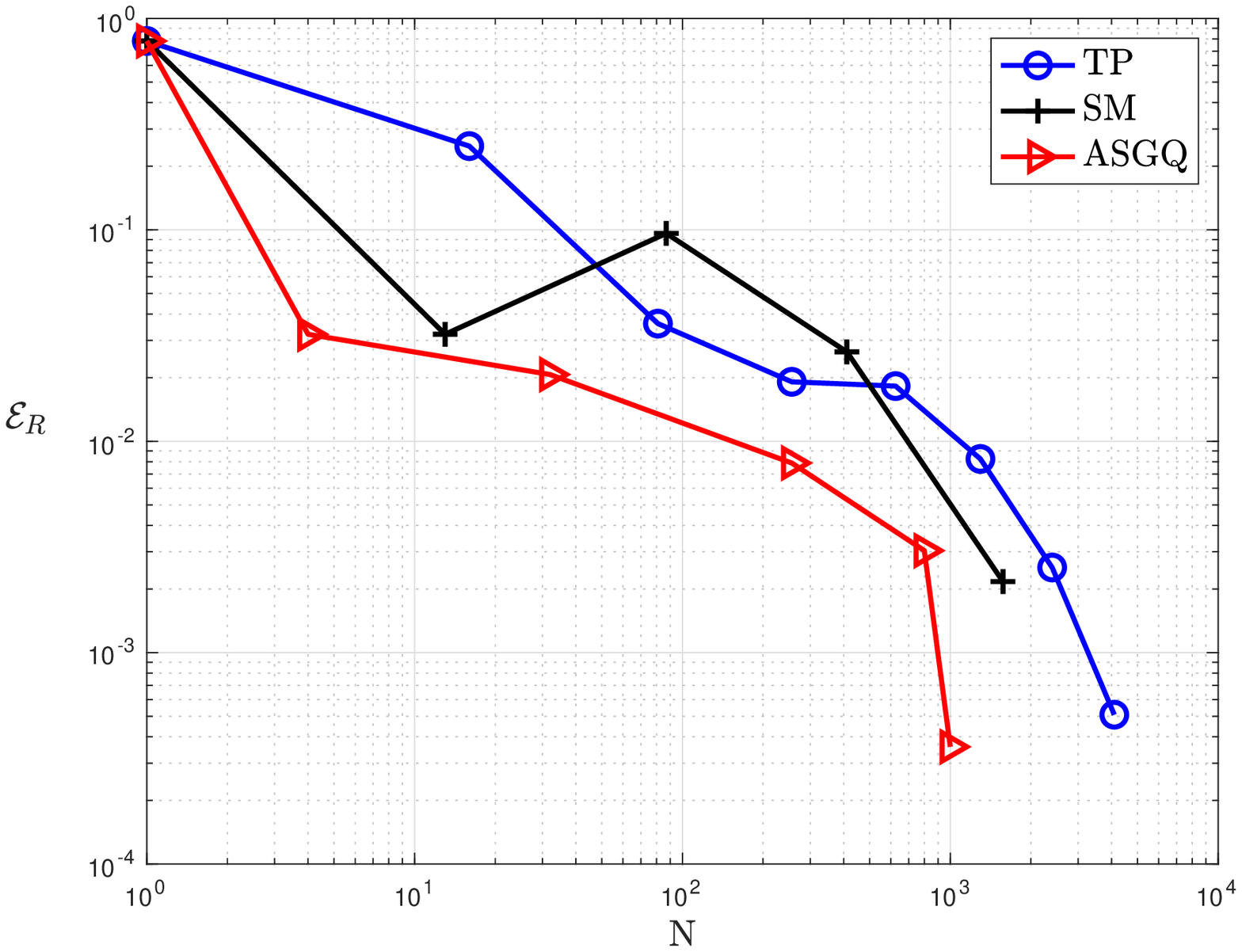}
 	\caption{Example 18 in Table \ref{tab_mvg}}
 	\label{basket_vg_4D_aniso_EC}
 \end{subfigure}
 \begin{subfigure}{0.4\textwidth}
 	\includegraphics[width=\linewidth]{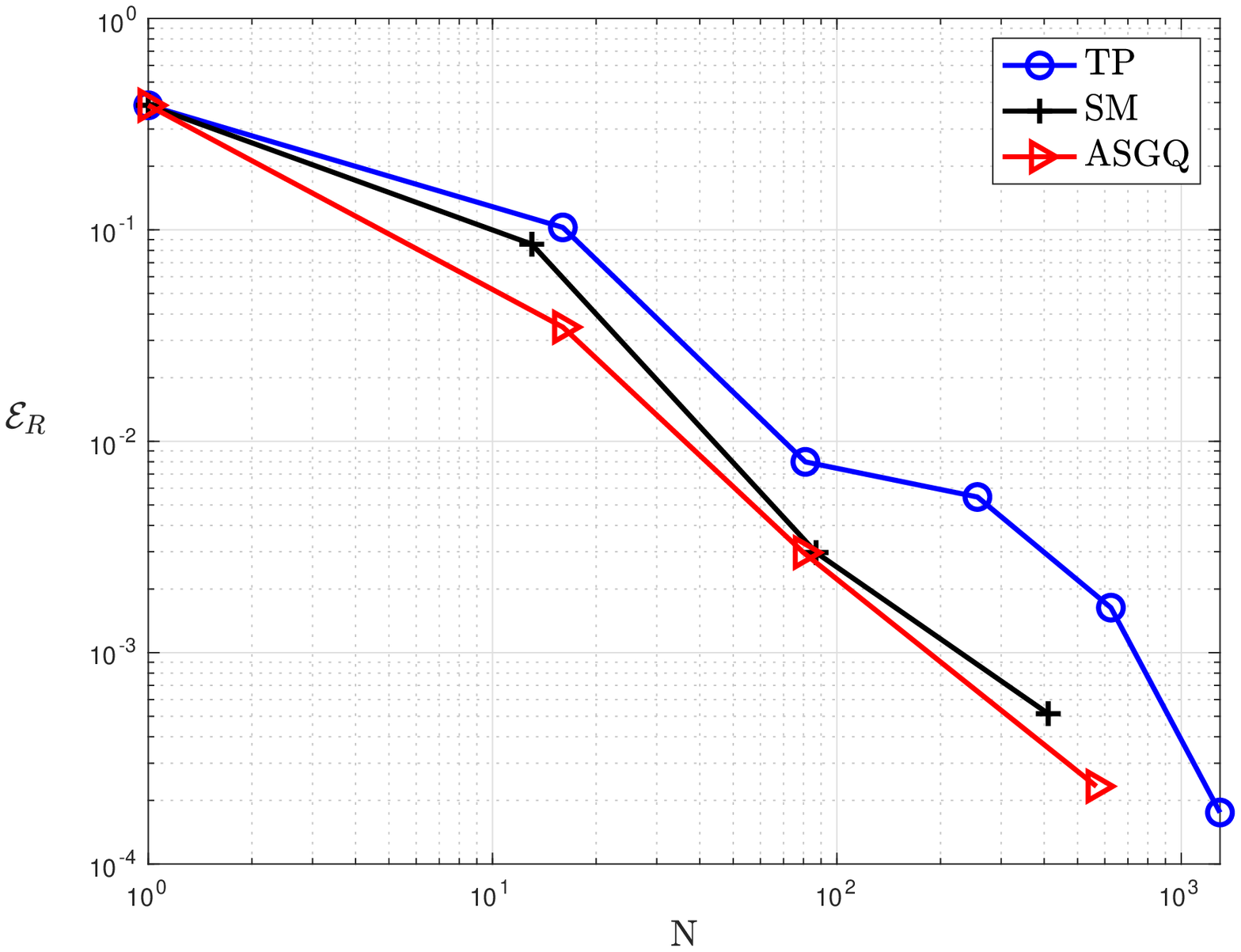}
 	\caption{Example 20 in Table \ref{tab_mvg}}
 	\label{rainbow_vg_4D_aniso_EC}
 \end{subfigure}
\caption{VG: convergence of the relative quadrature error, $\mathcal{E}_{R}$, w.r.t.~$N$ for TP, SM and  ASGQ  methods for European $4$-asset options, when  optimal damping parameters, $\mathbf{\overline{R}}$, are used.}
\label{vg_quad}
\end{figure}

\begin{figure}[h!]
	\centering	
	\begin{subfigure}{0.4\textwidth}
	\includegraphics[width=\linewidth]{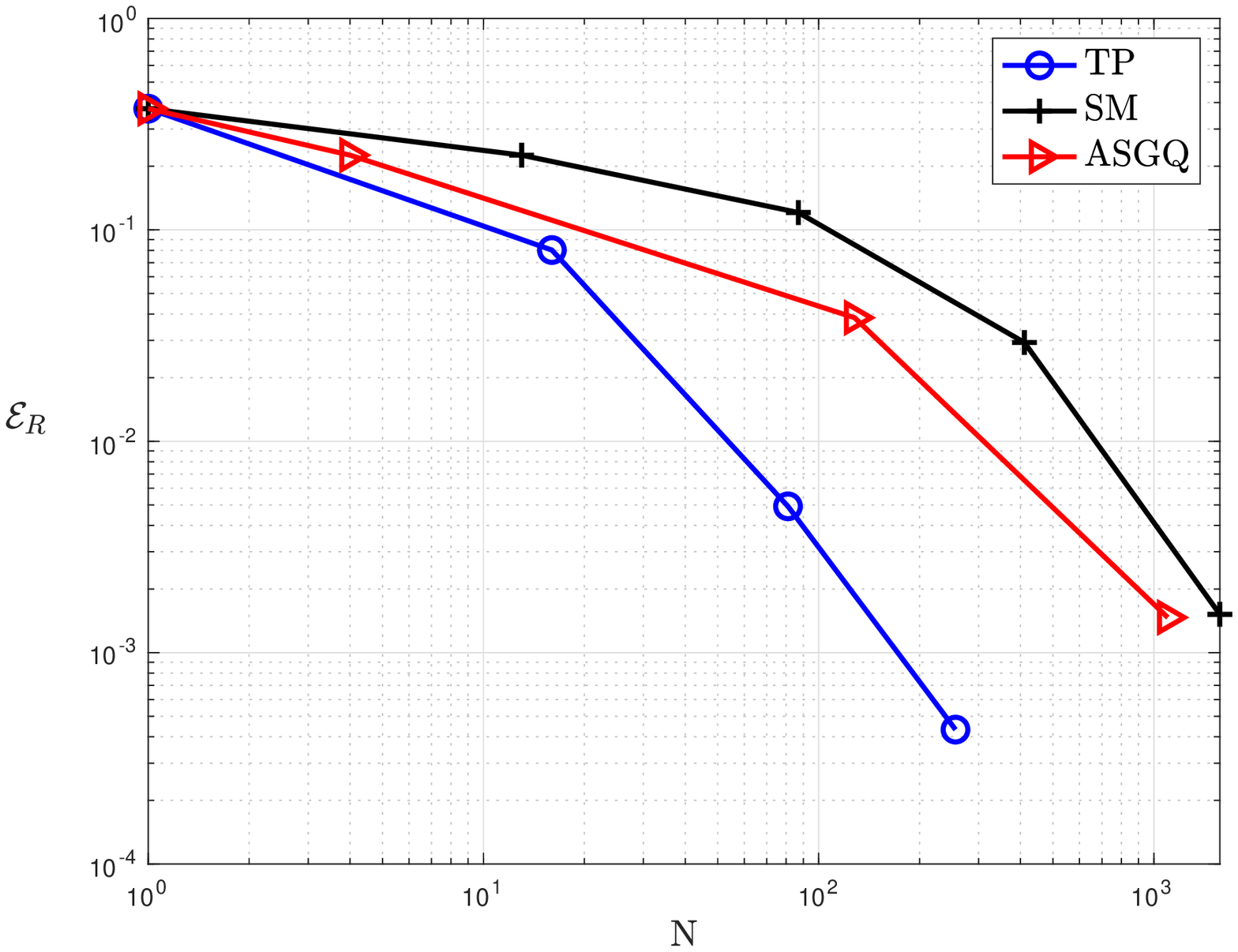}
	\caption{Example 30 in Table \ref{tab_mnig}}
	\label{basket_nig_4D_aniso_EC}
\end{subfigure}
\begin{subfigure}{0.4\textwidth}
	\includegraphics[width=\linewidth]{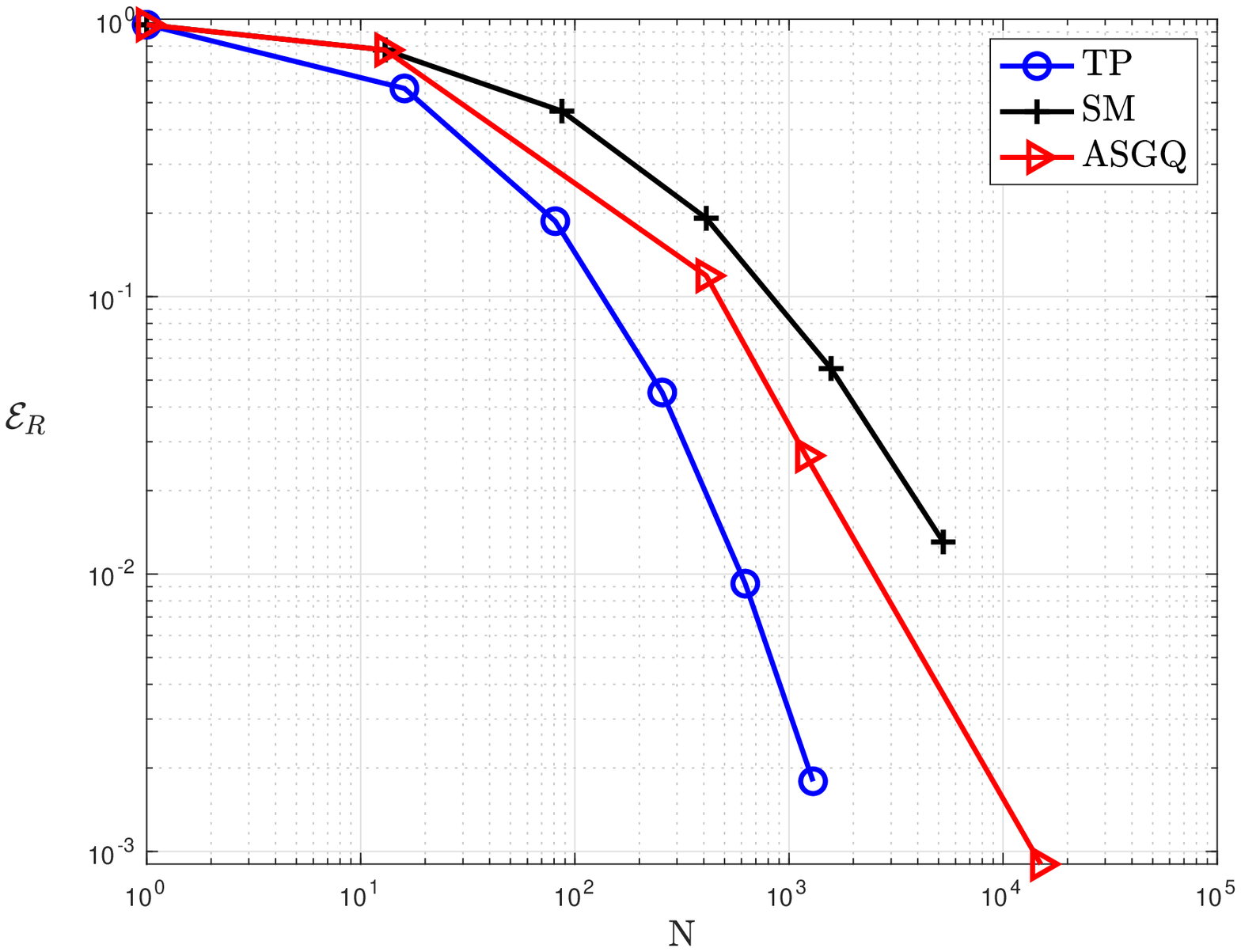}
	\caption{Example 32 in Table \ref{tab_mnig} }
	\label{rainbow_nig_4D_aniso_EC}
\end{subfigure}
\caption{NIG: convergence of the relative quadrature error, $\mathcal{E}_{R}$, w.r.t.~$N$ for TP, SM and  ASGQ  methods for European $4$-asset options, when  optimal damping parameters, $\mathbf{\overline{R}}$, are used.}
\label{nig_quad}
\end{figure}
\FloatBarrier
\subsubsection{Effect of the optimal damping rule}
\label{num_high_dim_damping_sec}
 In this section, we present the computational benefit of using the optimal damping rule  proposed in Section \ref{damping_section}   on the convergence speed of the relative quadrature error of various   methods when pricing the multi-asset European basket and rainbow options. Figures \ref{mgbm_damping}, \ref{mvg_damping}, and \ref{mnig_damping} illustrate that the optimal damping parameters lead to substantially better error convergence behavior.  For instance, Figure \ref{basket_gbm_4D_aniso_damping} reveals that, for  the $4$D-basket put option under the GBM model, ASGQ achieves $\mathcal{E_R}$ below $0.1 \%$ using  around $N=1500$ quadrature points when using optimal damping parameters, compared to around  $N=5000$ points to achieve a similar accuracy for  damping parameters shifted by $+1$ in each direction w.r.t.~the optimal values. When using  damping parameters shifted by $+2$ in each direction w.r.t.~the optimal values, we do not reach $\mathcal{E_R}= 10 \%$, even using $N=5000$ quadrature points.  Similarly, for the  $4$D-call on min  option under the VG model,  Figure  \ref{rainbow_vg_4D_aniso_damping} illustrates that ASGQ achieves $\mathcal{E_R}$ below $0.1 \%$ using  around $N=500$ quadrature points when using the optimal damping parameters.  In contrast, ASGQ cannot achieve $\mathcal{E_R}$ below $1\%$ when using damping parameters shifted by $-1$ in each direction w.r.t.~the optimal values with the same number of quadrature points. Finally,  for the $4$D-basket put option under the NIG model, Figure  \ref{basket_nig_4D_aniso_damping}  illustrates that, when using the optimal damping parameters, the TP quadrature crosses  $\mathcal{E_R}= 0.1 \%$ using $22 \%$ of the work it would have used with damping parameters shifted by $-2$  in each direction w.r.t.~the optimal values.   
 
 	In summary,  in all experiments,  small shifts in both directions  w.r.t.~the optimal damping parameters  lead to worse  error convergence behavior, suggesting that the region of optimality of the damping parameters is tight and  that our rule is  sufficient to obtain optimal quadrature convergence behavior, independently of the  method. Moreover,  arbitrary choices of damping parameters may lead to extremely poor convergence of the quadrature,  as illustrated by the   purple  curves in  Figures \ref{basket_gbm_4D_aniso_damping},\ref{rainbow_gbm_4D_aniso_damping},  \ref{basket_vg_4D_aniso_damping} and \ref{rainbow_nig_4D_aniso_damping}.  All compared damping parameters belong to the strip of regularity of the integrand $\delta_V$ defined in Section \ref{sec:Problem Setting and Pricing Framework}. Finally,  although we only provide  some plots   to illustrate these findings,  the same conclusions were  consistently observed for 
 	different models  and damping parameters.
 	\vspace{-0.2cm}
\begin{figure}[h!]
 \centering	
 	\begin{subfigure}{0.4\textwidth}
\includegraphics[width=\linewidth]{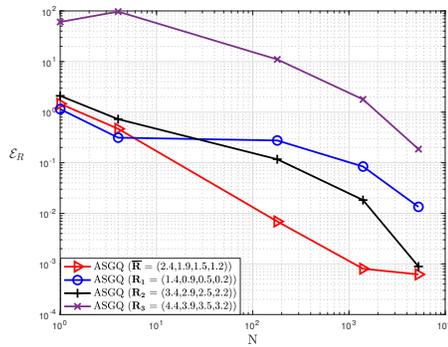}
 	\caption{Example 6 in Table \ref{tab_mgbm}}
\label{basket_gbm_4D_aniso_damping}
\end{subfigure}
 	\begin{subfigure}{0.4\textwidth}
\includegraphics[width=\linewidth]{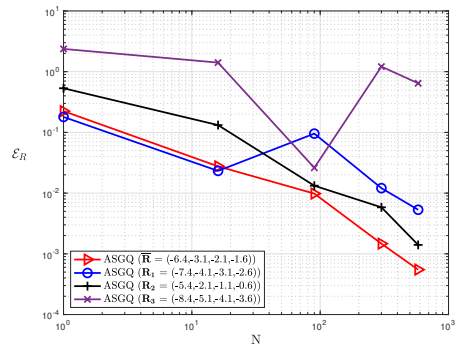}
\caption{Example 8 in Table \ref{tab_mgbm}}
\label{rainbow_gbm_4D_aniso_damping}
\end{subfigure}
	\caption{GBM: convergence of the relative quadrature error, $\mathcal{E}_{R}$, w.r.t.~$N$ for   the ASGQ  method  for different   damping parameter values. } 
	\label{mgbm_damping}
\end{figure}
\begin{figure}[h!]
 \centering	
 	\begin{subfigure}{0.4\textwidth}
\includegraphics[width=\linewidth]{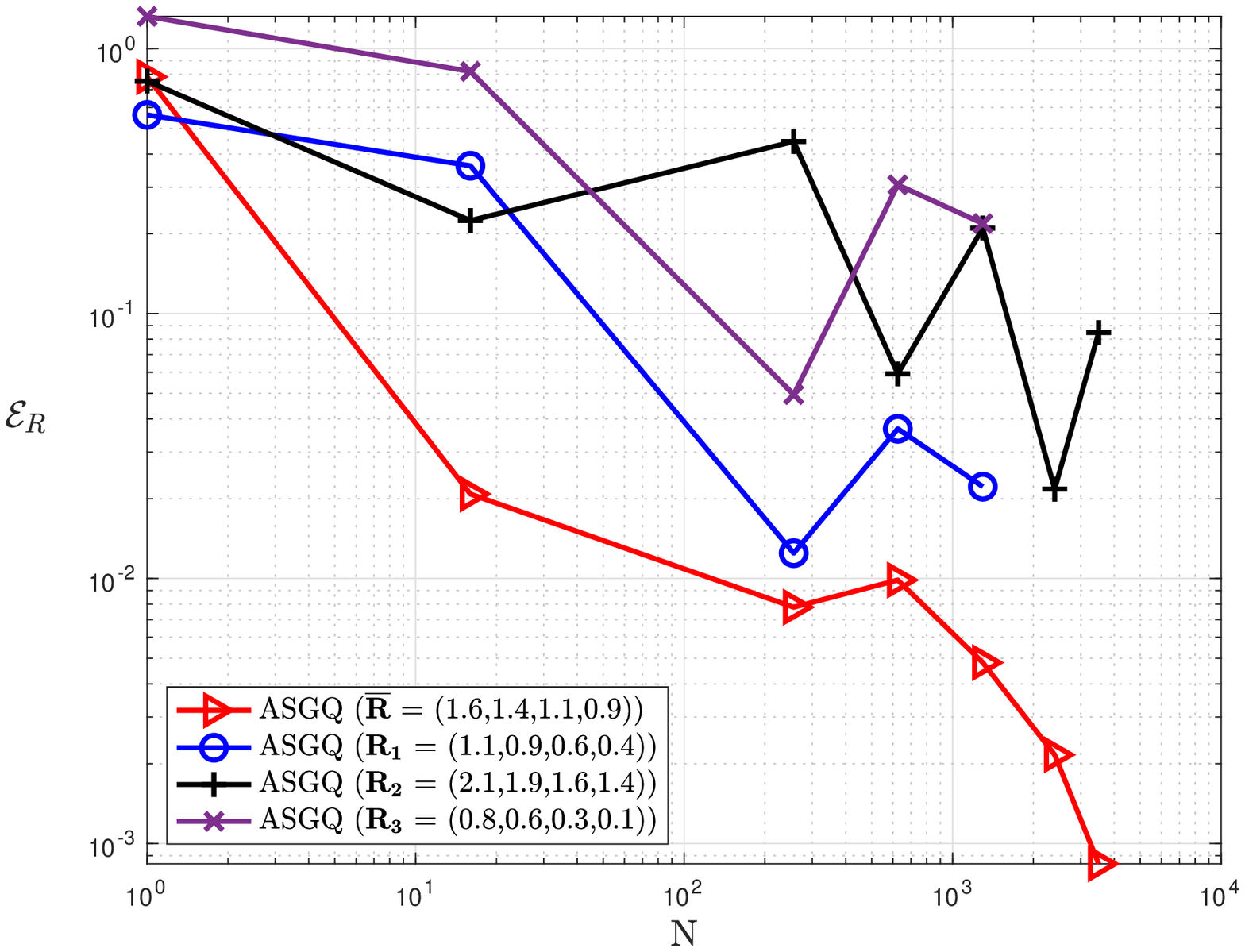}
 	\caption{Example 18 in Table \ref{tab_mvg}}
\label{basket_vg_4D_aniso_damping}
\end{subfigure}
 	\begin{subfigure}{0.4\textwidth}
\includegraphics[width=\linewidth]{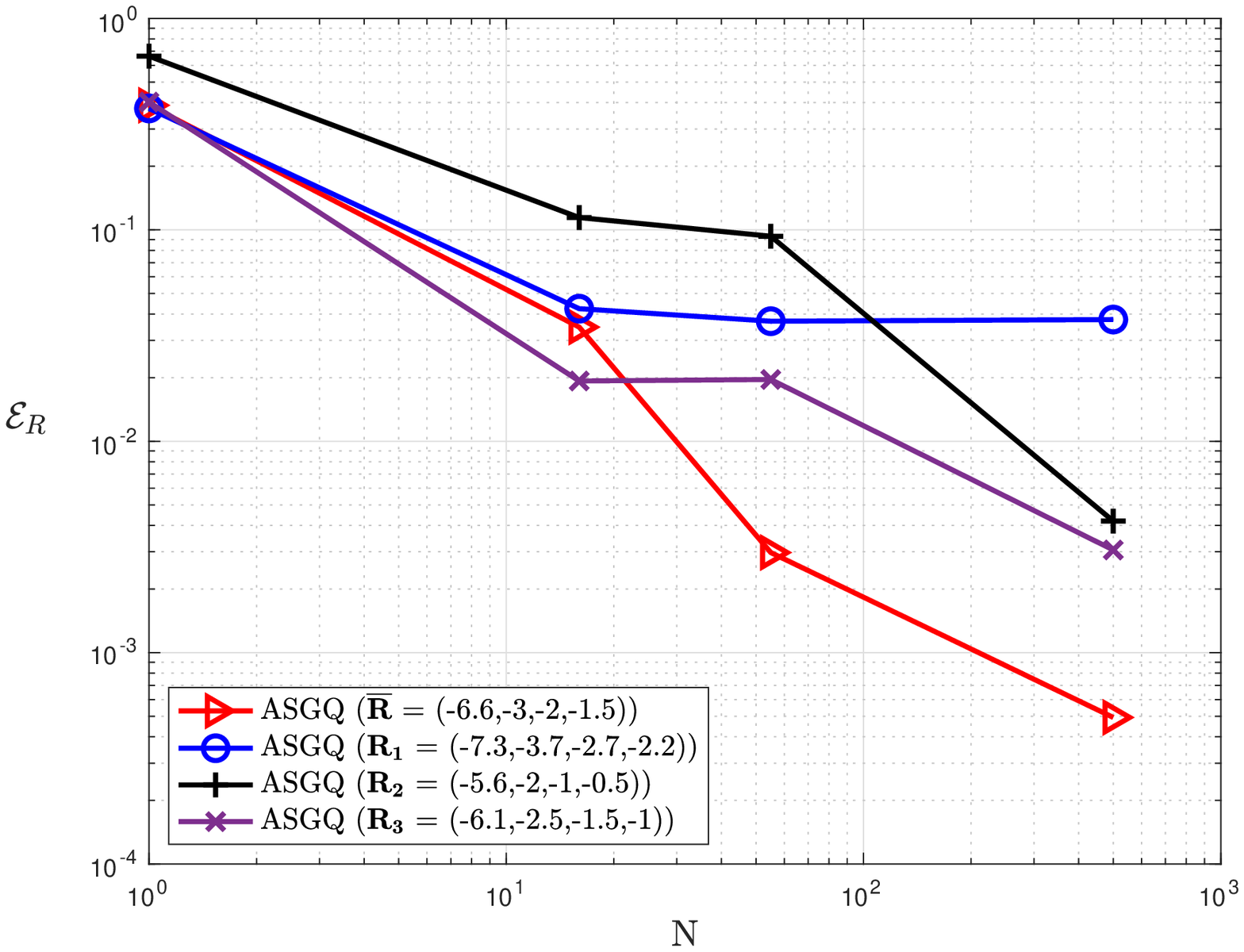}
 	\caption{Example 20 in Table \ref{tab_mvg}}
 	\label{rainbow_vg_4D_aniso_damping}
\end{subfigure}
	\caption{VG: convergence of the relative quadrature error, $\mathcal{E}_{R}$, w.r.t.~$N$ for   the ASGQ  method  for different damping parameter values.} 
		\label{mvg_damping}
\end{figure}
\begin{figure}[h!]
 \centering	
  	\begin{subfigure}{0.4\textwidth}
\includegraphics[width=\linewidth]{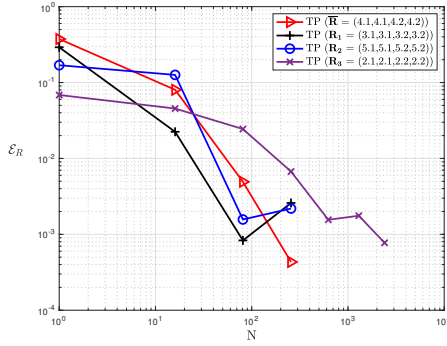}
 	\caption{Example 30 in Table \ref{tab_mnig}}
 	\label{basket_nig_4D_aniso_damping}
\end{subfigure}
  	\begin{subfigure}{0.4\textwidth}
\includegraphics[width=\linewidth]{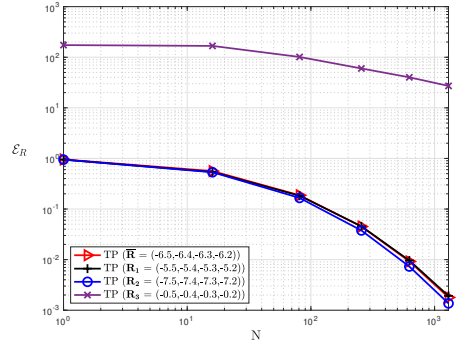}
 	\caption{Example 32 in Table \ref{tab_mnig} }
 	\label{rainbow_nig_4D_aniso_damping}
\end{subfigure}
	\caption{NIG:  convergence of the relative quadrature error, $\mathcal{E}_{R}$, w.r.t.~$N$ for   the TP  method  for different   damping parameter values.} 
		\label{mnig_damping}
\end{figure}
\FloatBarrier
\subsection{Comparison of our approach  to the COS Method}
\label{ODHAQ_vs_COS}
This section presents an empirical comparison of our proposed approach  with the COS method \cite{fang2008novel}. Our aim is to compare the performance of the two methods when both of them are appropriately tuned, which seems to be lacking in benchmarking works in the literature \cite{von2015benchop,crisostomo2018speed} due to the absence of guidance on the suitable choice of the damping parameters. We use two metrics for the comparison of the approaches (i) the CPU time required to achieve a pre-defined relative error, $\mathcal{E}_R = 1e^{-03}$, (see Tables \ref{tab:cpu_cos_vs_odhaq_2DCOM}, \ref{tab:cpu_cos_vs_odhaq_2DPUT}) and (ii) the number of times the characteristic function is evaluated, $\text{N}_{CF}$, to reach this accuracy (see Figures \ref{fig:ODTPQ_vs_COS_2D_VG}, \ref{fig:ODTPQ_vs_COS_2D_NIG}). The second metric is particularly interesting when the costly part of the approximation formula is the evaluation of the characteristic function, and has the merit of being independent of the implementation and the used computer characteristics. 
In this work, the numerical comparison of our approach with the COS method is restricted for options with up to two underlyings because the implementation of the COS method in higher dimensions is not available to us. Since the CPU time needed to achieve a certain accuracy highly depends on the way the methods are implemented, we did not use the sparse grids kit \cite{piazzola.tamellini:SGK} for the comparison. To the extent possible, both of the methods were implemented in similar style to have reproducible results. For the sake of fair comparison, we compare the optimal damping rule with isotropic TP quadrature and Gauss-Laguerre rule, which we denote ODTPQ for short, with the isotropic version of the COS method. The reported CPU times in Tables \ref{tab:cpu_cos_vs_odhaq_2DCOM},  \ref{tab:cpu_cos_vs_odhaq_2DPUT} are given in seconds, and are computed from the average over $10^3$ replications of each experiment.  Moreover, the ODTPQ CPU time in Tables \ref{tab:cpu_cos_vs_odhaq_2DCOM},  \ref{tab:cpu_cos_vs_odhaq_2DPUT} includes both the cost of the quadrature and the cost of the optimization to obtain the damping parameters, $\overline{\mathbf{R}}$. Reference values are computed using MC with $M = 10^9$ samples.
\subsubsection{Implementation details of the COS method}
The 2D-COS formula to approximate the option value is given by \cite{ruijter2012two}
\begin{equation}
	\begin{aligned}
	V(\boldsymbol{\Theta}_m, \boldsymbol{\Theta}_p) &\approx e^{-rT} \frac{(b-a)^2}{4} \sum_{k_1 = 0}^{N_{COS}}  \sum_{k_2 = 0}^{N_{COS}} \frac{1}{2} \biggl( \Re \biggl\{   \phi_{\boldsymbol{X}_T}\left( \frac{k_1 \pi}{ b-a},  \frac{k_2 \pi}{b-a} \right) e^{\mathrm{i} k_1 \pi \frac{X_0^1 - a}{b - a} + \mathrm{i} k_2 \pi \frac{X_0^2 - a}{b - a}  }  \biggl\}  \\
	&+   \Re \biggl\{ \phi_{\boldsymbol{X}_T}\left( \frac{k_1 \pi}{ b-a},  -\frac{k_2 \pi}{b-a}  \right)    e^{\mathrm{i} k_1 \pi \frac{X_0^1 - a}{b - a} - \mathrm{i} k_2 \pi \frac{X_0^2 - a}{b - a}  }  \biggl\} \biggl) P_{k_1, k_2}(T),
	\end{aligned}
\end{equation}
where $\Phi_{\boldsymbol{X}_T}(\mathbf{u}) =e^{\mathrm{i} \langle \mathbf{u}, \boldsymbol{X}_0 \rangle} \phi_ {\boldsymbol{X}_T}(\mathbf{u})$ is the characteristic function (see Table \ref{table:chf_table}), and $P_{k_1,k_2}$ are the Fourier cosine coefficients of the payoff function $P(\cdot)$. We use the isotropic version  where the number of Fourier modes is the same in each dimension, $N_1 = N_2 = N_{COS}$.  For the truncation range, we use the domain $[a,b]^d$ as suggested in Section 5.2 in \cite{ruijter2012two}, which is given by 
\begin{equation}
	\begin{aligned}
		a &= \min_{1 \leq i \leq 2} \biggl\{ X_0^i +  c^i_1-L \sqrt{c^i_2+\sqrt{c^i_4}} \biggl\}\\
	 b &=  \max_{1 \leq i \leq 2}  \biggl\{  X_0^i +c^i_1+L \sqrt{c^i_2+\sqrt{c^i_4}}  \biggl\}
\end{aligned}
\end{equation}
where $c_n^i$ is the $n$th cumuant of the random variable $X_T^i$ and $L=10$. For the table of cumulants used in this work, we refer to \cite{oosterlee2019mathematical}. 
In the case of 2D-basket put and 2D-call on min options, we approximate $\{P_{k_1, k_2}\}_{k_1, k_2 = 0}^{N_{COS}-1}$ numerically using discrete cosine transform (dct2 function in MATLAB). The number of terms used in each spatial dimension of the DCT approximation is denoted by $Q$ (see Section 3.2.1 in \cite{ruijter2012two} for more details). 	To the best of our knowledge, there is no rule for the choice of $Q$. In \cite{ruijter2012two}, authors solely state that the number of terms used in the DCT to approximate the payoff cosine coefficients, $Q$, must satisfy $ Q \geq \max(N_1,N_2) = N_{COS}$. We observed that the choice $Q = N_{COS}$,  may result in oscillatory behavior of the error convergence. For this reason, we used $Q = 1000$ to ensure that the payoff cosine coefficients are approximated accurately, as shown in \cite{ruijter2012two}.  We note that $\text{N}_{CF} = 2^{d-1} N_{COS}^d$ in $d$ dimensions.
\FloatBarrier
\subsubsection{Numerical experiments}
Figures \ref{fig:ODTPQ_vs_COS_2D_VG} and \ref{fig:ODTPQ_vs_COS_2D_VG} show that if the damping parameters are chosen appropriately based on the proposed rule in \eqref{or_opt}, our approach achieves  a desired relative error with significantly less characteristic functions evaluations, $\text{N}_{CF}$ , for both tested 2D-call on min and 2D-basket put options under VG and NIG. For instance, Figure \ref{fig:ODTPQ_vs_COS_2D_VG_COM} shows that the COS method achieves $\mathcal{E}_R = 1e^{-03}$ using $\text{N}_{CF} = 25538$, whereas  ODTPQ reaches $\mathcal{E}_R = 6 e^{-04}$ using only $\text{N}_{CF} = 108$.
\FloatBarrier
\begin{figure}[h!]	\centering		
		\begin{subfigure}{0.4\textwidth}		
		\includegraphics[width=\linewidth]{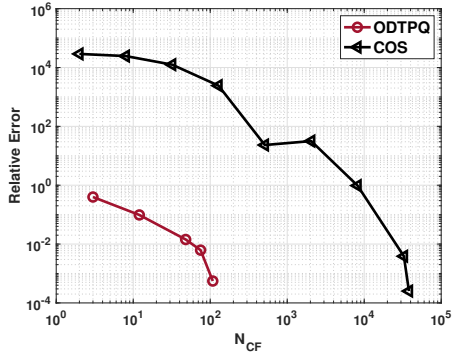}	
		\caption{2D-call on min}
\label{fig:ODTPQ_vs_COS_2D_VG_COM}
	\end{subfigure}
	\begin{subfigure}{0.4\textwidth}		\includegraphics[width=\linewidth]{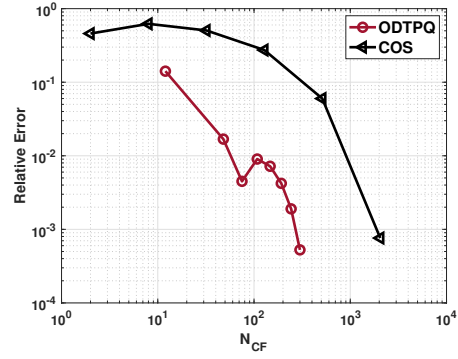}		\caption{2D-basket put}	
	\end{subfigure}	
\caption{Convergence of the  relative error w.r.t. $\text{N}_{CF}$, the number of characteristic function evaluations of both COS and ODTPQ under the VG model. The used model, payoff and damping parameters are given in Tables \ref{tab:cpu_cos_vs_odhaq_2DCOM}, \ref{tab:cpu_cos_vs_odhaq_2DPUT}.	} 	\label{fig:ODTPQ_vs_COS_2D_VG}
\end{figure}

\FloatBarrier
\begin{figure}[h!]	\centering		
	\begin{subfigure}{0.4\textwidth}		
		\includegraphics[width=\linewidth]{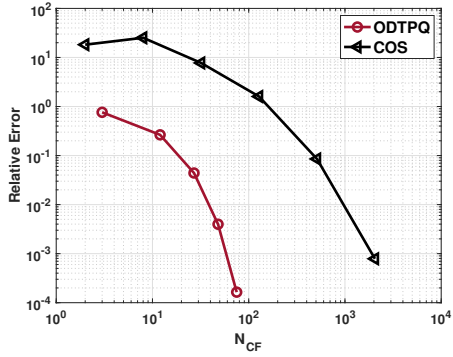}	
		\caption{2D-call on min}
	
		\end{subfigure}
		\begin{subfigure}{0.4\textwidth}		\includegraphics[width=\linewidth]{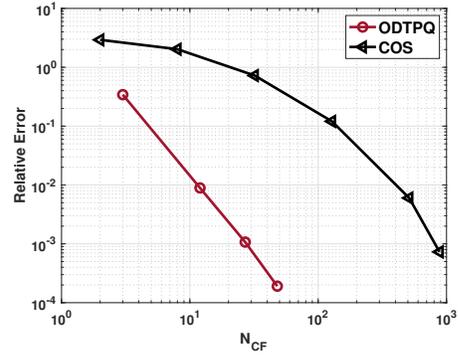}		\caption{2D-basket put}	
		\end{subfigure}	
	\caption{Convergence of the  relative error w.r.t. $\text{N}_{CF}$, the number of characteristic function evaluations of both COS and ODTPQ under the NIG model. The used model, payoff and damping parameters are given in Tables \ref{tab:cpu_cos_vs_odhaq_2DCOM}, \ref{tab:cpu_cos_vs_odhaq_2DPUT}.	} 	\label{fig:ODTPQ_vs_COS_2D_NIG}
	\end{figure}

	
	Table \ref{tab:cpu_cos_vs_odhaq_2DCOM} demonstrates that ODTPQ approach achieves the relative error, $\mathcal{E}_R = 1e^{-03}$, approximately from 13-29.5 times faster than the COS method for the tested 2D-call on min option under GBM, VG and NIG. Moreoever, Table \ref{tab:cpu_cos_vs_odhaq_2DPUT}  shows that  ODTPQ reaches the relative tolerance,  $\mathcal{E}_R = 1e^{-03}$, approximately from 2.5-6.5 times faster than the COS method for 2D-basket put options under GBM, VG and NIG. 
		\begin{table}[h]
		\vspace{-0.4cm}
		\centering
		\begin{tabular}{| p{1.7cm} |  p{4.2cm}  | p{2cm}  | p{2.6cm}  | p{2.6cm} |}
			\hline   \textbf{Model} &  \textbf{Parameters} & $\mathbf{\overline{R}}$ &  \textbf{ODTPQ CPU time} &  \textbf{COS CPU time} \\
			\hline
			\textbf{GBM} &  $\boldsymbol{\sigma} = (0.2,0.8), \boldsymbol{C} = \boldsymbol{I}_d$ &   -(7.18, 1.65)& $ 2.4 e^{-03}$ & $4.2 e^{-02} $ \\
			\hline 
			\textbf{VG} & $\boldsymbol{\sigma} = (0.2,0.8), \newline \boldsymbol{\theta} = (-0.3,-0.1), \nu = 0.5$  & -(7.38, 1.79) & $2.2 e^{-03}  $ & $6.5 e^{-02}  $ \\
			\hline
			\textbf{NIG} & $\boldsymbol{\beta} = (-3,-3), \boldsymbol{\Delta} = \boldsymbol{I}_d, \newline \alpha = 15, \delta = 0.5$ & -(6.88, 6.88) & $ 1.8 e^{-03}  $ & $2.36 e^{-02}   $ \\
			\hline
		\end{tabular}
		\caption{2D-call on min: CPU time in seconds of ODTPQ and COS to achieve relative error $\mathcal{E}_R = 1e^{-03}$ with $\boldsymbol{S}_0 = (100,100), K = 100, r = 0, T = 1$. }
		\label{tab:cpu_cos_vs_odhaq_2DCOM}
	\end{table}
	\FloatBarrier
	
	\begin{table}[h]
	\centering
	\begin{tabular}{| p{1.7cm} |  p{4.2cm}  | p{1.9cm}  |  p{2.6cm} | p{2.6cm}  |}
		\hline   \textbf{Model} &  \textbf{Parameters} & $\mathbf{\overline{R}}$&  \textbf{ODTPQ CPU time} &  \textbf{COS CPU time} \\
		\hline
		\textbf{GBM} &  $\boldsymbol{\sigma} = (0.2,0.8), \boldsymbol{C} = \boldsymbol{I}_d$ & $(3.05, 1.36)$ &$5.4 e^{-03} $ & $2.7 e^{-02}$ \\
		\hline 
		\textbf{VG} & $\boldsymbol{\sigma} = (0.2,0.8), \newline \boldsymbol{\theta} = (-0.3,-0.1), \nu = 0.5$  & (1.81, 0.89) & $9.1 e^{-03} $ & $2.5 e^{-02} $ \\
		\hline
		\textbf{NIG} & $\boldsymbol{\beta} = (-3,-3),  \boldsymbol{\Delta} = \boldsymbol{I}_d, \newline \alpha = 15, \delta = 0.5$ & (4.5, 4.5) &  $3.6 e^{-03}  $  & $2.4 e^{-02} $ \\
		\hline
	\end{tabular}
	\caption{2D-basket put:  CPU time in seconds of ODTPQ and COS to achieve relative error $\mathcal{E}_R = 1e^{-03}$ with $\boldsymbol{S}_0 = (100,100), K = 100, r = 0, T = 1$.}
\label{tab:cpu_cos_vs_odhaq_2DPUT}
\end{table}
\FloatBarrier

\begin{remark}[About the COS method in multiple dimensions]
	For insights on the performance of the COS method in more than two dimensions, we refer to \cite{junike2023multidimensional}, where the numerical experiments indicate that the MC method outperforms the COS method for cash-or-nothing put option with more than 3 underlyings if the target error tolerance is of order $1e^{-03}$. 
\end{remark}

\subsection{Computational comparison of  quadrature  methods with  optimal damping and MC}
\label{num_quad_vs_mc_sec}
 This section compares the MC method and our proposed approach based on  on the best quadrature method in the Fourier space combined with the optimal damping parameters in terms of errors and computational time. The comparison is performed for all option examples  in Tables \ref{tab_mgbm}, \ref{tab_mvg}, and \ref{tab_mnig}. While fixing a sufficiently small relative error tolerance in the price estimates,  we compare the necessary computational time for different methods to meet it  in the following way:
 \begin{enumerate}
 \item Find the least number of quadrature points to reach a pre-defined relative quadrature error.
 \item Estimate, using the CLT formula given in Equation (\ref{CLT_formula}), the required number of MC samples to achieve the same relative error achieved by the quadrature method.
 \item Compare the CPU times of the both methods, including the cost of numerical optimization of \eqref{new_opt} preceding the numerical quadrature for the Fourier approach.  The  MC  CPU time is obtained through an average of $10$ runs.
  \end{enumerate}

 The results presented in Table \ref{cpu_tab} highlight that our approach significantly outperforms the MC method for all the tested options with various models, parameter sets, and dimensions. In particular, for all tested $2$D and $4$D  options,  the proposed approach requires less than $20\%$ (even less than $1\%$ for most cases)  of the MC work to achieve a total relative error below $0.1\%$.  In general, these gains  degrade for the tested $6$D options.  For Example $21$ in Table \ref{tab_mvg}, this approach requires around  $43\%$  of the work of MC,  to achieve a total relative error below $1\%$.  The magnitude of the CPU gain varies depending on different factors, such as the model and payoff parameters affecting the integrand differently in physical space (related to the MC estimator variance), and the integrand regularity in Fourier space (related to the quadrature error for quadrature methods). 
Finally, numerical experiments suggest that the  advantage of employing ASGQ over TP is more pronounced when pricing of options with dimension higher than two, except for the NIG model, where TP quadrature performs exceptionally well even in the 4D case. Nevertheless, empirical results demonstrate that in the 6D case or higher, it is  recommended to use the ASGQ over TP  for all pricing models. 
\begin{table}[h]
	\centering
		\hspace{-0.4cm}
		\small
	\begin{tabular}{|p{3.6cm}| p{1.2cm}|  p{1.4cm} | p{1.2cm} | p{1.6cm} | 
	p{1.cm} |p{1.2cm} |
	p{3cm} |}
	
	\hline  \textbf{Example} & \textbf{Best} \textbf{Quad}& $\mathcal{E}_R$  & \textbf{MC CPU Time} & \textbf{$M$ (MC samples)}
	& \textbf{Quad CPU Time}  & \textbf{$N$ (Quad. Points) }
	& \textbf{CPU Time Ratio (Quad/MC) in $\%$}  \\
		\hline
Example 1 in Table \ref{tab_mgbm} & ASGQ & $ 7 e^{-04}$  &  $7.36$ & $ 1.2 \times 10^{7}$   
	& $0.63$ & $33$
	& $8.5 \%$\\
	 	\hline	
Example 2 in Table \ref{tab_mgbm} & ASGQ & $ 3.7 e^{-04}$  &  $20.7$ & $3.3 \times 10^{7}$   
	& 0.65 & 67
	& $3.14 \%$\\
	 	\hline
Example 13 in Table \ref{tab_mvg}& TP & $ 2.9 e^{-04}$  &  $44$ & $ 8.8 \times 10^{7}$   
	& 0.25 & 64
	& $0.57 \%$\\
	 	\hline	
Example 14 in Table \ref{tab_mvg}& TP & $ 1.8 e^{-04}$  &  $70.9$ & $1.4\times 10^{8}$   
	& 0.23 & 64
	& $0.32 \%$\\
	 	\hline
Example 25 in Table \ref{tab_mnig} & TP & $ 2.9 e^{-04}$  &  $75.3$ & $1.1 \times 10^{8}$   
	& 0.2 & 36
	& $0.26 \%$\\
	 	\hline
Example 26 in Table \ref{tab_mnig} & TP & $ 5.86 e^{-04}$  &  $17.2$ & $2.6\times 10^{7}$   
	&$ 0.2$ & 25
	& $1.16 \%$\\
	 	\hline
Example 3 in Table \ref{tab_mgbm} & ASGQ & $ 7 e^{-04}$  &  $47.3$ & $ 7.6\times 10^{7}$   
	& 0.6 & 37
	& $1.26 \%$\\
	 	\hline	
Example 4 in Table \ref{tab_mgbm}  & ASGQ & $ 5.8 e^{-04}$  &  $102$ & $ 1.4\times 10^{8}$   
	& 0.63 & 37
	& $0.62 \%$\\
	 	\hline
Example 15 in Table \ref{tab_mvg} & ASGQ & $ 8.26 e^{-04}$  &  $19.5$ & $ 4.1\times10^{7}$   
	& 0.54 & 25
	& $2.77 \%$\\
	 	\hline	
Example 16 in Table \ref{tab_mvg} & TP & $ 5.37 e^{-04}$  &  $87.1$ & $ 1.4\times 10^{8}$   
	& 0.16 & 49
	& $0.18 \%$\\
		\hline
Example 26 in Table \ref{tab_mnig} & TP & $ 6.7 e^{-04}$  &  $35.8$ & $5.3\times 10^{7}$   
	& 0.22 & 100
	& $0.61 \%$\\
	 	\hline
Example 27 in Table \ref{tab_mnig}  & TP & $ 6.46 e^{-04}$  &  $42.2$ & $6.5\times 10^{7}$   
	& 0.22 & 64
	& $0.52 \%$\\
\Xhline{7\arrayrulewidth}
Example 5 in Table \ref{tab_mgbm} & ASGQ & $ 2.46e^{-04}$  &  $207$ & $ 10^8$   
	& 7.8 & 5257
	& $3.77\%$\\
	 	\hline
Example 6 in Table \ref{tab_mgbm}&ASGQ & $8.12 e^{-04}$ &  $14.5$ & $ 7.9 \times 10^6$ 
	& 2.73 & 1433
	& $18.83\%$ \\
	 	\hline
Example 17 in Table \ref{tab_mvg}& ASGQ & ${2.58 e^{-04}}$ &   $106.3$ & $1.23 \times 10^8$ 
	& 5  & 3013
	& $4.7\%$\\
	 	\hline
Example 18 in Table \ref{tab_mvg}& ASGQ & $3.58 e^{-04}$  &  $38.7$  & $4.5 \times10^7$  
	& 2 & 1109 
	& $5.17 \%$ \\
	 	\hline
Example 27 in Table \ref{tab_mnig}& TP &   $4.57 e^{-04}$& $50.2$ & $4.7\times10^7$ 
& 0.5 & 256 
	 & $1 \%$\\
	 	\hline
Example 28 in Table \ref{tab_mnig}&	 TP & ${4.1 e^{-04}}$&   $49.4$ & $4.8 \times 10^7$ 
	& 0.52 & 256
	& $1\%$ \\
	 	\hline
Example 7 in Table \ref{tab_mgbm}& ASGQ & ${ 5.7e^{-04}}$  &  $1147$ & $7 \times10^8$  
	& 1 & 435
	& $0.09 \%$\\
	 	\hline
Example 8 in Table \ref{tab_mgbm}  &ASGQ & ${5.5 e^{-04}}$ &  ${1580}$ & $ 9.6 \times 10^8$  	& 0.95 &654 & $0.06\%$ \\
	 	\hline
Example 19 in Table \ref{tab_mvg} & ASGQ & ${5.9 e^{-04}}$ &   $220$ & $3 \times10^{8}$ 
	& 1.25  & 567
	& $0.57\%$\\
	 	\hline
Example 20 in Table \ref{tab_mvg}& ASGQ & ${8.9 e^{-04}}$  &  249  & $3.3 \times 10^8$  
	& 1.4 & 862
	& $0.56 \%$ \\
	 	\hline
Example 29 in Table \ref{tab_mnig} & TP &   ${7.2 e^{-04}}$& 193.5 & $2\times 10^8$
	 & 8.7 & 20736
	 & $4.5 \%$\\
	 	\hline
Example 30 in Table \ref{tab_mnig} &   TP & ${4.2e^{-04}}$&   716 & $7.8 \times 10^8$ 
	& 0.8 & 2401
	& $0.11\%$ \\
	 	\hline
\Xhline{7\arrayrulewidth}
Example 9 in Table \ref{tab_mgbm} & ASGQ & $2.9 e^{-02}$ &  18.53 & $5.5 \times 10^6$ 
	& 2  & 318
	& $11\%$\\
	 	\hline
Example 10 in Table \ref{tab_mgbm}  & ASGQ & $3.3e^{-03}$ &  548 & $1.5 \times 10^8 $ 
	& 2.1  & 340
	& $0.38\%$ \\
	 	\hline
Example 21 in Table \ref{tab_mvg} &	 ASGQ & ${7.8e^{-03}}$ &  5.4 & $ 4.7 \times 10^6 $  
	& 2.3 & 453
 & $42.6\% $\\
	 	\hline
Example 22 in Table \ref{tab_mvg}  & ASGQ & $5.4e^{-03}$ &  31.5 & $2.5 \times 10^7$ 
	& 3.5 & 566
	& $11\% $ \\
	 	\hline
Example 31 in Table \ref{tab_mnig} & ASGQ & ${1.47e^{-02}}$ &  $14.2$ & $  10^7$  
	& 3.4 & 616
 & $24\% $\\
	 	\hline
Example 32 in Table \ref{tab_mnig} & TP & $3.75e^{-02}$ &  33.5 & $2.5 \times 10^7$ 
	& 11.7 & 4096
	& $35\% $  \\
	 	\hline
Example 11 in Table \ref{tab_mgbm}  & ASGQ & ${ 1.4e^{-03}}$  &  $2635$ & $6.9 \times10^8$  
	& 6 & 3070
	& $0.23\%$\\
	 	\hline
Example 12 in Table \ref{tab_mgbm} & ASGQ & $1.7e^{-03}$ &  2110 & $ 5.3 \times 10^8 $
	& 4.5 & 1642
 & $0.21\% $ \\
	 	\hline
Example 23 in Table \ref{tab_mvg} & ASGQ & $2 e^{-03}$ &  85 & $6.8 \times 10^7$ 
	& 19.5 & 7401
	& $23\%$\\
	 	\hline
Example 24 in Table \ref{tab_mvg} &	 ASGQ & ${2.6 e^{-03}}$  &  360  & $ 2.8 \times 10^8 $ 
	& 4.6 & 1671
	& $1.28 \%$ \\
	 	\hline
Example 33 in Table \ref{tab_mnig} &  ASGQ & ${5.7e^{-02}}$ &  $85.5$ & $6.3 \times 10^7 $  
	& 1 & 105
 & $1.17\% $\\
	 	\hline
Example 34 in Table \ref{tab_mnig} & ASGQ & $3.79e^{-02}$ &  108 & $7.5 \times 10^7 $ 
	& 1.4 & 340
	& $1.3\% $  \\
	\hline
	\end{tabular}
	\caption{Errors, CPU times in seconds, and number of quadrature points comparing the Fourier approach combined with the optimal damping rule and the best quadrature (Quad) method with the  Gauss--Laguerre rule against the MC method for the European basket and rainbow options under the multivariate GBM, VG, and NIG pricing dynamics for various  dimensions.   Tables \ref{tab_mgbm}, \ref{tab_mvg}, \ref{tab_mnig}  present the selected parameter sets for each pricing model,  the reference values with their corresponding statistical errors, and the optimal damping parameters.} 
	\label{cpu_tab}
\end{table}

\textbf{Acknowledgments} C. Bayer gratefully acknowledges support from the German Research Foundation (DFG) via the Cluster of Excellence MATH+ (Project AA4-2). This publication is based on work supported by the King Abdullah University of Science and Technology (KAUST) Office of Sponsored Research (OSR) under Award No. OSR-2019-CRG8-4033 and the Alexander von Humboldt Foundation. Antonis Papapantoleon gratefully acknowledges the financial support from the Hellenic Foundation for Research and Innovation (Grant No. HFRI-FM17-2152).

 \textbf{Declarations of Interest} The authors report no conflicts of interest. The authors alone are responsible for the content and writing of the paper.


\bibliographystyle{plain}
\bibliography{bibliography} 
\appendix
	\section{Extension of the  Error Bound \eqref{eq:error_bound_estimate} to the Multivariate Case}
	\label{quad_error_bound}
	The extension of the error bound to the multivariate case can be done by recursively applying the following reasoning. For illustration and notation convenience, we consider the 2D case. Let $[a,b] \subset \mathbb{R}$, and $\mathcal{C}_1, \mathcal{C}_2$ be closed contours of integration as defined in Theorem  \eqref{thm:remainder_integration_estimates}. We define the quantity of interest by
	\begin{small}
	\begin{equation}
		\label{integrand_value}
		I[f] : = \int_{a}^b \int_{a}^b \lambda(x_1) \lambda(x_2) f(x_1,x_2) dx_1 dx_2,
	\end{equation}
	\end{small}
	where $\lambda(\cdot)$ is the weight function corresponding to the Gaussian quadrature rule. The TP quadrature  of \eqref{integrand_value} with $N$ points in each dimension  is defined as
		\begin{small}
	\begin{equation}
		Q_{N}[f] : = \sum_{k_1 = 1 }^N \sum_{k_2 = 1}^N w_{k_1} w_{k_2} f(x_{k_1}, x_{k_2}),
	\end{equation}
		\end{small}
	and  the quadrature remainder is thus given by 
			\begin{small}
	\begin{equation}
		\mathcal{E}_{Q_{N}}[f] : = \left|  I[f] - Q_N[f]) \right|.
	\end{equation}
		\end{small}
	In the first step, for fixed $x_2 \in [a,b]$, applying  Theorem \ref{thm:remainder_integration_estimates} on $f(x_1,x_2)$ implies  
			\begin{small}
	\begin{equation}
		\int_{a}^b \lambda(x_1) f(x_1, x_2) dx_1 = \sum_{k_1 = 1}^N w_{k_1} f(x_{k_1}, x_2) + \frac{1}{2\pi i} \oint_{\mathcal{C}_1} K_N (z_1) f(z_1, x_2) dz_1
	\end{equation}
			\end{small}
	Plugging the above expression in \eqref{integrand_value}, we  obtain
		\begin{small}
	\begin{equation}
		\begin{aligned}
			I[f] &= \int_{a}^b  \lambda(x_2) \left[  \sum_{k_1 = 1}^N w_{k_1} f(x_{k_1}, x_2) + \frac{1}{2\pi i} \oint_{\mathcal{C}_1} K_N (z_1) f(z_1, x_2) dz_1 \right] dx_2 \\
			&= \sum_{k_1 = 1}^N w_{k_1} \int_{a}^b  \lambda(x_2) f(x_{k_1}, x_2) dx_2 + \frac{1}{2\pi i} \oint_{\mathcal{C}_1} K_N (z_1) \left[ \int_{a}^b  \lambda(x_2) f(z_1, x_2) dx_2 \right] dz_1.
		\end{aligned}
	\end{equation}
			\end{small}
	In a second stage, applying  Theorem \ref{thm:remainder_integration_estimates} for fixed $x_{k_1} \in [a,b]$ on  $f(x_{k_1},x_2)$, and for fixed $z_1 \in \mathcal{C}_1$ on $f(z_1,x_2)$, implies
			\begin{small}
	\begin{equation}
		\begin{aligned}
			I[f] &= \sum_{k_1 = 1 }^N \sum_{k_2 = 1}^N w_{k_1} w_{k_2} f(x_{k_1}, x_{k_2})	 +  \sum_{k_1 = 1}^N w_{k_1} \frac{1}{2 \pi i} \oint_{\mathcal{C}_2} K_N(z_2) f(x_{k_1},z_2) dz_2   \\ 
			&+ \sum_{k_2 = 1}^N w_{k_2} \frac{1}{2\pi i} \oint_{\mathcal{C}_1} K_N (z_1) f(z_1, x_{k_2}) dz_1 + \left(\frac{1}{2 \pi i}\right)^2	\oint_{\mathcal{C}_1} \oint_{\mathcal{C}_2} K_N (z_1) K_N(z_2) f(z_1, z_2) dz_1 dz_2
		\end{aligned}
	\end{equation}
				\end{small}
	Consequently, the quadrature error bound is given as
			\begin{small}
	\begin{equation}
		\begin{aligned}
			|	\mathcal{E}_{Q_N}[f] | &= | \sum_{k_1 = 1}^N w_{k_1} \frac{1}{2 \pi i} \oint_{\mathcal{C}_2} K_N(z_2) f(x_{k_1},z_2) dz_2  +  \sum_{k_2 = 1}^N w_{k_2} \frac{1}{2\pi i} \oint_{\mathcal{C}_1} K_N (z_1) f(z_1, x_{k_2}) dz_1  \\ &+ \left(\frac{1}{2 \pi i}\right)^2	\oint_{\mathcal{C}_1} \oint_{\mathcal{C}_2} K_N (z_1) K_N(z_2) f(z_1, z_2) dz_1 dz_2  | \\
			&\leq \sup_{ x_1 \in {\mathcal{C}_1}, x_2 \in {\mathcal{C}_2}} |f(x_1, x_2) | \left[ \sum_{k_1 = 1}^N w_{k_1} \frac{1}{2 \pi i} \oint_{\mathcal{C}_2} K_N(z_2) dz_2 + 
			\sum_{k_2 = 1}^N w_{k_2} \frac{1}{2 \pi i} \oint_{\mathcal{C}_1} K_N(z_1) dz_1 \right] \\ 
			&+  \sup_{ x_1 \in {\mathcal{C}_1}, x_2 \in {\mathcal{C}_2}} |f(x_1, x_2) |  \left[  \oint_{\mathcal{C}_1} \oint_{\mathcal{C}_2} K_N (z_1) K_N(z_2)  dz_1 dz_2 \right]
		\end{aligned}
	\end{equation}
				\end{small}
	Which  shows that the error bound depends on  $\sup_{ x_1 \in {\mathcal{C}_1}, x_2 \in {\mathcal{C}_2}} | f(x_1, x_2)  | $, similarly to \eqref{eq:error_bound_estimate}.

\section{On the Choice of the Quadrature Rule}\label{num_lag_herm_sec}
 In this section,  through numerical examples on vanilla put options, we show that the Gauss--Laguerre quadrature rule significantly outperforms the Gauss--Hermite quadrature rule  for the numerical evaluation of the inverse Fourier integrals; hence, we adopt the Gauss--Laguerre measure for the rest of the work.  Figures  \ref{1d_put_gbm_laguerre_vs_hermite}, \ref{1d_put_vg_laguerre_vs_hermite}, and \ref{1d_put_nig_laguerre_vs_hermite} reveal that  the Gauss--Laguerre quadrature rule significantly outcompetes the  Gauss--Hermite quadrature  independently of the values of the damping parameters in the strip of regularity for the tested models: GBM, VG, and NIG.  For instance,  Figure  \ref{1d_put_gbm_laguerre_vs_hermite} illustrates that, when $\text{R}=4$ is used, the Gauss--Laguerre quadrature rule reaches approximately the relative quadrature $ \mathcal{E}_{R} =0.01 \%$ using $12 \%$ of the work required by the Gauss--Hermite quadrature to attain the same accuracy.  These observations were consistent for all tested parameter constellations and dimensions,  and independent of the choice of the quadrature  methods (TP, ASGQ, or SM).  

\begin{figure}[h!]
	\centering 
	\begin{subfigure}{0.4\textwidth}
		\includegraphics[width=\linewidth]{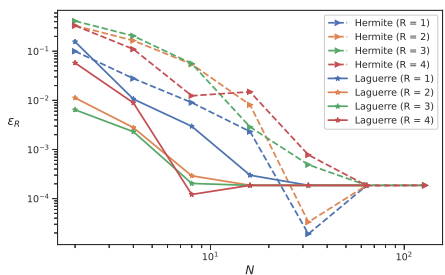}
		\caption{$\sigma = 0.4$}
		\label{1d_put_gbm_laguerre_vs_hermite}
	\end{subfigure} 
	\begin{subfigure}{0.4\textwidth}
		\includegraphics[width=\linewidth]{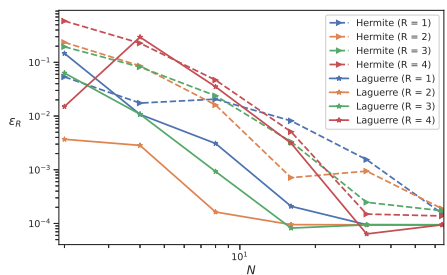}
		\caption{$\sigma = 0.4, \theta = -0.3,  \nu= 0.257$}
		\label{1d_put_vg_laguerre_vs_hermite}
	\end{subfigure}
	\begin{subfigure}{0.4\textwidth}
		\includegraphics[width=\linewidth]{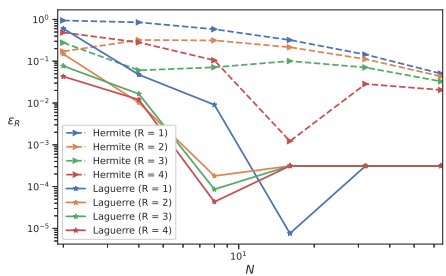}
		\caption{$\alpha = 10, \beta= -3, \delta= 0.2$}
		\label{1d_put_nig_laguerre_vs_hermite}
	\end{subfigure}
	\caption{Relative  quadrature error, $\mathcal{E}_{R}$, convergence w.r.t.~$N$ of Gauss--Laguerre and Gauss--Hermite quadrature rules for a European put option with $S_0=100$, $K=100$, $r=0$ ,  and $T=1$ under (a) GBM,  (b) VG, and (c) NIG.}
	\label{fig: convergence w.r.t  N of Gauss-Laguerre and Gauss-Hermite quadrature rules for European put option under (a) GBM  (b) VG (c) NIG.}
\end{figure}

\end{document}